\newtheorem{theorem}{Theorem}
\newtheorem{lemma}{Lemma}
\newtheorem{assumption}{Assumption}
\newtheorem{claim}{Claim}
\pgfplotsset{compat=1.18}
\def\BibTeX{{\rm B\kern-.05em{\sc i\kern-.025em b}\kern-.08em
    T\kern-.1667em\lower.7ex\hbox{E}\kern-.125emX}}
\begin{document}

\title{Sequential Spectral Clustering of Data Sequences\\
\thanks{This work is supported in part by Qualcomm University Research Grant.}
}

\author{\IEEEauthorblockN{G Dhinesh Chandran}
\IEEEauthorblockA{\textit{Department of Electrical Engineering} \\
\textit{Indian Institute of Technology Madras}\\
Chennai, India \\
ee22d200@smail.iitm.ac.in}
\and
\IEEEauthorblockN{Kota Srinivas Reddy}
\IEEEauthorblockA{\textit{Department of Artificial Intelligence} \\
\textit{Indian Institute of Technology Kharagpur}\\
Kharagpur, India \\
ksreddy@ai.iitkgp.ac.in}
\and
\IEEEauthorblockN{Srikrishna Bhashyam}
\IEEEauthorblockA{\textit{Department of Electrical Engineering} \\
\textit{Indian Institute of Technology Madras}\\
Chennai, India \\
skrishna@ee.iitm.ac.in}
}

\maketitle

\begin{abstract}
We study the problem of non-parametric clustering of data sequences, where each data sequence comprises independent and identically distributed (i.i.d.) samples generated from an unknown distribution. The true clusters are the clusters obtained using the Spectral clustering algorithm (SPEC) on the pairwise distance between the true distributions corresponding to the data sequences. Since the true distributions are unknown, the objective is to estimate the clusters by observing the minimum number of samples from the data sequences, given a specified error probability. 
To solve this problem, we propose the Sequential Spectral clustering algorithm (SEQ-SPEC), and show that it stops in finite time almost surely and is exponentially consistent. 
We also propose a computationally more efficient algorithm called the Incremental Approximate Sequential Spectral clustering algorithm (IA-SEQ-SPEC). 
Through simulations, we show that both SEQ-SPEC and IA-SEQ-SPEC perform better than the fixed sample size SPEC, the Sequential $K$-Medoids clustering algorithm (SEQ-KMED), and the Sequential Single Linkage clustering algorithm (SEQ-SLINK). 
In addition, we propose memory-efficient versions, SEQ-SPEC-B and IA-SEQ-SPEC-B. Unlike other related sequential clustering algorithms, which require storing all past samples, these algorithms require storing only the most recent $B$ samples.
Both the computationally efficient and memory-efficient versions of SEQ-SPEC perform comparably to SEQ-SPEC in simulations.
\end{abstract}

\begin{IEEEkeywords}
Spectral Clustering, Data Sequences, Sequential Method, Consistency, Efficient Clustering.
\end{IEEEkeywords}

\section{Introduction} \label{sec: intro}
Clustering, a process of dividing a set of items into groups based on the similarity between the items, has numerous applications  \cite{JAIN2010651,maccuish2010clustering,malik2001contour,bach2006learning,yang2024optimal}. Clustering algorithms can be linkage-based, such as the Single Linkage clustering algorithm (SLINK) \cite{rohlf198212}, or cost minimization problems, like $K$-Means \cite{ahmed2020k} and $K$-Medoids \cite{kaur2014k}, or the Spectral clustering algorithm (SPEC) \cite{ng2001spectral}. The spectral clustering procedure arises as a natural solution to the graph partitioning problem, where the objective is to minimize the normalized cut index. Spectral clustering has advantages over other methods in many applications, including computer vision \cite{malik2001contour} and speech separation \cite{bach2006learning}.

We focus on the class of clustering problems in which we group the data sequences drawn from unknown distributions into clusters based on the distances between the underlying distributions. Data sequence clustering can be studied either under a Fixed Sample Size (FSS) setting or in a Sequential (SEQ) setting. In the FSS setting, the number of samples in each data sequence is fixed, and the sequences are available as a batch. A distance metric is typically used to measure the dissimilarity between the distributions of two data sequences, after which traditional data point clustering algorithms are applied to group the data sequences. Clustering algorithms such as $K$-Medoids \cite{wang2019k} and SLINK \cite{wang2020exponentially} have been studied in this setting. In the SEQ setting, samples are observed sequentially from each of the data sequences. The sequential clustering algorithm is expected to stop at some time and output the correct clusters. The stopping rule typically minimizes the number of samples for a given error probability requirement. At any given time in the SEQ setting, the problem can be viewed as an instance of the FSS setting; thus, traditional data point clustering methods can be applied at each time, supplemented with an appropriate stopping rule. Sequential variants of  $K$-Medoids \cite{sreenivasan2023nonparametric} and SLINK \cite{singh2025exponentially} have been studied. 
Another class of clustering problems involves observing samples from a single data stream, where the objective is to cluster the observed data points. This problem, referred to as Online Clustering \cite{liberty2016algorithm}, differs from our sequential clustering setting.

In this work, we first propose SEQ-SPEC, the sequential version of the Spectral clustering algorithm (SPEC). Our analysis shows that SEQ-SPEC stops in finite time almost surely and is exponentially consistent, i.e., the probability of error decays exponentially with the expected stopping time of the algorithm. Building upon efficient spectral clustering methods for evolving graphs \cite{dhanjal2014efficient, deng2024efficient}, we also propose a more computationally efficient SEQ-SPEC algorithm, called IA-SEQ-SPEC, tailored to our framework. IA-SEQ-SPEC takes advantage of the computations in the previous time instants to reduce complexity.
Simulation results show that both SEQ-SPEC and IA-SEQ-SPEC perform better than the Sequential SLINK algorithm (SEQ-SLINK) \cite{singh2025exponentially}, the Sequential $K$-Medoids algorithm (SEQ-KMED) \cite{sreenivasan2023nonparametric}, and FSS Spectral clustering (SPEC). 
In addition, we propose memory-efficient algorithms, SEQ-SPEC-B and IA-SEQ-SPEC-B, which use a memory-efficient estimator to estimate the pairwise distance between two distributions, as introduced in \cite{gretton2012kernel}, where the kernel two-sample test is studied.
Unlike a classical estimator, which requires storing all samples to update the estimate upon observing a new sample, the memory-efficient estimator in \cite{zaremba2013b} requires storing only the most recent $B$ samples.
Through simulations, we show that all the proposed efficient variants of SEQ-SPEC perform comparably to SEQ-SPEC.

\section{Problem Setup and Preliminaries} \label{sec: problem}

    We consider a collection of $M$ data sequences $\big\{X^{(i)}, i \in [M]=\{1, \ldots, M\}\big\}$, where the $i^{th}$ data sequence is a sequence of i.i.d. multidimensional samples generated from the probability distribution $p_i$. We call the collection of these probability distributions $P = \left\{ p_i, i \in [M] \right\}$ as a problem instance. These $M$ data sequences form $K$ clusters, and it is represented by the set $D = \left\{D_k, k \in [K]\right\}$, which we call a configuration, where $D_k$ is the set of indices of the data sequences that belong to the $k^{th}$ cluster. 
    
    We use Maximum Mean Discrepancy (MMD)\footnote{The analysis can be easily extended to Kolmogorov-Smirnov distance (KSD) by appropriately using the KSD concentration bound in \cite{gretton2012kernel}.} to measure the distance between two distributions.
   Let $d_{i,j}$ denote the MMD distance between the distributions $p_i$ and $p_j$, defined as $d_{i,j} \coloneqq \sup_{f \in \mathcal{F}} \left( \mathbb{E}_{p_i}[f(X)] - \mathbb{E}_{p_j}[f(Y)] \right)$,
where $X \sim p_i$ and $Y \sim p_j$.
 As in \cite{gretton2012kernel}, $\mathcal{F}$ is the unit ball in the reproducing kernel Hilbert space. We use $A$ to denote the affinity matrix corresponding to problem instance $P$, and the $(i, j)^{th}$ entry of $A$ is $A_{ij}=\exp\left(-d_{ij}^2/2\sigma_a^2\right)\mathds{1}\{i \neq j\}$, for some $\sigma_a>0$.
   We use $D$ to denote the degree matrix, which is a diagonal matrix, whose $i^{th}$ diagonal entry is defined as $D_{ii} \coloneqq \sum_{j=1}^MA_{ij}$. We use $L$ to denote the Lagrange matrix and is defined as $L = D^{-\frac{1}{2}}AD^{-\frac{1}{2}}$. Let us define the matrix $Z \in \mathbb{R}^{M\times K}$, whose columns are the $K$ orthonormal eigen vectors corresponding to the top $K$ highest eigen values of the Lagrange matrix $L$. We define the matrix $Y$, whose $i^{th}$ row $Y_i$ is the normalized $i^{th}$ row of $Z$, that is, $Y_i = \frac{Z_i}{\|Z_i\|}$, where $Z_i$ is the $i^{th}$ row of the matrix $Z$. We call $Y_i$ as the spectral point corresponding to the $i^{th}$ data sequence \cite{ng2001spectral}. 

     Given any problem instance $P$, the configuration $D$ is not arbitrary; instead, we can find its corresponding configuration $D$ by using the Spectral clustering algorithm (SPEC) presented in \cite{ng2001spectral} on the affinity matrix $A$ defined above. SPEC applies the K-Means algorithm to the spectral points $\{Y_i: i \in [M]\}$ to determine the configuration.
    In our problem, the problem instance $P$ is unknown; instead, we observe a sequence of multidimensional samples from each of the $M$ data sequences. 
    Our objective is to design a sequential algorithm that groups these $M$ data sequences into $K$ clusters using as few samples as possible for a given error probability. 

\section{Sequential Spectral clustering Algorithm (SEQ-SPEC)} \label{sec: SEQ-SPEC}
The proposed Sequential Spectral clustering algorithm (SEQ-SPEC) is presented in Algorithm \ref{Algo:SEQ-SPEC}. We use $X^{(i)}_t$ to denote the multidimensional sample obtained at time $t$ from the $i^{th}$ data sequence. We use the notations introduced in Section \ref{sec: problem}, but with a hat ($\hat{\cdot}$) and the index $(t)$, to indicate that they are the estimated quantities using $t$ samples, i.e., we use $\hat{A}(t)$, $\hat{Y}_i(t)$, $\hat{D}_k(t)$, etc. Let $k(x, y)$ be the kernel function\footnote{We assume that the kernel function is bounded, i.e., $0\leq k(x, y)\leq B$. In our simulations, we use a Gaussian kernel, i.e., $k(x, y)\coloneqq \exp\left( -\|x-y\|^2/2\sigma_g^2 \right)$ for some $\sigma_g>0$.}. The biased MMD estimate was proposed in \cite{gretton2012kernel}, and its efficient recursive update version in \cite{sreenivasan2023nonparametric}, for $t\geq 0$, with $\hat{d}_{ij}(0) = 0$, is:  
\begin{equation} \label{eq: mmdupdate}
\begin{aligned}
    &\hat{d}_{ij}(t+1) = \frac{1}{t+1}\left\{ \sum_{l=1}^{t+1} h\left( X_l^{(i)}, X_{t+1}^{(i)}, X_l^{(j)}, X_{t+1}^{(j)} \right)  \right.\\ 
    &\left.+\sum_{l=1}^{t} h\left( X_{t+1}^{(i)}, X_l^{(i)}, X_{t+1}^{(j)}, X_l^{(j)} \right) + t^2 \hat{d}_{ij}(t) \right\}^{\frac{1}{2}},
\end{aligned}
\end{equation}
 where $h(x_1, x_2, y_1, y_2) = k(x_1, x_2) + k(y_1, y_2) - 2k(x_1, y_2)$.

\begin{algorithm} 
\caption{SEQ-SPEC}\label{Algo:SEQ-SPEC}
\begin{algorithmic}[1]
\STATE \textbf{Input:}  $K$
\STATE \textbf{Initialize:} $t=0$, $\hat{d}_{ij}(0) = 0, i<j, \forall i, j \in [M]$

\REPEAT
\STATE Get a sample from each of $M$ data sequences. Set $t\leftarrow t+1$ and update all the pairwise MMD estimates, $\left\{ \hat{d}_{ij}(t), i<j, \forall i, j \in [M] \right\}$ using equation \eqref{eq: mmdupdate}.
\STATE Compute $\hat{A}(t)$, with $\hat{A}_{ij} = e^{-\hat{d}_{ij}^2/2\sigma_a^2} \mathds{1}\{i \neq j\}$.
\STATE Use SPEC to get the spectral points $\{\hat{Y}_i(t), \forall i \in [M]\}$ and the clusters $\{\hat{D}_k(t),\forall k \in [K]\}$
\STATE Compute $\displaystyle \Gamma_t=\min_{k\neq l} \min_{\substack{i \in \hat{D}_k(t), j \in \hat{D}_l(t)}}  \left\| \hat{Y}_i(t) - \hat{Y}_j(t) \right\|$.
\UNTIL{$\Gamma_t \geq \arcsin\left( \frac{C}{\sqrt{t}} \right)$} 
\STATE \textbf{Output:} $\{\hat{D}_k(t), \forall k \in [K]\}$
\end{algorithmic}
\end{algorithm}
The inputs to SEQ-SPEC is the number of clusters $K$. 
At each time step $t$, get a sample from each of the data sequences and update the MMD estimates of all pairwise distances (Line 4 of Algorithm \ref{Algo:SEQ-SPEC}). Compute the Affinity matrix and use SPEC to find the spectral points and the estimated clusters (Lines 5 and 6). Compute the estimated minimum inter-cluster distance $\Gamma_t$ and stop sampling the data sequences if $\Gamma_t$ exceeds the threshold\footnote{Sequential algorithms based on $K$-Medoids \cite{sreenivasan2023nonparametric} and SLINK \cite{singh2025exponentially} use the stopping threshold $\frac{C}{\sqrt{t}}$. Since our method computes the inter-cluster distance in the spectral domain, and due to the influence of Davis--Kahan's symmetric $\sin\theta$ theorem \cite{davis1970rotation} in our analysis, we employ the threshold $\arcsin\!\left(C/\sqrt{t}\right)$. However, in simulations, even using $C/\sqrt{t}$ yields similar performance.} $\arcsin\left(C/\sqrt{t}\right)$, otherwise proceed to the next time step $t+1$ (Lines 7 and 8). Once the algorithm stops, it outputs the latest estimated clusters (Line 9). 

\section{Performance Analysis of SEQ-SPEC} \label{sec:perform}
First, we show that FSS-SPEC is exponentially consistent (Theorem \ref{theorem: expconsfss}). Then, we show that SEQ-SPEC stops in finite time almost surely (Theorem \ref{theorem: finitestoptime}) and SEQ-SPEC is universally exponentially consistent (Theorem \ref{theorem: expconsSEQ}). 
The proofs are presented in the Technical Appendix.

We start with the Assumptions A1.1, A2, A3, A4 on the underlying cluster separation in \cite{ng2001spectral}. 
Define $d_i^{(k)} \coloneqq \sum_{j \in D_k} A_{ij}$. The assumptions are:
\begin{assumption} \label{ass: 1}
    Define $h(D_k)$ for all $k \in [K]$ as $h(D_k) \coloneqq \min_{\mathcal{I}\subset D_k} \frac{\sum_{i \in \mathcal{I}, j \in D_k \setminus\mathcal{I}}A_{i, j}}{\min\left\{ \sum_{i \in \mathcal{I}}d_i^{(k)}, \sum_{j \in D_k \setminus \mathcal{I}}d_j^{(k)}  \right\}}$.
    There exist $\delta>0$ such that $\frac{\left(h(D_k)\right)^2}{2} \geq \delta$ for all $k \in [K]$.
\end{assumption}

\begin{assumption} \label{ass: 2}
    There exists a $\epsilon_1>0$ such that for all $k_1, k_2 \in [K], k_1\neq k_2$, we have $\sum_{i \in D_{k_1}} \sum_{j \in D_{k_2}} \frac{A_{i, j}^{2}}{d_i^{(k_1)}d_j^{(k_2)}} \leq \epsilon_1$.
\end{assumption}

\begin{assumption} \label{ass: 3}
    There exists a $\epsilon_2>0$ such that $\forall k \in [K]$, $i\in D_k$, we have $\left[ \sum_{j \notin D_k} \frac{A_{i, j}}{d_i^{(k)}} \right] \left[ \sum_{i_1, j_1 \in D_k}\frac{A_{i_1, i_2}^{2}}{d_{i_1}^{(k)} d_{j_1}^{(k)}} \right]^{\frac{1}{2}} \leq \epsilon_2$.
\end{assumption}

\begin{assumption} \label{ass: 4}
    There exists a $C>0$ such that for all $k \in D_k$, $i \in D_k$, we have $\frac{1}{M_k}\frac{\sum_{i_1 \in D_k}d_{i_1}^{(k)}}{d_i^{(k)}} \leq C$, 
    where $M_k$ is the number of data sequences in the $k^{th}$ cluster.
\end{assumption}

The intuition behind these assumptions is discussed in detail in \cite{ng2001spectral}. In short, Assumption \ref{ass: 1} ensures that each cluster is hard to split into two dissimilar partitions. Assumptions \ref{ass: 2} and \ref{ass: 3} ensure that the nodes within clusters are more similar relative to nodes between clusters. Assumption \ref{ass: 4} ensures that no node in a cluster is too much less connected relative to other nodes in the same cluster.


Let $E_t$ denotes the error event at time $t$, i.e., $E_t \coloneqq \left\{ \exists k \in [K], \hat{D}_k(t) \neq D_k\right\}$. 
Define the event $F_{\epsilon,t}\coloneqq \left\{ \forall i, j \in [M], i<j,  \left|\hat{d}_{ij}(t)-d_{ij}\right|<\epsilon\right\}$, where $F_{\epsilon,t}$ denotes the event that all pairwise MMD estimates at time $t$ deviate from their true MMD values by at most $\epsilon$.

\begin{theorem} \label{theorem: expconsfss}
    There exists a $t_0 \in \mathbb{N}$, such that for all $t>t_0$, the error probability of the FSS SPEC with $t$ samples from each of the data sequences is upper bounded as $\mathbb{P}\left[ E_t \right] \leq M^2\exp\left( -\alpha_0 t \right)$ for some $\alpha_0>0$.
\end{theorem}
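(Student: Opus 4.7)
The plan is to control the error probability by tracing the perturbation of $\hat{A}(t)$ induced by the MMD estimation error all the way through the spectral pipeline to the $K$-Means step. First I would invoke the concentration bound for the biased MMD estimator from \cite{gretton2012kernel}, which, since the kernel is bounded by $B$, yields $\mathbb{P}\bigl[|\hat{d}_{ij}(t) - d_{ij}| > \epsilon\bigr] \leq 2\exp(-c\, t\, \epsilon^2)$ for some constant $c>0$. A union bound over the $\binom{M}{2}$ pairs gives $\mathbb{P}[\mathcal{G}_t^c]\le M^2\exp(-ct\epsilon^2)$, where $\mathcal{G}_t \coloneqq \{\max_{i<j}|\hat{d}_{ij}(t) - d_{ij}| \leq \epsilon\}$.

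Next, on the event $\mathcal{G}_t$, I would propagate the perturbation. Since $x\mapsto \exp(-x^2/2\sigma_a^2)$ is Lipschitz on bounded intervals, $\|\hat{A}(t)-A\|_F = O(M\epsilon)$. Because the true degree entries are bounded below (implicit in Assumption \ref{ass: 4}), the map from $A$ to $L = D^{-1/2}AD^{-1/2}$ is Lipschitz in spectral norm, giving $\|\hat{L}(t)-L\|_2 = O(\epsilon)$. The ideal Lagrangian $L$ has a strictly positive eigengap between its $K$-th and $(K{+}1)$-st eigenvalues under Assumption \ref{ass: 1} (via the Cheeger-type bound $\lambda_{K+1}\le 1-\delta$ used by Ng--Jordan--Weiss \cite{ng2001spectral}). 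Davis--Kahan's $\sin\theta$ theorem therefore produces an orthogonal matrix $O\in\mathbb{R}^{K\times K}$ such that $\|\hat{Z}(t) - Z O\|_F = O(\epsilon/\delta)$.

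Then I would pass through the row-normalization. Assumptions \ref{ass: 2}, \ref{ass: 3}, \ref{ass: 4} force $\|Z_i\|$ to be bounded below by a positive constant (this is the analogue of Proposition 1 in \cite{ng2001spectral}), so row normalization is Lipschitz and $\|\hat{Y}_i(t) - (YO)_i\| = O(\epsilon)$ uniformly in $i\in[M]$. The rows of $YO$ inherit from the rows of $Y$ the structural property proven in \cite{ng2001spectral}: rows indexed by the same cluster collapse to a common point (up to $O(\epsilon_1+\epsilon_2)$), whereas rows from different clusters are separated by some universal $\gamma>0$. Choosing $\epsilon$ small enough (e.g.\ $\epsilon\le\gamma/8$) guarantees that the $K$-Means step on $\{\hat{Y}_i(t)\}$ returns the true partition up to relabeling, so $E_t \subseteq \mathcal{G}_t^c$. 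Fixing this $\epsilon$ yields $\mathbb{P}[E_t]\le M^2\exp(-\alpha_0 t)$ with $\alpha_0 = c\epsilon^2 > 0$, valid for all $t>t_0$ (where $t_0$ is large enough that the above Lipschitz and Davis--Kahan arguments are in their validity regime).

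The main obstacle will be the Davis--Kahan step: one must quantify the eigengap of $L$ explicitly in terms of the constants $\delta,\epsilon_1,\epsilon_2,C$ appearing in Assumptions \ref{ass: 1}--\ref{ass: 4}, and then carry this through the row normalization without losing the exponential rate. A secondary subtlety is the rotational indeterminacy of the top-$K$ eigenvectors, which forces the ``closeness'' argument to be phrased up to the unknown orthogonal $O$; this is harmless because $K$-Means is invariant under simultaneous orthogonal transformation of its inputs, so the identity of the output clusters is unaffected.
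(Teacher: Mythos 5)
Your proposal is correct in outline but takes a genuinely different route through the spectral step than the paper does. For Theorem~\ref{theorem: expconsfss} the paper never compares $\hat{Y}_i(t)$ to $Y_i$ at all: on the event $F=\{\max_{i<j}|\hat{d}_{ij}(t)-d_{ij}|<\epsilon\}$ it shows (Claims~\ref{claim: delta assump}--\ref{claim: C assump}) that the \emph{estimated} affinity matrix itself satisfies Assumptions~\ref{ass: 1}--\ref{ass: 4} with degraded constants $\delta-A_5\epsilon$, $\epsilon_1+A_6\epsilon$, $\epsilon_2+A_7\epsilon$, $C+A_8\epsilon$, applies Ng--Jordan--Weiss Theorem~2 (Lemma~\ref{lemma: ngTheorem2}) directly to the perturbed matrix so that the rows of $\hat{Y}(t)$ cluster around \emph{some} $K$ orthonormal vectors, and then closes with the $r<0.35$ geometric argument and the MMD concentration bound (Claim~\ref{claim: mmd conc bound}). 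You instead compare the perturbed embedding to the true one via Weyl plus Davis--Kahan and invoke the NJW structure only for the ideal $Y$; this is essentially the paper's Lemma~\ref{lemma: specpt bound}, which the paper deploys for Theorems~\ref{theorem: finitestoptime} and~\ref{theorem: expconsSEQ} rather than here. Your route buys a per-row bound $\|\hat{Y}_i(t)-(YO)_i\|=O(\epsilon)$ and unifies the FSS and sequential analyses, but it pays by having to quantify the eigengap $\lambda_K(L)-\lambda_{K+1}(L)$ in terms of $\delta$ and $\boldsymbol{\epsilon}$ (you rightly flag this as the main obstacle; it does follow from the Cheeger-type bounds in \cite{ng2001spectral} under $\delta>(2+\sqrt{2})\boldsymbol{\epsilon}$, and the repeated-eigenvalue case requires the canonical-angle bookkeeping the paper carries out in Lemma~\ref{lemma: specpt bound}). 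The paper's route avoids any explicit eigengap computation but must re-derive the four assumption constants under perturbation and convert NJW's \emph{averaged} bound into a per-point one by multiplying by $M$, which is exactly where the extra hypothesis $CM(4+2\sqrt{K})^2<r$ enters; note that your claim that same-cluster rows of $Y$ collapse to a common point up to $O(\epsilon_1+\epsilon_2)$ is also only available in this averaged sense, so your choice of the separation $\gamma$ needs the same $M$-dependent condition. Two minor points to tighten: the biased MMD estimator carries a bias term, so $t_0$ must also absorb the requirement $4\sqrt{B/t}\le\epsilon/2$ (Claim~\ref{claim: mmd conc bound}), and the lower bound on $\|Z_i\|$ used to make row normalization Lipschitz should be stated explicitly (the paper's $m_s$); your closing remark about the validity regime covers the first only implicitly.
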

\begin{proof}[Proof sketch]
Assume that $\delta > (2+\sqrt{2})\boldsymbol{\epsilon}$ and
$CM(4+2\sqrt{K})^2 < r$, with $r < \frac{1}{2\sqrt{2}}$, where
$\boldsymbol{\epsilon} \coloneqq \sqrt{K(K-1)\epsilon_1 + K\epsilon_2^2}$.
By Theorem~2 in~\cite{ng2001spectral}, which quantifies the separation of clusters in the spectral domain, spectral clustering (SPEC) applied to the affinity matrix $A$ outputs a unique clustering under these assumptions.
Hence, at any time $t$, if the corresponding estimated quantities satisfy
\begin{equation} \label{eq:pf1}
    \hat{\delta}(t) > (2+\sqrt{2})\hat{\boldsymbol{\epsilon}}(t)
    \quad \text{and} \quad
    \hat{C}(t)M(4+2\sqrt{K})^2 < r,
\end{equation}
then the estimated clustering at time $t$ is correct.

We show that, under the event $F_{\epsilon,t}$, the conditions in
\eqref{eq:pf1} are satisfied for all $\epsilon < \epsilon_0$, for some
$\epsilon_0 > 0$. 
Consequently, the error probability satisfies $\mathbb{P}\!\left[E_t\right] \leq \mathbb{P}\!\left[F_{\epsilon,t}^{C}\right].$

Since the probability that the MMD estimates deviate from their true values by $\epsilon$, $\mathbb{P}\!\left[F_{\epsilon,t}^{C}\right]$, decays exponentially in $t$ \cite{gretton2012kernel}, it follows that $\mathbb{P}[E_t]$ also decays exponentially with $t$.
\end{proof}

\begin{theorem} \label{theorem: finitestoptime}
    $\mathbb{P}[N<\infty]=1$, where $N$ is the stopping time of SEQ-SPEC.
\end{theorem}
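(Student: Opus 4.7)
The plan is to argue that, almost surely, for all sufficiently large $t$, the left-hand side of the stopping test $\Gamma_t$ is bounded below by a strictly positive constant while the right-hand side $\arcsin(C/\sqrt{t})$ tends to zero, forcing $N<\infty$ on a set of full probability.

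First, I would invoke Theorem \ref{theorem: expconsfss} together with the first Borel--Cantelli lemma. Since $\sum_{t>t_0}M^2\exp(-\alpha_0 t)<\infty$, the error event $E_t$ occurs only finitely often almost surely; equivalently, there is a (random) $T_0<\infty$ such that for every $t\ge T_0$ we have $\hat{D}_k(t)=D_k$ for all $k$ (up to a permutation of cluster labels, which leaves $\Gamma_t$ invariant). Consequently, for all $t\ge T_0$ the outer pair of minima in the definition of $\Gamma_t$ ranges over the true partition.

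Second, I would establish almost sure convergence of the spectral points. By the MMD concentration bound of Gretton et al.\ combined with Borel--Cantelli, the recursive estimator in \eqref{eq: mmdupdate} satisfies $\hat{d}_{ij}(t)\to d_{ij}$ almost surely for each of the finitely many pairs $(i,j)$. Continuity of the Gaussian map then gives $\hat{A}(t)\to A$ entrywise, and hence $\hat{L}(t)\to L$ in operator norm. Under Assumptions \ref{ass: 1}--\ref{ass: 4} the $K$-th eigen-gap of $L$ is strictly positive, so the Davis--Kahan $\sin\theta$ theorem yields operator-norm convergence of the top-$K$ spectral projection of $\hat{L}(t)$ to that of $L$, up to an orthogonal transformation $O_t$ of the eigenbasis. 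Since pairwise Euclidean distances of the rows are invariant under $O_t$ and under row normalization once the true rows are bounded away from zero (as guaranteed by the spectral structure used in Theorem \ref{theorem: expconsfss}), we obtain
\begin{equation}
\Gamma_t \;\xrightarrow{\text{a.s.}}\; \Gamma^{\ast} \;\coloneqq\; \min_{k\neq l}\,\min_{\substack{i\in D_k,\, j\in D_l}} \|Y_i-Y_j\|,
\end{equation}
and the same spectral argument behind Theorem \ref{theorem: expconsfss} (with zero MMD error) forces $\Gamma^{\ast}>0$.

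Finally, because $\arcsin(C/\sqrt{t})\to 0$ as $t\to\infty$, on the full-measure event where $\Gamma_t\to\Gamma^{\ast}>0$ there exists a (random) $T_1\ge T_0$ with $\Gamma_t>\tfrac{1}{2}\Gamma^{\ast}>\arcsin(C/\sqrt{t})$ for all $t\ge T_1$, so $N\le T_1<\infty$ almost surely. I expect the main technical obstacle to be a clean path-wise treatment of the two gauge ambiguities inherent to the spectral construction, namely the arbitrary permutation of estimated cluster labels and the arbitrary orthogonal rotation $O_t$ produced by the eigendecomposition; both must be fixed along each sample path (for $t\ge T_0$) so that the convergence $\Gamma_t\to\Gamma^{\ast}$ is well-posed, and care is needed to verify that the normalization $\hat{Y}_i(t)=\hat{Z}_i(t)/\|\hat{Z}_i(t)\|$ is stable, i.e.\ that $\|Z_i\|$ is bounded away from zero so that the row-normalization step is Lipschitz in a neighborhood of the limit.
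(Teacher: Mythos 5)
Your proposal is correct, but it takes a genuinely different route from the paper. You argue path-wise: Borel--Cantelli applied to the exponential bound of Theorem \ref{theorem: expconsfss} kills the error events eventually, almost-sure convergence of the MMD estimates propagates through the affinity and Lagrange matrices to give $\Gamma_t \to d_H > 0$ a.s., and since the threshold $\arcsin(C/\sqrt{t})\to 0$ the stopping rule must fire. The paper instead proves the quantitative tail bound $\mathbb{P}[N>t] \leq M^2 e^{-\alpha_0 t} + K^2 M^4 e^{-\eta^2 t/16B}$ for all large $t$, by splitting on $E_t$ versus $E_t^C$, reducing $\{\Gamma_t < T_t\}$ on $E_t^C$ to a union over cross-cluster pairs of deviation events $\|\hat{Y}_i(t)-\hat{Y}_j(t)\| < T_t$, and chaining Lemma \ref{lemma: specpt bound} with the MMD concentration bound (Claim \ref{claim: mmd conc bound}); finiteness of $N$ then follows from $\mathbb{P}[N>t]\to 0$. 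Both arguments rest on the same two pillars (exponential consistency of FSS SPEC and the perturbation bound on spectral points), and the gauge issues you flag --- label permutations and the orthogonal rotation of the eigenbasis, plus $\|Z_i\|$ bounded away from zero --- are exactly what the paper's Lemma \ref{lemma: specpt bound} resolves, so your "main technical obstacle" is real but already discharged by machinery the paper develops anyway. The trade-off: your route is more elementary for the bare statement $\mathbb{P}[N<\infty]=1$, but the paper's explicit geometric tail bound \eqref{th1: eq: 3} is reused later to establish uniform integrability of $N/C^2$ in Theorem \ref{unif}, so under your approach that estimate would still have to be derived separately.
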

\begin{proof}[Proof sketch]
The event $N > t$ implies that the algorithm does not stop at time $t$, i.e.,
$\Gamma_t < \arcsin\left(C/\sqrt{t}\right)$. From the definition of $\Gamma_t$, this implies that
\[
\|\hat{Y}_i(t) - \hat{Y}_j(t)\| < \arcsin\left( C/\sqrt{t} \right)
\quad \text{for some } i \neq j,
\]
where $i$ and $j$ are data sequences from different clusters.
We argue that this implies that 
the estimated spectral point $\hat{Y}_i(t)$ deviates from the true spectral point $Y_i$ by at least $\delta$, i.e., 
\begin{equation} \label{eq:pf2}
    \|\hat{Y}_i(t) - Y_i\| > \delta \quad \text{for some } \delta>0.
\end{equation}
Now, under the event $F_{\epsilon,t}$, we show that
\begin{equation}\label{eq: pf3}
\|\hat{Y}_i(t) - Y_i\| < A_1 \arcsin(\alpha_1 \epsilon),
\end{equation}
where the bound follows from Proposition~6.1 (the symmetric $\sin\theta$ theorem) in~\cite{davis1970rotation}, which provides a quantitative bound on the deviation of canonical angles between eigen spaces under matrix perturbations, Theorem~5.2 in~\cite{stewart1990matrix}, which relates canonical angles between subspaces to angles between eigenvectors, and Corollary~4.3.15 (Weyl's theorem) in~\cite{horn2012matrix}, which bounds the perturbation of eigenvalues.

Since the probability of the complement event $F_{\epsilon,t}^{C}$ decays exponentially with $t$~\cite{gretton2012kernel}, from \eqref{eq: pf3}, the probability of the event in~\eqref{eq:pf2} also decays exponentially with $t$. Consequently, $\mathbb{P}[N > t]$ decays exponentially with $t$, which proves that $\lim_{t \to \infty} \mathbb{P}[N > t] = 0$, implying that $\mathbb{P}[N<\infty]=1$.
\end{proof}

\begin{theorem} \label{theorem: expconsSEQ}
    There exists a constant $G>0$ such that, as $C\rightarrow\infty$, $\mathbb{E}[N] \leq -G \log{\left(\mathbb{P}\left[E_N\right]\right)}(1+o(1))$. 
\end{theorem}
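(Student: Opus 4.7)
The plan is to exhibit matched $\Theta(C^2)$ bounds: $\mathbb{E}[N]\lesssim C^2/\sin^2\gamma^{\star}$ and $-\log\mathbb{P}[E_N]\gtrsim\alpha_0 C^2$, where $\gamma^{\star}\coloneqq\min_{k\neq l}\min_{i\in D_k,j\in D_l}\|Y_i-Y_j\|$ is the true minimum inter-cluster spectral distance (strictly positive under Assumptions \ref{ass: 1}--\ref{ass: 4}, essentially by Theorem~2 of \cite{ng2001spectral}); their ratio will deliver $G$. A useful free observation is that $\arcsin(C/\sqrt{t})$ is defined only for $t\geq C^2$, so the stopping rule cannot trigger before that and $N\geq\lceil C^2\rceil$ almost surely. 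This deterministic lower bound is what makes paying only for $t\geq\lceil C^2\rceil$ in the error union bound acceptable.

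For the error side, I would decompose over the stopping time and apply Theorem~\ref{theorem: expconsfss}:
\[
\mathbb{P}[E_N]=\sum_{t\geq\lceil C^2\rceil}\mathbb{P}[N=t,E_t]\leq\sum_{t\geq\lceil C^2\rceil}\mathbb{P}[E_t]\leq\frac{M^2\,e^{-\alpha_0\lceil C^2\rceil}}{1-e^{-\alpha_0}},
\]
whenever $C$ is large enough that $\lceil C^2\rceil>t_0$. This yields $-\log\mathbb{P}[E_N]\geq\alpha_0 C^2+O(1)$.

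For the expected-time side, I would fix a small $\delta\in(0,\gamma^{\star})$ and introduce the good event $G_t^{\delta}\coloneqq\{\max_{i<j}|\hat d_{ij}(t)-d_{ij}|<\eta(\delta)\}$, with $\eta(\delta)$ chosen small enough that propagating the perturbation $\hat d\to\hat A\to\hat L\to\hat Z\to\hat Y$ through Davis--Kahan's $\sin\theta$ theorem forces both correct clustering and $\Gamma_t\geq\gamma^{\star}-\delta$ on $G_t^{\delta}$. The MMD concentration bound of \cite{gretton2012kernel}, unioned over the $O(M^2)$ pairs, gives $\mathbb{P}[(G_t^{\delta})^c]\leq c_1 e^{-c_2(\delta)t}$. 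Setting $T_\delta\coloneqq\lceil C^2/\sin^2(\gamma^{\star}-\delta)\rceil$, for every $t\geq T_\delta$ the threshold $\arcsin(C/\sqrt t)\leq\gamma^{\star}-\delta$, so $G_t^{\delta}$ forces stopping; hence
\[
\mathbb{E}[N]\leq T_\delta+\sum_{t>T_\delta}\mathbb{P}[(G_t^{\delta})^c]\leq\frac{C^2}{\sin^2(\gamma^{\star}-\delta)}+O(1).
\]
Combining the two bounds and sending $C\to\infty$ with $\delta$ fixed yields $\mathbb{E}[N]\leq\frac{1}{\alpha_0\sin^2(\gamma^{\star}-\delta)}\bigl(-\log\mathbb{P}[E_N]\bigr)(1+o(1))$, and choosing $\delta$ small once and for all lets me take any constant $G>1/(\alpha_0\sin^2\gamma^{\star})$.

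The hard part will be calibrating the $\eta\!\to\!\delta$ dependence in the perturbation chain: the normalized Laplacian $\hat L=\hat D^{-1/2}\hat A\hat D^{-1/2}$ depends nonlinearly on $\hat A$, the top-$K$ eigenspace must be controlled via Davis--Kahan (which is exactly why $\arcsin$ appears in the stopping threshold), and the row-normalization producing $\hat Y_i$ can amplify errors when some $\|Z_i\|$ is small, so a uniform lower bound on $\|Z_i\|$ is needed. Most of this machinery should already be developed in the proof of Theorem \ref{theorem: expconsfss}, so the work here is mainly to recycle those perturbation estimates with an explicit quantitative $\delta$ rather than a qualitative ``eventually correct'' statement.
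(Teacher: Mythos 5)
Your proposal is correct, and both halves of it are genuinely simpler than what the paper does, while following the same overall skeleton (matching $\Theta(C^2)$ rates for $\mathbb{E}[N]$ and $-\log\mathbb{P}[E_N]$ and taking the ratio). On the error side, the paper splits $\sum_t \mathbb{P}[N=t,E_t]$ at $t_M=(C+t_0)^2$ and, for $t\le t_M$, exploits the structure of the event $\{N=t\}\cap E_t$ (stopping condition met while two same-cluster points are separated, hence a large spectral perturbation, hence a large MMD deviation), which costs it a polynomial prefactor and the worse exponent $H=\min\{(\delta'')^2/(64B\alpha_1^2),\alpha_0\}$; you instead observe that the $\arcsin$ domain restriction forces $N\ge C^2$ almost surely (a fact the paper itself states and uses), so the crude bound $\mathbb{P}[N=t,E_t]\le\mathbb{P}[E_t]$ summed over $t\ge\lceil C^2\rceil$ already gives $-\log\mathbb{P}[E_N]\ge\alpha_0C^2+O(1)$ once $C^2>t_0$ --- a cleaner and in fact tighter bound, though one that is specific to the $\arcsin$ threshold and would not survive the footnoted $C/\sqrt{t}$ variant. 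On the time side, the paper proves the two-sided $L^1$ limit $N/C^2\to 1/\sin^2(d_H)$ (its Theorem \ref{unif}), which requires almost-sure convergence of $\Gamma_t$ plus a uniform-integrability argument; you replace this with a one-sided good-event bound $\mathbb{E}[N]\le\lceil C^2/\sin^2(d_H-\delta)\rceil+o(1)$, which is all Theorem \ref{theorem: expconsSEQ} needs and avoids uniform integrability entirely, at the cost of the slightly weaker constant $\sin^2(d_H-\delta)$ in place of $\sin^2(d_H)$ (immaterial here, since only existence of some $G>0$ is claimed). The perturbation chain you flag as the hard part is exactly the paper's Lemma \ref{lemma: specpt bound} (with the uniform lower bound $m_s$ on $\|Z_i\|$ handling the row-normalization amplification you worry about), so the calibration $\eta(\delta)=\min\{\epsilon_0,\tfrac{1}{\alpha_1}\sin(\delta/(2A_1))\}$ is available off the shelf. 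The only thing your route gives up is the sharp asymptotic characterization of $\mathbb{E}[N]$ itself, which the paper's Theorem \ref{unif} provides as a byproduct.
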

\begin{proof}[Proof sketch]
    First, using a procedure similar to that of Theorem~\ref{theorem: finitestoptime}, we show that the error probability $\mathbb{P}[E_N]$ decays exponentially in $C^2$. Intuitively, choosing a larger value of the parameter $C$ in SEQ-SPEC leads to a larger stopping time $N$. We formalize this intuition by proving that the ratio of the stopping time $N$ to $C^2$ converges to a problem-dependent constant. Combining these results, we conclude that the error probability $\mathbb{P}[E_N]$ decays exponentially with the expected stopping time $\mathbb{E}[N]$.
\end{proof}



\section{Incremental Approximate Sequential Spectral clustering (IA-SEQ-SPEC)} \label{sec: IA-SEQ-SPEC}

In SEQ-SPEC, at each time $t$, we do an eigen decomposition of the Lagrange matrix $\hat{L}(t)$. Now, we propose a computationally efficient algorithm IA-SEQ-SPEC, which makes use of the eigen decomposition of the Lagrange matrix at the previous time $\hat{L}(t-1)$ to compute the approximate eigen decomposition of Lagrange matrix at the current time $t$, but by updating only small patch of the Affinity matrix $\hat{A}(t)$. This approach builds on \cite{dhanjal2014efficient}, where the Incremental Approximate Spectral (IA-SPEC) clustering algorithm is proposed for the sequence of changing graphs. 
IA-SEQ-SPEC differs from IA-SPEC in the following aspects. (1) IA-SPEC assumes only a certain number of edge changes in the graph at each time, whereas, in our framework, all the edges change at each time $t$. Hence, we follow a sequence of procedures to identify the edges that change the most and update only those edges. (2) In IA-SPEC, the rank of the approximation $l$ is fixed, which may not be a good approximation at all $t$, leading to error propagation. Therefore, we use an adaptive choice of $l$ obtained by upper-bounding the fractional approximation error. (3) IA-SPEC cannot be directly applied to our sequential framework as it lacks the stopping condition. We use a stopping condition based on our analysis of SEQ-SPEC.

Now we discuss the procedure of IA-SEQ-SPEC at time $t$. We compute the difference in the affinity matrix $\Delta(t)=\hat{A}(t)-\Tilde{A}(t-1)$, where $\Tilde{A}(t-1)$ is the modified affinity matrix updated at $t-1$. We choose a $p\times p$ symmetric block $H_{22}$ from the matrix $\Delta(t)$ with the highest absolute sum. For example, if $p=2$, then $H_{22}=\begin{bmatrix}
    \Delta_{ii}(t) & \Delta_{ij}(t) \\
    \Delta_{ji}(t) & \Delta_{jj}(t)
\end{bmatrix}$, for some $i\neq j, i, j \in [M]$ is a valid $p\times p$ symmetric block. Let $\Tilde{\Delta}(t)$ be the matrix obtained from $\Delta(t)$ by setting the entries of the locations other than the chosen $p\times p$ block to be zero. Without loss of generality, assume that the $p\times p$ block is in the bottom right corner of the matrix $\Tilde{\Delta}(t)$, i.e., $\Tilde{\Delta}(t) = \begin{bmatrix}
    \boldsymbol{0}_{M-p\times M-p} & \boldsymbol{0}_{M-p\times p} \\
    \boldsymbol{0}_{p\times M-p} & H_{22}
\end{bmatrix}$, where $H_{22} \in \mathbb{R}^{p\times p}$. Then, we form the modified affinity matrix $\Tilde{A}(t) = \Tilde{A}(t-1) + \Tilde{\Delta}(t)$ and the Lagrange matrix $\hat{L}(t)$ from $\Tilde{A}(t)$. It can be verified that the change in the Lagrange matrix $U(t) = \hat{L}(t) - \hat{L}(t-1)$ takes the form $\begin{bmatrix}
    \boldsymbol{0}_{M-p\times M-p} & U_{12} \\
    U_{21} & U_{22}
\end{bmatrix}$, where $U_{12}, U_{21}^T \in \mathbb{R}^{M-p\times p}$ and $U_{22} \in \mathbb{R}^{p \times p}$.
We use $Q_l(\cdot)$ and $\Omega_l(\cdot)$ to denote the top $l$ eigen vectors and eigen values of $\hat{L}(\cdot)$, respectively. We choose the rank of approximation $l(t)$ such that the ratio of the sum of squares of top $l(t)$ eigen values to the sum of squares of all eigen values of $\hat{L}(t-1)$ is lower bounded by $q$, for some fixed $q\in (0, 1)$.
In our method, instead of finding the eigen decomposition of $\hat{L}(t) = \hat{L}(t-1) + U(t)$, we find the eigen decomposition of ($l(t)$ rank approximation of $\hat{L}(t-1)$) $+$ $U(t)$, i.e., $Q_{l(t)}(t-1)\Omega_{l(t)}(t-1)Q_{l(t)}^T(t-1) + U(t)$.
We find this eigen decomposition, i.e., $Q(t)\Omega(t)Q^T(t)$, using the procedure presented in Section 4.1 in \cite{dhanjal2014efficient}. The order of complexity is $(l^2+p^2)(l+p)+Mp(l+p)$, which is less than the complexity order $M^3$ for the actual eigen decomposition. Once we have the eigen decomposition, we follow the remaining procedure in SPEC described in Section \ref{sec: intro} to get the clusters. 

To avoid the propagation of approximation error, we calculate the exact decomposition once every $R$ time steps. Furthermore, if at some time $t$, the approximate decomposition is followed and the algorithm satisfies the stopping condition in Line 8 of Algorithm \ref{Algo:SEQ-SPEC}, then the stopping condition is again checked by performing the exact eigen decomposition. 
The algorithm stops only if the stopping condition is satisfied with the exact decomposition.

\section{Memory Efficient Sequential Spectral Clustering (SEQ-SPEC-B/IA-SEQ-SPEC-B)} \label{sec: memeff}
We observe that both SEQ-SPEC and IA-SPEC require storing all observed samples up to time $t$ in order to update the pairwise MMD estimates (Line~4 in Algorithm~\ref{Algo:SEQ-SPEC}). Consequently, the storage requirement grows linearly with $t$.
Hence, we propose memory-efficient versions of SEQ-SPEC and IA-SEQ-SPEC, which we call SEQ-SPEC-B and IA-SEQ-SPEC-B, respectively, where we use the pairwise MMD-B estimates proposed in \cite{zaremba2013b}, denoted by $\hat{d}_{ij}^{(B)}(t)$), and require storing at the most recent $B$ samples. To compute the square of the pairwise MMD-B estimate at time $t$, we divide the $t$ samples from each data sequence into blocks of size $B$ and take the average of the square of the pairwise MMD estimates over these blocks.
We use $n(t)=\left\lfloor\frac{t}{B}\right\rfloor$ to denote the number of non-overlapping blocks of size $B$ till time $t$ and $\hat{d}_{ij}(s_1:s_2)$ to denote the pairwise MMD estimate between data sequences $i$ and $j$ using samples from time $s_1$ to $s_2$. The pairwise MMD-B estimates $\hat{d}^{(B)}_{ij}(t)$ is given by, 
\begin{equation}
\begin{aligned}
    \hat{d}_{ij}^{(B)}(t) = &\frac{1}{\sqrt{t}}\left[ \sum_{s=0}^{n(t)-1}B\hat{d}_{ij}^2(sB+1:(s+1)B)\right.\\
    &\hspace{2cm}\left.+\left(t-n(t)B\right)\hat{d}_{ij}^2\left(n(t)B+1:t\right) \right]^{\frac{1}{2}}
\end{aligned}
\end{equation}
Here, we take the weighted average as the first $n(t)$ blocks have $B$ samples and the $n(t)+1^{\text{th}}$ block has $t-n(t)B$ samples.

\section{Simulation Results} \label{sec: Simulations}

\begin{figure*}[htb]
    \centering
    \begin{minipage}[t]{0.30\textwidth}
        \centering
        \begin{tikzpicture}
        \begin{axis}[
            xlabel = {\small{Empirical stopping time}}, 
            ylabel = {$\log\left(\mathbb{P}[E_N]\right)$},
            xmin =-0.5, xmax=152,
            ymin=-6.1, ymax=0.1,
            xtick={0, 25, 50, 75, 100, 125, 150},
            xticklabels={0, 25, 50, 75, 100, 125, 150},
            grid=major,
            ytick = {-6, -5, -4, -3, -2, -1, 0},
            yticklabels = {-6, -5, -4, -3, -2, -1, 0},
            legend style={at={(0, 0)}, draw = none, fill opacity = 0, text opacity = 0.7,anchor=south west, nodes={scale=0.66},legend cell align=left},
            width = 1.0\textwidth, 
            height = 1.0\textwidth, 
            tick label style={font=\tiny},
            enlargelimits=false,
        ]
        \addplot[mark = x,red,thick, mark options={fill=red}] coordinates{
        (10, 0)
        (20, -0.00280392733)
        (40, -0.097392346)
        (60, -0.426944140)
        (80, -1.00731013)
        (100, -1.71981097)
        (120, -2.51207232)
        (140, -3.46733718)
        (160, -4.38202663)
        (180, -5.38169898)
        (200, -6.37712703)
        };
        \addplot[mark=x, blue, thick, mark options={fill=blue}] coordinates{
        (1.9971, 0.00000000)
        (5.4727, 0.00000000)
        (11.4527, -0.000400080021)
        (20.7124, -0.00833463679)
        (33.6378, -0.0713885961)
        (49.2096, -0.318141280)
        (65.4864, -0.814411268)
        (80.9248, -1.58572139)
        (96.1848, -2.58494800)
        (112.8116, -3.86800612)
        (131.6879, -5.40367788)
        (153.2404, -6.90775528)
        };
        \addplot[mark=x, magenta, thick, mark options={fill=magenta}] coordinates{
        (1.1361, 0.00000000)
        (5.9918, 0.00000000)
        (14.3766, -0.000700245114)
        (26.3095, -0.0136933271)
        (41.5041, -0.143985851)
        (59.4288, -0.520202786)
        (78.5124, -1.24966776)
        (98.3514, -2.28278247)
        (118.9352, -3.49002860)
        (141.0478, -4.90627528)
        (164.8898, -6.64539101)
        };
        \addplot[mark=x, brown, thick, mark options={fill=purple}] coordinates{
        (48.1207, -0.2450058)
        (54.8650, -0.33757284)
        (59.6150, -0.44192165)
        (65.6050, -0.5839341)
        (74.3250, -0.81803036)
        (87.0300, -1.26833356)
        (107.3600, -2.66499071)
        (111.0750, -3.09444825)
        (117.6750, -3.75930192)
        (151.8750, -5.68397985)
        (153.7550, -6.72543372)
        (175.5450, -8.11172808)
        (200.7950, -9.21034037)
        };
        \addplot[mark=x, black, thick] coordinates{
        (2.1604417, 0.00000000)
        (6.84627043, -0.000108265861)
        (16.0580275, -0.00597211161)
        (31.56371116, -0.0790418015)
        (51.3683014, -0.337208677)
        (71.61069611, -0.862497047)
        (91.65768107, -1.78238491)
        (104.97455884, -2.74241103)
        (115.17332467, -3.69725043)
        (130.20179712, -5.12363925)
        (147.67034752, -6.42292224)
        (159.7894338, -9.13097243)
        };

\addplot[
    mark=x, green!50!black, thick
]
coordinates {
    (5.46890802,  -0.00520536)
    (16.44526445, -0.01900085)
    (34.47997813, -0.04354215)
    (57.99494328, -0.06666695)
    (86.31515648, -0.08711747)
    (118.63072297,-0.10140649)
    (155.10113434,-0.11093222)
    (195.80388137,-0.11798055)
};
        
        \legend{SPEC, SEQ-SPEC, SEQ-SLINK, SEQ-SPEC-R=50, IA-SEQ-SPEC-R=50, SEQ-KMED}
        \end{axis}
        \end{tikzpicture}
        \vspace{-1mm}
        \caption{Synthetic dataset 1}
        \label{fig:circle}
    \end{minipage}\hfill
    \begin{minipage}[t]{0.30\textwidth}
        \centering
        \begin{tikzpicture}
        \begin{axis}
        [ylabel = {$\log\left(\mathbb{P}[E_N]\right)$}, 
        xlabel = {\small{Empirical Stopping time}}, 
        xmin =0.1, 
        xmax=27,
        ymin=-4.0, 
        ymax=0.1,
        xtick={0, 5, 10, 15, 20, 25},
        xticklabels={0, 5, 10, 15, 20, 25},
        grid=major,
        ytick = {-5, -4, -3, -2, -1, 0},
        yticklabels = {-5, -4, -3, -2, -1, 0},
        legend style={at={(1,0.46)},
        fill opacity = 0.8,
        anchor=south east ,
        nodes={scale=0.66},legend cell align=left},
        width = 1.0\textwidth,
        height = 1.0\textwidth,
        tick label style = {font=\tiny},
        enlargelimits=false,]

        \addplot[mark=x, magenta, thick] coordinates{
        (4.07624426, -0.19654149)
        (14.49558772, -0.34544282)
        (29.08171550, -0.53091068)
        (56.36057183, -0.88390522)
        (72.77214966, -1.10694198)
        (90.51376633, -1.35194581)
        (119.37380868, -1.68569323)
        };
        \addplot[mark=x, blue, thick] coordinates{
            (1.0954, -0.05245186)
            (3.8934, -0.90460923)
            (7.3348, -2.15244243)
            (13.5556, -3.49002860)
        };
        \addplot[mark=x, black, thick] coordinates{
        (2.3421, -0.3361)
        (7.4963, -1.8059)
        (13.3761, -2.9834)
        (17.1270, -3.4976)
        };

        \addplot[
    mark=x, green!50!black, thick
] coordinates {
    (1.0111,  -0.10440)
    (1.1467,  -0.14552)
    (2.0109,  -0.45234)
    (4.7417,  -1.59917)
    (11.3998, -3.74911)
};

        \legend{SEQ-SLINK, SEQ-SPEC, IA-SEQ-SPEC, SEQ-KMED}
        \end{axis}
        \end{tikzpicture}
        \vspace{-1mm}
        \caption{Synthetic dataset 2}
        \label{fig:bridge}
    \end{minipage}\hfill
    \begin{minipage}[t]{0.30\textwidth}
        \centering
\begin{tikzpicture}
\begin{axis}[
    width=\linewidth,
    height=\linewidth,
    grid=both,
    xmin=-4.5, xmax=4.5,
    ymin=-6, ymax=4.5,
    xlabel = {\textcolor{white}{dimension $1$}},
    ytick = {-6, -4, -2, 0, 2, 4},
    axis equal,
    legend columns=3,
    legend style={at={(0,0)}, anchor=south west, fill opacity=0.7, nodes={scale=0.66}},
    tick label style={font=\tiny}
]

\addplot+[only marks, mark=*, mark options={draw=red, fill=red}, mark size=1.5pt]
coordinates {
    ( 1.0000,  0.0000)
    ( 0.5000,  0.8660)
    (-0.5000,  0.8660)
    (-1.0000,  0.0000)
    (-0.5000, -0.8660)
    ( 0.5000, -0.8660)
};

\addplot+[only marks, mark=*, mark options={draw=blue, fill=blue}, mark size=1.5pt]
coordinates {
    ( 2.5000,  0.0000)
    ( 2.2520,  1.0839)
    ( 1.5570,  1.9550)
    ( 0.5567,  2.4373)
    (-0.5567,  2.4373)
    (-1.5570,  1.9550)
    (-2.2520,  1.0839)
    (-2.5000,  0.0000)
    (-2.2520, -1.0839)
    (-1.5570, -1.9550)
    (-0.5567, -2.4373)
    ( 0.5567, -2.4373)
    ( 1.5570, -1.9550)
    ( 2.2520, -1.0839)
};

\addplot+[only marks, mark=*, mark options={draw=green!60!black, fill=green!60!black}, mark size=1.5pt]
coordinates {
    ( 4.0000,  0.0000)
    ( 3.8042,  1.2361)
    ( 3.2361,  2.3511)
    ( 2.3511,  3.2361)
    ( 1.2361,  3.8042)
    ( 0.0000,  4.0000)
    (-1.2361,  3.8042)
    (-2.3511,  3.2361)
    (-3.2361,  2.3511)
    (-3.8042,  1.2361)
    (-4.0000,  0.0000)
    (-3.8042, -1.2361)
    (-3.2361, -2.3511)
    (-2.3511, -3.2361)
    (-1.2361, -3.8042)
    ( 0.0000, -4.0000)
    ( 1.2361, -3.8042)
    ( 2.3511, -3.2361)
    ( 3.2361, -2.3511)
    ( 3.8042, -1.2361)
};

\legend{Cluster 1, Cluster 2, Cluster 3}

\end{axis}
\end{tikzpicture}
\vspace{-1mm}
\caption{Synthetic Dataset 3}
\label{fig:circleprob3clus}

    \end{minipage}
\end{figure*}
\begin{figure*}[htb]
    \centering
    \begin{minipage}[t]{0.30\textwidth}
        \centering
        \begin{tikzpicture}
        \begin{axis}[
            xlabel = {\small{Empirical stopping time}}, 
            ylabel = {$\log\left(\mathbb{P}[E_N]\right)$},
            xmin =-0.5, xmax=205,
            ymin=-5, ymax=0.1,
            xtick={0, 40, 80, 120, 160, 200, 240},
            xticklabels={0, 40, 80, 120, 160, 200, 240},
            grid=major,
            ytick = {-6, -5, -4, -3, -2, -1, 0},
            yticklabels = {-6, -5, -4, -3, -2, -1, 0},
            legend style={at={(0, 0)}, draw = none, fill opacity = 0, text opacity = 0.7,anchor=south west, nodes={scale=0.66},legend cell align=left},
            width = 1.0\textwidth, 
            height = 1.0\textwidth, 
            tick label style={font=\tiny},
            enlargelimits=false,
        ]
        
        \addplot[
            mark=x, red, thick
        ]
        coordinates {
            (50,  -0.02164587)
            (100, -0.54544121)
            (150, -1.68811349)
            (200, -3.2018904)
        };
        
        \addplot[mark=x, blue, thick, mark options={fill=blue}] coordinates{
        (2.113, 0.00000000)
        (35.354, -0.0068232)
        (69.1116, -0.207762449)
        (88.7952, -0.529668712)
        (126.6605, -1.76142427)
        (186.5206, -4.25451331)
        };
     
        \addplot[mark=x, magenta, thick, mark options={fill=magenta}] coordinates{
        (35.17346939, -0.00511509985)
        (88.196793, -0.482324112)
        (127.43804665, -1.61089673)
        (187.6516035, -4.82612987)
        };
        
        \addplot[mark=x, black, thick] coordinates{
        (2.64, 0.00000000)
        (51.51, -0.06806473)
        (93.35, -0.59275901)
        (113.22, -1.11048246)
        (146.03, -2.39909599)
        (195.61, -4.91441644)
        };
\addplot[
    mark=x, green!50!black, thick
]
coordinates {
    (52.27693475,  -0.06828858)
    (110.86949924, -0.97931296)
    (145.42336874, -2.19343809)
    (196.36267071, -4.78597529)
};

        \legend{SPEC, SEQ-SPEC, SEQ-SPEC-B, IA-SEQ-SPEC, IA-SEQ-SPEC-B}
        \end{axis}
        \end{tikzpicture}
        \vspace{-1mm}
        \caption{Synthetic dataset 3}
        \label{fig:circle3clus}
    \end{minipage}\hfill
    \begin{minipage}[t]{0.30\textwidth}
        \centering
        \begin{tikzpicture}
        \begin{axis}
        [ylabel = {$\log\left(\mathbb{P}[E_N]\right)$}, 
        xlabel = {\small{Empirical Stopping time}}, 
        xmin =0.1, 
        xmax=16,
        ymin=-4.0, 
        ymax=0.1,
        xtick={0, 5, 10, 15, 20, 25},
        xticklabels={0, 5, 10, 15, 20, 25},
        grid=major,
        ytick = {-5, -4, -3, -2, -1, 0},
        yticklabels = {-5, -4, -3, -2, -1, 0},
        legend style={at={(1,1)},
        fill opacity = 0.8,
        anchor=north east ,
        nodes={scale=0.66},legend cell align=left},
        width = 1.0\textwidth,
        height = 1.0\textwidth,
        tick label style = {font=\tiny},
        enlargelimits=false,]

        \addplot[mark=x, magenta, thick] coordinates{
        (1.0003, -0.28981768)
        (3.182, -0.70238976)
        (4.4112, -0.99074492)
        (9.7717, -2.54848563)
        (12.1611, -3.26491976)
        };

        \addplot[mark=x, blue, thick] coordinates{
            (2.0146, -0.66572658)
            (3.232, -1.12639467)
            (5.2822, -1.87993516)
            (7.6192, -2.74731092)
            (10.631, -3.70500884)
        };

        \addplot[mark=x, black, thick] coordinates{
        (2.009, -0.65123778)
        (3.0666, -1.02359758)
        (4.6093, -1.48324589)
        (7.2109, -2.24715039)
        (10.4875, -3.13269813)
        };
    
        \addplot[
    mark=x, green!50!black, thick
] coordinates {
    (1.0013,  -0.52374202)
    (2.2203,  -0.82805094)
    (3.6839,  -1.26301534)
    (4.8163,  -1.65862816)
    (9.9746, -3.69691163)
};

        \legend{SEQ-SLINK, SEQ-SPEC, IA-SEQ-SPEC, SEQ-KMED}
        \end{axis}
        \end{tikzpicture}
        \vspace{-1mm}
        \caption{Real-world dataset 1}
        \label{fig:movielens}
    \end{minipage}\hfill
    \begin{minipage}[t]{0.30\textwidth}
        \centering
        \begin{tikzpicture}[scale=1]
        \begin{axis}
        [ylabel = {$\log\left(\mathbb{P}[E_N]\right)$}, 
        xlabel = {\small{Empirical Stopping time}}, 
        xmin =0.1, 
        xmax=48,
        ymin=-6.1, 
        ymax=0.1,
        xtick={0, 5, 10, 15, 20, 25, 30, 35, 40, 45},
        xticklabels={0, 5, 10, 15, 20, 25, 30, 35, 40, 45},
        grid=major,
        ytick = {-6, -5, -4, -3, -2, -1, 0},
        yticklabels = {-6, -5, -4, -3, -2, -1, 0},
        legend style={at={(0,0)},
        fill opacity = 0.7,
        anchor=south west,
        nodes={scale=0.66}, legend cell align=left},
        width = 1.0\textwidth,
        height = 1.0\textwidth,
        tick label style = {font=\tiny},
        enlargelimits=false,]

        \addplot[mark=x, magenta, thick] coordinates{
        (1.0081, 0.00000000)
        (4.4660, -0.00954541)
        (14.5031, -0.44145505)
        (23.5857, -1.24444764)
        (33.1677, -2.30358559)
        (39.5946, -3.14888345)
        (46.1962, -3.98998455)
        };

        \addplot[mark=x, blue, thick] coordinates{
        (2.1534, -0.04845522)
        (5.7386, -1.07645948)
        (10.6378, -3.22640409)
        (17.5006, -5.84304454)
        };
        \addplot[mark=x, black, thick] coordinates{
        (2.0029, -0.0031)
        (3.0599, -0.1739)
        (4.5244, -0.5682)
        (6.8895, -1.4503)
        (9.9053, -2.6647)
        (13.5037, -4.0745)
        (17.6751, -5.6840)
        };
        \addplot[
    mark=x, green!50!black, thick
] coordinates {
    (1.5719,  -0.00170)
    (6.5883,  -0.23154)
    (16.9073, -1.39935)
    (25.0994, -2.46969)
    (33.8430, -3.69936)
    (39.9688, -4.72123)
    (46.3383, -5.60212)
};

        \legend{SEQ-SLINK, SEQ-SPEC, IA-SEQ-SPEC, SEQ-KMED}
        \end{axis}
        \end{tikzpicture}
        \vspace{-1mm}
        \caption{Real-world dataset 2}
        \label{fig:MNIST}
    \end{minipage}
\end{figure*}

\begin{figure}[b]
    \centering
    \begin{minipage}{0.48\linewidth}
        \centering
        \begin{tikzpicture}
            \begin{axis}[
                width=\linewidth,
                height=4cm,
                grid=both,
                xlabel={},
                ylabel={},
                xmin = -3,
                xmax = 3,
                ymin = -3.5,
                ymax = 2.5,
                legend columns = 2,
                legend style={at={(0, 0)}, fill opacity = 0.7,anchor=south west, nodes={scale=0.66}},
                tick label style = {font=\tiny}
            ]
            \addplot+[only marks, mark = *, mark options = {draw = red, fill=red}, mark size = 1.5pt] coordinates {
                (1.00000000, 0.00000000)
                (0.809016994, 0.587785252)
                (0.309016994, 0.951056516)
                (-0.309016994, 0.951056516)
                (-0.809016994, 0.587785252)
                (-1.00000000, 0.00000000)
                (-0.809016994, -0.587785252)
                (-0.309016994, -0.951056516)
                (0.309016994, -0.951056516)
                (0.809016994, -0.587785252)
            };
            \addplot+[only marks, mark = *, mark options = {draw = blue, fill=blue}, mark size = 1.5pt] coordinates {
                (2.00000000, 0.00000000)
                (1.90211303, 0.618033989)
                (1.61803399, 1.17557050)
                (1.17557050, 1.61803399)
                (0.618033989, 1.90211303)
                (0.00000000, 2.00000000)
                (-0.618033989, 1.90211303)
                (-1.17557050, 1.61803399)
                (-1.61803399, 1.17557050)
                (-1.90211303, 0.618033989)
                (-2.00000000, 0.00000000)
                (-1.90211303, -0.618033989)
                (-1.61803399, -1.17557050)
                (-1.17557050, -1.61803399)
                (-0.618033989, -1.90211303)
                (0.00000000, -2.00000000)
                (0.618033989, -1.90211303)
                (1.17557050, -1.61803399)
                (1.61803399, -1.17557050)
                (1.90211303, -0.618033989)
            };

            \legend{Cluster 1, Cluster 2}
            \end{axis}
        \end{tikzpicture}
        \caption{Synthetic Dataset 1} 
        \label{fig: circleprob}
    \end{minipage}
    \hfill
    \begin{minipage}{0.48\linewidth}
        \centering
        \begin{tikzpicture}
            \begin{axis}[
                width=\linewidth,
                height=4cm,
                grid=both,
                xlabel={},
                ylabel={},
                xmin = -1.5,
                xmax = 1.5,
                ymin = -0.6,
                ymax = 0.85,
                legend columns = 2,
                legend style={at={(0, 1)}, fill opacity = 0.7,anchor=north west, nodes={scale=0.66}},
                tick label style = {font=\tiny}
            ]
            \addplot+[only marks, mark=*, mark options={draw=red, fill=red}, mark size=1.5pt]
            coordinates {
                (-0.96218932, -0.10454969)
                (-1.08261271, -0.48829348)
                (-0.64005852,  0.22883317)
                (-1.06508457,  0.15476132)
                (-0.94375787, -0.11076457)
                (-0.80448651, -0.06211131)
                (-1.06576478, -0.15842935)
                (-0.90900839, -0.01983961)
                (-0.89094226, -0.12143714)
                (-0.97463443, -0.17845481)
                (-0.83170701,  0.03760702)
                (-0.9338858,   0.08210078)
            };
            \addplot+[only marks, mark=*, mark options={draw=blue, fill=blue}, mark size=1.5pt]
            coordinates {
                (0.7978485,   0.1566362)
                (1.41134056, -0.3276885)
                (0.65411771, -0.30096628)
                (1.16829178,  0.02574313)
                (1.21566849,  0.14448617)
                (1.04211436,  0.05680763)
                (0.9660479,   0.17369204)
                (0.77405681, -0.08437177)
                (1.04858777,  0.36028417)
                (0.84710718, -0.21581209)
                (0.88734256,  0.19385444)
                (0.9529989,   0.2648694)
            };
            \addplot+[only marks, mark=*, mark options={draw=black, fill=black}, mark size=1.5pt]
            coordinates {
                (-0.5, 0.0)
                (-0.3, 0.0)
                (-0.1, 0.0)
                ( 0.1, 0.0)
                ( 0.3, 0.0)
                ( 0.5, 0.0)
            };
            \legend{Cluster 1, Cluster 2, Bridge}
            \end{axis}
        \end{tikzpicture}
        \caption{Synthetic Dataset 2}
        \label{fig: bridgeprob}
    \end{minipage}
    \label{fig:two-plots}
\end{figure}
We compare the performance of our proposed algorithms SEQ-SPEC, IA-SEQ-SPEC, SEQ-SPEC-B, and IA-SEQ-SPEC-B, with the Spectral clustering algorithm \cite{ng2001spectral} (SPEC), and the sequential versions of Single Linkage clustering \cite{singh2025exponentially} (SEQ-SLINK) and $K$-Medoids clustering \cite{sreenivasan2023nonparametric} (SEQ-KMED).

\textbf{Synthetic Dataset 1:} 
We consider $30$ multivariate Gaussian distributed data sequences in $2$ clusters with mean vectors as shown in Fig.~\ref{fig: circleprob} and covariance matrix $0.4I_2$. For this simulation, we fix $q=0.7$, $p=4$, and $R=50$ in IA-SEQ-SPEC, and we represent it by IA-SEQ-SPEC-R=50 in Fig.~\ref{fig:circle}. SEQ-SPEC-R=50 in Fig.~\ref{fig:circle} corresponds to the SEQ-SPEC but finds the estimated configuration and checks the stopping condition only once every $R=50$ samples. From Fig.~\ref{fig:circle}, we observe that: (1) the proposed algorithms, SEQ-SPEC and IA-SEQ-SPEC, perform better than SPEC, SEQ-SLINK and SEQ-KMED, (2) IA-SEQ-SPEC-R=50 shows improvement over SEQ-SPEC-R=50, and it performs as well as SEQ-SPEC, (3) SEQ-KMED is not able to find the clusters in this example. Table \ref{Table: complexity} compares the computations required for IA-SEQ-SPEC and SEQ-SPEC.
\begin{table}[h!]
\centering
\begin{tabular}{|c|c|c|c|}
\hline
$M$ & 30 (10+20) & 45 (15+30) & 60 (20+40) \\ \hline
IA-SEQ-SPEC & 2098 & 5450 & 12169 \\ \hline
SEQ-SPEC & 27000 & 91125 & 216000 \\ \hline
\end{tabular}
\caption{Average number of computations for eigen decomposition for Synthetic Dataset 1 for different values of $M$}
\label{Table: complexity}
\end{table}

\textbf{Synthetic Dataset 2:} 
We consider multivariate Gaussian-distributed data sequences in $2$ clusters, with mean vectors as shown in Fig.~\ref{fig: bridgeprob}. Each cluster comprises 12 data sequences, connected by a bridge comprising 6 data sequences.  The algorithms are allowed to group the data sequences in the bridge into either of the clusters. This problem of clusters/blobs connected by a bridge is well-studied  \cite{scripps2006clustering, laplante2024spectral}, and has many applications in community detection \cite{granovetter1973strength}, image segmentation \cite{shi2000normalized}, etc. In Fig.~\ref{fig:bridge}, we can observe that the IA-SEQ-SPEC performs as well as SEQ-SPEC and SEQ-KMED in this example. For such problems, the SLINK clustering is not suitable because it tends to link two clusters through the data sequences in the bridge.

\textbf{Synthetic Dataset 3:}
We consider $40$ multivariate Gaussian distributed data sequences in $3$ clusters with mean vectors as shown in Fig.~\ref{fig:circleprob3clus} and covariance matrix $0.4I_2$. Fig.~\ref{fig:circle3clus} shows the performance of the various algorithms. For this simulation, we fix $B=30$ for the memory-efficient algorithms. We observe that the efficient algorithms IA-SEQ-SPEC, SEQ-SPEC-B, and IA-SEQ-SPEC-B perform comparably to SEQ-SPEC, and all of them outperform SPEC.

\textbf{Real-world Dataset 1:}
We consider the MovieLens dataset, which comprises user ratings for various movies and genres \cite{cantador2011second}. We consider $M=6$ users and group them into $K=4$ clusters. For clustering, we observe users’ ratings for the Comedy and Drama genres; hence, at each round, the algorithm observes $d=2$ dimensional samples. Fig.~\ref{fig:movielens} shows the performance of the various algorithms. We observe that SEQ-SPEC and SEQ-KMED show the best performance, followed by IA-SEQ-SPEC and then SEQ-SLINK.

\textbf{Real world Dataset 2:} 
We consider the Modified National Institute of Standards and Technology (MNIST) dataset \cite{lecun2002gradient}, which consists of images of digits from 0 to 9. 
We consider each digit as a cluster. We divide the set of all data points corresponding to digit $i$ into two subsets, where each of these 2 subsets is a data sequence corresponding to the $i^{th}$ digit. Hence, we have $20$ data sequences and $10$ clusters. Fig.~\ref{fig:MNIST} shows that our proposed algorithms, SEQ-SPEC, and IA-SEQ-SPEC perform better than SEQ-KMED and SEQ-SLINK.
\vspace{-0.3em}

\section{Conclusions} \label{sec: conclusion}
We proposed sequential spectral clustering algorithms: SEQ-SPEC, a lower-complexity approximate IA-SEQ-SPEC, and their memory-efficient versions, SEQ-SPEC-B and IA-SEQ-SPEC-B, for clustering data sequences. We showed that the SEQ-SPEC algorithm stops in finite time almost surely and is exponentially consistent. Through simulations on both synthetic and real-world datasets, we showed that both SEQ-SPEC and IA-SEQ-SPEC perform better than SPEC, SEQ-SLINK, and SEQ-KMED. In simulations, we observe that the IA-SEQ-SPEC and memory-efficient versions perform comparably to SEQ-SPEC. The proposed algorithms are able to work well for different types of cluster configurations, while SEQ-SLINK and SEQ-KMED fail for some configurations.

\balance
\bibliographystyle{IEEEtran}
\bibliography{ref}

@article{ng2001spectral,
  title={On spectral clustering: Analysis and an algorithm},
  author={Ng, Andrew and Jordan, Michael and Weiss, Yair},
  journal={Advances in neural information processing systems},
  volume={14},
  year={2001}
}

@article{JAIN2010651,
title = {Data clustering: 50 years beyond K-means},
journal = {Pattern Recognition Letters},
volume = {31},
number = {8},
pages = {651-666},
year = {2010},
note = {Award winning papers from the 19th International Conference on Pattern Recognition (ICPR)},
issn = {0167-8655},
doi = {https://doi.org/10.1016/j.patrec.2009.09.011},
url = {https://www.sciencedirect.com/science/article/pii/S0167865509002323},
author = {Anil K. Jain},
keywords = {Data clustering, User’s dilemma, Historical developments, Perspectives on clustering, King-Sun Fu prize},
abstract = {Organizing data into sensible groupings is one of the most fundamental modes of understanding and learning. As an example, a common scheme of scientific classification puts organisms into a system of ranked taxa: domain, kingdom, phylum, class, etc. Cluster analysis is the formal study of methods and algorithms for grouping, or clustering, objects according to measured or perceived intrinsic characteristics or similarity. Cluster analysis does not use category labels that tag objects with prior identifiers, i.e., class labels. The absence of category information distinguishes data clustering (unsupervised learning) from classification or discriminant analysis (supervised learning). The aim of clustering is to find structure in data and is therefore exploratory in nature. Clustering has a long and rich history in a variety of scientific fields. One of the most popular and simple clustering algorithms, K-means, was first published in 1955. In spite of the fact that K-means was proposed over 50 years ago and thousands of clustering algorithms have been published since then, K-means is still widely used. This speaks to the difficulty in designing a general purpose clustering algorithm and the ill-posed problem of clustering. We provide a brief overview of clustering, summarize well known clustering methods, discuss the major challenges and key issues in designing clustering algorithms, and point out some of the emerging and useful research directions, including semi-supervised clustering, ensemble clustering, simultaneous feature selection during data clustering, and large scale data clustering.}
}

@article{dhanjal2014efficient,
  title={Efficient eigen-updating for spectral graph clustering},
  author={Dhanjal, Charanpal and Gaudel, Romaric and Cl{\'e}men{\c{c}}on, St{\'e}phan},
  journal={Neurocomputing},
  volume={131},
  pages={440--452},
  year={2014},
  publisher={Elsevier}
}

@inproceedings{wang2020exponentially,
  title={On exponentially consistency of linkage-based hierarchical clustering algorithm using kolmogrov-smirnov distance},
  author={Wang, Tiexing and Liu, Yang and Chen, Biao},
  booktitle={ICASSP 2020-2020 IEEE International Conference on Acoustics, Speech and Signal Processing (ICASSP)},
  pages={3997--4001},
  year={2020},
  organization={IEEE}
}

@article{deng2024efficient,
  title={Efficient Eigen-Decomposition for Low-Rank Symmetric Matrices in Graph Signal Processing: An Incremental Approach},
  author={Deng, Qinwen and Zhang, Yangwen and Li, Mo and Zhang, Songyang and Ding, Zhi},
  journal={IEEE Transactions on Signal Processing},
  year={2024},
  publisher={IEEE}
}

@article{singh2025exponentially,
  title={Exponentially Consistent Nonparametric Linkage-Based Clustering of Data Sequences},
  author={Singh, Bhupender and Rajagopalan, Ananth Ram and Bhashyam, Srikrishna},
  journal={IEEE Transactions on Signal Processing},
  year={2025},
  publisher={IEEE}
}

@book{maccuish2010clustering,
  title={Clustering in bioinformatics and drug discovery},
  author={MacCuish, John David and MacCuish, Norah E},
  year={2010},
  publisher={CRC Press}
}

@article{kaur2014k,
  title={K-Medoid clustering algorithm-a review},
  author={Kaur, Noor Kamal and Kaur, Usvir and Singh, Dheerendra},
  journal={Int. J. Comput. Appl. Technol},
  volume={1},
  number={1},
  pages={42--45},
  year={2014}
}

@article{rohlf198212,
  title={12 Single-link clustering algorithms},
  author={Rohlf, F James},
  journal={Handbook of statistics},
  volume={2},
  pages={267--284},
  year={1982},
  publisher={Elsevier}
}

@article{ahmed2020k,
  title={The k-means algorithm: A comprehensive survey and performance evaluation},
  author={Ahmed, Mohiuddin and Seraj, Raihan and Islam, Syed Mohammed Shamsul},
  journal={Electronics},
  volume={9},
  number={8},
  pages={1295},
  year={2020},
  publisher={MDPI}
}

@article{wang2019k,
  title={K-medoids clustering of data sequences with composite distributions},
  author={Wang, Tiexing and Li, Qunwei and Bucci, Donald J and Liang, Yingbin and Chen, Biao and Varshney, Pramod K},
  journal={IEEE Transactions on Signal Processing},
  volume={67},
  number={8},
  pages={2093--2106},
  year={2019},
  publisher={IEEE}
}

@article{sreenivasan2023nonparametric,
  title={Nonparametric sequential clustering of data streams with composite distributions},
  author={Sreenivasan, Sreeram C and Bhashyam, Srikrishna},
  journal={Signal Processing},
  volume={204},
  pages={108827},
  year={2023},
  publisher={Elsevier}
}

@article{malik2001contour,
  title={Contour and texture analysis for image segmentation},
  author={Malik, Jitendra and Belongie, Serge and Leung, Thomas and Shi, Jianbo},
  journal={International journal of computer vision},
  volume={43},
  number={1},
  pages={7--27},
  year={2001},
  publisher={Springer}
}

@article{bach2006learning,
  title={Learning spectral clustering, with application to speech separation},
  author={Bach, Francis R and Jordan, Michael I},
  journal={The Journal of Machine Learning Research},
  volume={7},
  pages={1963--2001},
  year={2006},
  publisher={JMLR. org}
}

@article{gretton2012kernel,
  title={A kernel two-sample test},
  author={Gretton, Arthur and Borgwardt, Karsten M and Rasch, Malte J and Sch{\"o}lkopf, Bernhard and Smola, Alexander},
  journal={The journal of machine learning research},
  volume={13},
  number={1},
  pages={723--773},
  year={2012},
  publisher={JMLR. org}
}

@article{davis1970rotation,
  title={The rotation of eigenvectors by a perturbation. III},
  author={Davis, Chandler and Kahan, William Morton},
  journal={SIAM Journal on Numerical Analysis},
  volume={7},
  number={1},
  pages={1--46},
  year={1970},
  publisher={SIAM}
}

@book{stewart1990matrix,
  title={Matrix Perturbation Theory},
  author={Stewart, G.W. and Sun, J.G.},
  isbn={9781493301997},
  series={Computer Science and Scientific Computing},
  url={https://books.google.co.in/books?id=bIYEogEACAAJ},
  year={1990},
  publisher={Elsevier Health Sciences}
}

@book{horn2012matrix,
  title={Matrix analysis},
  author={Horn, Roger A and Johnson, Charles R},
  year={2012},
  publisher={Cambridge university press}
}

@inproceedings{scripps2006clustering,
  title={Clustering in the presence of bridge-nodes},
  author={Scripps, Jerry and Tan, Pang-Ning},
  booktitle={Proceedings of the 2006 SIAM International Conference on Data Mining},
  pages={270--281},
  year={2006},
  organization={SIAM}
}

@article{laplante2024spectral,
  title={Spectral Bridges},
  author={Laplante, F{\'e}lix and Ambroise, Christophe},
  journal={arXiv preprint arXiv:2407.07430},
  year={2024}
}

@article{granovetter1973strength,
  title={The strength of weak ties},
  author={Granovetter, Mark S},
  journal={American journal of sociology},
  volume={78},
  number={6},
  pages={1360--1380},
  year={1973},
  publisher={University of Chicago Press}
}

@article{shi2000normalized,
  title={Normalized cuts and image segmentation},
  author={Shi, Jianbo and Malik, Jitendra},
  journal={IEEE Transactions on pattern analysis and machine intelligence},
  volume={22},
  number={8},
  pages={888--905},
  year={2000},
  publisher={Ieee}
}

@article{lecun2002gradient,
  title={Gradient-based learning applied to document recognition},
  author={LeCun, Yann and Bottou, L{\'e}on and Bengio, Yoshua and Haffner, Patrick},
  journal={Proceedings of the IEEE},
  volume={86},
  number={11},
  pages={2278--2324},
  year={2002},
  publisher={Ieee}
}

@article{yang2024optimal,
  title={Optimal clustering with bandit feedback},
  author={Yang, Junwen and Zhong, Zixin and Tan, Vincent YF},
  journal={Journal of Machine Learning Research},
  volume={25},
  number={186},
  pages={1--54},
  year={2024}
}

@article{zaremba2013b,
  title={B-test: A non-parametric, low variance kernel two-sample test},
  author={Zaremba, Wojciech and Gretton, Arthur and Blaschko, Matthew},
  journal={Advances in neural information processing systems},
  volume={26},
  year={2013}
}

@inproceedings{cantador2011second,
  title={Second workshop on information heterogeneity and fusion in recommender systems (HetRec2011)},
  author={Cantador, Iv{\'a}n and Brusilovsky, Peter and Kuflik, Tsvi},
  booktitle={Proceedings of the fifth ACM conference on Recommender systems},
  pages={387--388},
  year={2011}
}

@inproceedings{liberty2016algorithm,
  title={An algorithm for online k-means clustering},
  author={Liberty, Edo and Sriharsha, Ram and Sviridenko, Maxim},
  booktitle={2016 Proceedings of the eighteenth workshop on algorithm engineering and experiments (ALENEX)},
  pages={81--89},
  year={2016},
  organization={SIAM}
}



\newpage
\onecolumn
\appendices
\begin{center}
    \Large \bfseries Technical Appendix
\end{center}
\section{Proof of Exponential consistency of FSS SPEC (Theorem 1)}
\begin{proof}
    Let us define the event $F \coloneqq \left\{ \forall i, j \in [n], i<j, \left| \hat{d}_{ij}(t) - d_{ij} \right|<\epsilon \right\}$. Assume that the event $F$ holds. Consider $\epsilon < \frac{m}{2c}$, where $m$ and $c$ is same as defined in the beginning of Section \ref{sec: techlemcla}.
    Now we have the results in Claims \ref{claim: delta assump}, \ref{claim: epsilon1 assump}, \ref{claim: epsilon2 assump}, and \ref{claim: C assump}. Now we invoke Lemma \ref{lemma: ngTheorem2}. 
    
    First, we derive the sufficient condition to apply Lemma \ref{lemma: ngTheorem2}. 
    \begin{equation}
        \delta - A_5\epsilon > \boldsymbol{\epsilon^{'}}(2+\sqrt{2}), 
    \end{equation}
    where $\boldsymbol{\epsilon^{'}} = \sqrt{K(K-1)(\epsilon_1+A_6\epsilon)+K(\epsilon_2+A_7\epsilon)^2}$ with $A_5, A_6, A_7$ is as defined in Claims \ref{claim: delta assump}, \ref{claim: epsilon1 assump}, \ref{claim: epsilon2 assump} respectively.
    Since we consider $\epsilon<\frac{m}{2c}$, we set $\epsilon_T = \frac{m}{2c}$ in Claim \ref{claim: boldempsilon bound} and by using Claim \ref{claim: boldempsilon bound}, we can say that it sufficient to ensure the following condition. 
    \begin{equation}
        \delta - A_5\epsilon > \left[ \boldsymbol{\epsilon} + \Tilde{C}\left(\frac{m}{2c}\right)\epsilon \right](2+\sqrt{2}),
    \end{equation}
    where $\boldsymbol{\epsilon} = \sqrt{K(K-1)\epsilon_1+K\epsilon_2}$ and $\Tilde{C}(\cdot)$ is as defined in Claim \ref{claim: boldempsilon bound}.
    This on simplification, we get the sufficient condition as follows.
    \begin{equation}
        \epsilon < \frac{\delta - (2+\sqrt{2})\boldsymbol{\epsilon}}{C_3} \ \ \text{where, } C_3 = \Tilde{C}\left( \frac{m}{2c} \right)(2+\sqrt{2}) + A_5.
    \end{equation}
    Here, we assume $\delta>(2+\sqrt{2})\boldsymbol{\epsilon}$. Hence, to use Lemma \ref{lemma: ngTheorem2}, $\epsilon$ should satisfy the following condition.
    \begin{equation}
        \epsilon < w_0 \ \ \text{where, } w_0 = \min\left[ \frac{\delta - (2+\sqrt{2})\boldsymbol{\epsilon}}{C_3}, \frac{m}{2c} \right].
    \end{equation}
     
    Let's assume, for all $k \in [K]$, for all $i \in D_k$, $\left\| \hat{Y}_i(t)-r_k \right\|<r$ for some $r>0$. For the K-Means algorithm on the set of $ K$-dimensional data points $\left\{\hat{Y}_i(t), i \in [M]\right\}$ to give the correct clustering, it should satisfy the condition that the maximum intra-cluster distance is less than the minimum inter-cluster distance. Since $r_k$'s are orthonormal, the distance between any two $r_k$'s is $\sqrt{2}$. Also, estimated spectral points corresponding to the data sequences whose true cluster is $k$, lie inside the ball of radius $r$ around the orthonormal vector $r_k$. Hence, to get correct clustering, it is enough to satisfy the condition $2r<\sqrt{2}-2r$, where $2r$ is the maximum intra-cluster distance and $\sqrt{2}-2r$ is the minimum inter-cluster distance. Hence, we choose $r<0.35$.
    Therefore $\left\| \hat{Y}_i(t)-r_k \right\|<r$ for all $i \in [M]$ ensures the correct clustering. 

    Hence it is sufficient to satisfy the condition that $\frac{1}{M} \sum_{k \in [K]} \sum_{i \in D_k}\left\| \hat{Y}_i-r_k \right\|< \frac{r}{M}$. Now on using Lemma \ref{lemma: ngTheorem2}, we can say that to get correct clusters, it is sufficient to satisfy the following condition.
    \begin{equation}
        4(C+A_8\epsilon)(4+2\sqrt{K})^2\frac{\boldsymbol{(\epsilon^{'}})^2}{((\delta-A_5\epsilon)-\sqrt{2}\boldsymbol{\epsilon^{'}})^2} < \frac{r}{M},
    \end{equation}
    where $A_8$ is defined as in Claim \ref{claim: C assump}. We also have that $\delta-A_5\epsilon>(2+\sqrt{2})\boldsymbol{\epsilon^{'}}$. Therefore we have $(\delta-A_5\epsilon)-\sqrt{2}\boldsymbol{\epsilon^{'}}>2\boldsymbol{\epsilon^{'}}$. Hence, we get the sufficient condition as follows.
    \begin{equation}
    \begin{aligned}
        (C+A_8\epsilon)(4+2\sqrt{K})^2 &< \frac{r}{M} \\
        \left[C + A_8\epsilon\right] &< \frac{r}{M(4+2\sqrt{K})^2} \\
        \epsilon &< \frac{1}{A_8} \left[ \frac{r}{M(4+2\sqrt{K})^2} - C \right] =w_1 \text{(say)}.
        \end{aligned}
    \end{equation}
    We assume that $MC(4+2\sqrt{K})^2<r$. Hence, the clustering algorithm to give the correct output, it should satisfy the condition that $\epsilon< \epsilon_0$, where $\epsilon_0 \coloneqq \min\left\{w_0, w_1\right\}$. Formally, we can say that the algorithm gives the correct clustering if the following event holds. 
    \begin{equation}
        \left\{ \forall i, j \in [n], i<j \left| \hat{d}_{ij}(t) - d_{ij} \right| < \epsilon_0 \right\}.
    \end{equation}
    Hence, the probability of error at time $t$, which we denote by $\mathbb{P}[E_t]$ can be upper bounded as follows.
    \begin{equation}
        \mathbb{P}[E_t] \leq \mathbb{P}\left[ \exists i, j \in [M], i<j, \left| \hat{d}_{ij}(t) - d_{ij} \right| > \epsilon_0 \right]
    \end{equation}
    Let's define $t_0 \coloneqq \frac{\alpha_2}{\epsilon_0^2}$, where $\alpha_2$ is as defined in Claim \ref{claim: mmd conc bound}. Now, from Claim \ref{claim: mmd conc bound}, for all $t\geq t_0$, we have,
    \begin{equation}
        \mathbb{P}[E_t] \leq M^2\exp{\left( \frac{-\epsilon_0^2t}{16B} \right)}.
    \end{equation}
    Hence proved.
\end{proof}

\section{Proof that SEQ-SPEC stops in finite time (Theorem 2)}

\begin{proof}
    Now, we upper bound the probability that the stopping time is greater than some $t \in \mathbb{N}$ as follows.
    \begin{align}
        \mathbb{P}[N>t] &= \mathbb{P}\left[\{N>t\}\cap E_t\right] + \mathbb{P}\left[\{N>t\}\cap E_t^C\right] \\
        &\leq \mathbb{P}\left[E_t\right] + \mathbb{P}\left[\{N>t\}\cap E_t^C\right]
    \end{align}
    We define $T_t \coloneqq \arcsin\left(\frac{C}{\sqrt{t}}\right)$. The event that the stopping time is greater than $t$ is the same as the event that the stopping condition of the SEQ-SPEC is not met till time $t$. Hence, we further upper bound as follows.
    \begin{align}
        &= \mathbb{P}\left[E_t\right] + \mathbb{P}\left[\{\Gamma_{t_1}<T_{t_1}, \forall t_1\leq t\}\cap E_t^C\right] \\
        &\leq \mathbb{P}\left[E_t\right] + \mathbb{P}\left[\{\Gamma_{t}<T_{t}\}\cap E_t^C\right] \\
        &= \mathbb{P}\left[E_t\right] + \mathbb{P}\left[\left\{\min_{k\neq l}\min_{i \in \hat{D}_k(t)}\min_{j \in \hat{D}_l(t)}\left\| \hat{Y}_i(t) - \hat{Y}_j(t) \right\|<T_{t}\right\}\cap E_t^C\right]         
    \end{align}
    Under the event that the estimated clusters are correct at time $t$, that is $E_t^C$, we can write $\hat{D}_k(t) = D_k$ for all $k \in [K]$. Therefore, we have the following.
    \begin{align}
        &\leq \mathbb{P}\left[E_t\right] + \mathbb{P}\left[\left\{\min_{k\neq l}\min_{i \in D_k}\min_{j \in D_l}\left\| \hat{Y}_i(t) - \hat{Y}_j(t) \right\|<T_{t}\right\}\right] \\
        &\leq \mathbb{P}\left[E_t\right] + \mathbb{P}\left[\bigcup_{k\neq l}\bigcup_{i \in D_k}\bigcup_{j \in D_l}\left\{\left\| \hat{Y}_i(t) - \hat{Y}_j(t) \right\|<T_{t}\right\}\right] \\
        &\leq \mathbb{P}\left[E_t\right] + \sum_{k\neq l}\sum_{i \in D_k}\sum_{j \in D_l}\mathbb{P}\left[\left\{\left\| \hat{Y}_i(t) - \hat{Y}_j(t) \right\|<T_{t}\right\}\right] 
    \end{align}
    From Lemma \ref{theorem: expconsfss}, there exist $t_0\in \mathbb{N}$ such that for all $t\geq t_0$, we have $\mathbb{P}\left[E_t\right] \leq M^2\exp{\left( -\alpha_0 t \right)}$ for $\alpha_0>0$. Hence, we have for all $t\geq t_0$, 
    \begin{equation} \label{Th1: eq 1}
        \mathbb{P}[N>t] \leq M^2\exp{\left( -\alpha_0 t \right)} + \sum_{k\neq l}\sum_{i \in D_k}\sum_{j \in D_l}\mathbb{P}\left[\left\{\left\| \hat{Y}_i(t) - \hat{Y}_j(t) \right\|<T_{t}\right\}\right] 
    \end{equation}
    We say $d_H$ as the true minimum inter-cluster distance in the spectral domain, and it is defined as $d_H \coloneqq \min_{k\neq l} \min_{i \in D_k} \min_{j \in D_l} \left\| Y_i - Y_j \right\|$. So, we have $\left\| Y_i - Y_j \right\|\geq d_H$ for all $i$ and $j$ from different clusters. Hence we can upper bound the term  $\mathbb{P}\left[\left\{\left\| \hat{Y}_i(t) - \hat{Y}_j(t) \right\|<T_{t}\right\}\right]$ as follows.
    \begin{align}
        \mathbb{P}\left[ \left\| \hat{Y}_i(t) - \hat{Y}_j(t) \right\| < T_t \right] &= \mathbb{P}\left[ \left\| Y_i - Y_j \right\| - \left\| \hat{Y}_i(t) - \hat{Y}_j(t) \right\| > \left\| Y_i - Y_j \right\| - T_t \right] \\
        &\leq \mathbb{P}\left[ \left|\left\| Y_i - Y_j \right\| - \left\| \hat{Y}_i(t) - \hat{Y}_j(t) \right\|\right| > d_H - T_t \right].
    \end{align}
    We have $T_t = \arcsin\left( \frac{C}{\sqrt{t}} \right)$. We define $t_1$ such that $T_{t_1} = \frac{d_H}{2}$, that is $t_1 \coloneqq \left[ \frac{C}{\sin{d_H}} \right]^2$. For all $t>t_1$, we have $d_H-T_t > \frac{d_H}{2}$. Therefore, we can say $d_H-T_t > \delta d_H$ for some $\delta \in \left( 0, \frac{1}{2} \right)$. Hence for all $t>\max\{t_0, t_1\}$, we can upper bound as follows.
    \begin{equation}
        \mathbb{P}\left[ \left\| \hat{Y}_i(t) - \hat{Y}_j(t) \right\| < T_t \right] \leq \mathbb{P}\left[ \left|\left\| Y_i - Y_j \right\| - \left\| \hat{Y}_i(t) - \hat{Y}_j(t) \right\|\right| > \delta d_H \right].
    \end{equation}
    It can be geometrically verified that $\left|\left\| Y_i - Y_j \right\| - \left\| \hat{Y}_i(t) - \hat{Y}_j(t) \right\|\right| \leq \left\| \hat{Y}_i(t) - Y_i \right\| + \left\|\hat{Y}_j(t) - Y_j\right\|$. Now, for all $t > \max\{t_0, t_1\}, $we get, 
    \begin{align}
        \mathbb{P}\left[ \left\| \hat{Y}_i(t) - \hat{Y}_j(t) \right\| < T_t \right] &\leq \mathbb{P}\left[ \left\|\hat{Y}_i(t) - Y_i\right\| +  \left\|\hat{Y}_j(t) - Y_j\right\| > \delta d_H \right] \\
        &\leq \mathbb{P}\left[ \left\{\left\|\hat{Y}_i(t) - Y_i\right\| > \frac{\delta d_H}{2} \right\} \bigcup \left\{\left\|\hat{Y}_j(t) - Y_j\right\| > \frac{\delta d_H}{2} \right\} \right] \\
    \end{align}
    Now to use Lemma \ref{lemma: specpt bound}, we choose $\delta \in \left( 0, \frac{1}{2} \right)$ small enough such that $\frac{\delta d_H}{2} <  A_1\arcsin\left( \alpha_1 \epsilon_3\right)$ where $A_1, \alpha_1, \epsilon_3$ are as defined in Lemma \ref{lemma: specpt bound}. Now, by using Lemma \ref{lemma: specpt bound}, we can upper bound as follows for all $t \geq \max\{t_0, t_1\}$.
    \begin{align}
        \mathbb{P}\left[ \left\| \hat{Y}_i(t) - \hat{Y}_j(t) \right\| < T_t \right] &\leq \mathbb{P}\left[ \exists i, j \in [n], i<j, \left| \hat{d}_{ij}(t) - d_{ij} \right| > \eta \right], \ \ \text{where, } \eta= \frac{1}{\alpha_1}\sin\left( \frac{\delta d_H }{2A_1} \right)
    \end{align}
    Now, consider $t_2$ as defined in Claim \ref{claim: mmd conc bound}. On using Claim \ref{claim: mmd conc bound}, for all $t\geq \{t_0, t_1, t_2\}$, we get, 
    \begin{align}
        \mathbb{P}\left[ \left\| \hat{Y}_i(t) - \hat{Y}_j(t) \right\| < T_t \right] &\leq M^2 \exp{\left( \frac{-\eta^2t}{16B} \right)}.
    \end{align}
    Now on using the above derived upper bound for $\mathbb{P}\left[ \left\| \hat{Y}_i(t) - \hat{Y}_j(t) \right\| < T_t \right]$ on equation \eqref{Th1: eq 1}, we get for all $t\geq \max\left\{ t_0, t_1, t_2 \right\}$, 
    \begin{align}
        \mathbb{P}\left[N>t\right] &\leq M^2\exp{\left( -\alpha_0 t \right)} + \sum_{k\neq l}\sum_{i \in D_k}\sum_{j \in D_l} M^2 \exp{\left( \frac{-\eta^2t}{16B} \right)} \\
        &\leq M^2\exp{\left( -\alpha_0 t \right)} + K^2M^4 \exp{\left( \frac{-\eta^2t}{16B} \right)}. \label{th1: eq: 3}
    \end{align}
    Therefore from the above derived upper bound for $\mathbb{P}\left[N>t\right]$, we can say $\lim_{t\rightarrow \infty} \mathbb{P}[N>t] = 0$. Hence proved. 
\end{proof}

\section{Proof of Exponential consistency of SEQ-SPEC (Theorem 3)}
To prove Theorem \ref{theorem: expconsSEQ}, first we show that SEQ-SPEC is universally consistent, i.e., the probability of error goes to 0 as $C$ tends to $\infty$ (Theorem \ref{theorem: universalconsis}). Then we show that $\frac{N}{C^2}$ converges to a problem dependent constant (Theorem \ref{unif}). We use Theorems \ref{theorem: universalconsis} and \ref{unif} to prove Theorem \ref{theorem: expconsSEQ}.

\begin{theorem} \label{theorem: universalconsis}
    The proposed sequential spectral clustering algorithm is universally consistent, that is, $\lim_{C\rightarrow\infty}\mathbb{P}[E_N] = 0$.
\end{theorem}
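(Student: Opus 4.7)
The plan is to exploit two facts: (i) the stopping threshold $\arcsin(C/\sqrt{t})$ is only well-defined as a real number when $C/\sqrt{t}\le 1$, which forces $N\ge \lceil C^2\rceil$ almost surely; and (ii) Theorem \ref{theorem: expconsfss} already gives an exponential-in-$t$ decay $\mathbb{P}[E_t]\le M^2e^{-\alpha_0 t}$ for every $t>t_0$. Combining these, the error probability at the stopping time can be forced arbitrarily small by enlarging $C$.

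First I would observe that whenever $C/\sqrt{t}>1$ the stopping threshold is undefined (and under any sensible extension unattainable by $\Gamma_t$, since the $\hat{Y}_i(t)$ are unit vectors and the threshold effectively saturates at $\pi/2$), so the algorithm cannot terminate at such a $t$. This gives the deterministic bound $N\ge \lceil C^2\rceil$. Consequently, decomposing the error event over the admissible stopping times,
\begin{equation*}
\mathbb{P}[E_N]\;=\;\sum_{t\ge \lceil C^2\rceil}\mathbb{P}\bigl[E_t\cap\{N=t\}\bigr]\;\le\;\sum_{t\ge \lceil C^2\rceil}\mathbb{P}[E_t].
\end{equation*}

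Next, for $C$ large enough that $\lceil C^2\rceil>t_0$, Theorem \ref{theorem: expconsfss} applies to every term of the tail sum, so
\begin{equation*}
\mathbb{P}[E_N]\;\le\;\sum_{t\ge \lceil C^2\rceil} M^2 e^{-\alpha_0 t}\;=\;\frac{M^2 e^{-\alpha_0 \lceil C^2\rceil}}{1-e^{-\alpha_0}}\;\xrightarrow{C\to\infty}\;0,
\end{equation*}
which yields $\lim_{C\to\infty}\mathbb{P}[E_N]=0$ and hence universal consistency.

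The only mildly delicate point, and the one I expect to spell out carefully, is the deterministic lower bound $N\ge\lceil C^2\rceil$: one must argue that Algorithm \ref{Algo:SEQ-SPEC} genuinely does not terminate while the threshold is ill-defined (equivalently, while $C/\sqrt{t}>1$). Once that convention is pinned down, the remainder of the argument reduces to a union bound over $t$ combined with the FSS exponential concentration already established, so no new technical machinery is required beyond what Theorem \ref{theorem: expconsfss} provides.
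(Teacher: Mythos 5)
Your proof is correct, but it takes a genuinely different and substantially shorter route than the paper. The paper splits $\sum_t\mathbb{P}[N=t,E_t]$ at $t_M=(C+t_0)^2$ and, for the early range $t\in[C^2,t_M]$, bounds $\mathbb{P}[\{N=t\}\cap E_t]$ by arguing that under the error event the stopping condition forces two spectral points from the same true cluster to be separated by more than $T_t$, which is then controlled via the Davis--Kahan-based perturbation bound (Lemma \ref{lemma: specpt bound}) and the MMD concentration (Claim \ref{claim: mmd conc bound}); only the tail $t>t_M$ is handled by the crude bound $\mathbb{P}[N=t,E_t]\le\mathbb{P}[E_t]$ and Theorem \ref{theorem: expconsfss}. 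You observe that the crude bound already suffices for \emph{every} admissible $t$: since the algorithm cannot stop before $t_L=C^2$ (the same convention the paper itself invokes), the sum starts at $\lceil C^2\rceil\to\infty$, and the exponential FSS bound is summable, so the whole sum is $O(e^{-\alpha_0 C^2})$. This is a clean simplification; it even gives a sharper single-term bound than the paper's equation \eqref{th2: eq: 4}, and would feed into the proof of Theorem \ref{theorem: expconsSEQ} with exponent $\alpha_0$ rather than $\min\{(\delta'')^2/(64B\alpha_1^2),\alpha_0\}$. What the paper's longer route buys is an estimate that explicitly tracks the interaction between the stopping rule and the error event, which would be needed if the FSS bound were not exponentially summable or did not hold uniformly beyond a $C$-independent $t_0$; here it does (note $t_0$ in Theorem \ref{theorem: expconsfss} depends on the problem instance but not on the threshold constant $C$), so your argument closes.

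One small inaccuracy in your justification of $N\ge\lceil C^2\rceil$: the claim that the threshold "saturates at $\pi/2$" and is therefore unattainable by $\Gamma_t$ because the $\hat{Y}_i(t)$ are unit vectors is not right as stated --- two unit vectors in $\mathbb{R}^K$ can be at distance up to $2>\pi/2$, so $\Gamma_t\ge\pi/2$ is attainable. The correct (and sufficient) statement is simply the convention, adopted explicitly in the paper's own proof, that the stopping test is not satisfied while $C/\sqrt{t}>1$ renders $\arcsin(C/\sqrt{t})$ undefined; with that convention pinned down, $\mathbb{P}[N=t]=0$ for $t<C^2$ and the rest of your argument goes through.
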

\begin{proof}
    We define $t_M \coloneqq \left(C+t_0\right)^2$, where $t_0$ is as defined in Lemma \ref{theorem: expconsfss}. Now, we can upper bound the error event $E_N$ as follows.
    \begin{align}
        \mathbb{P}[E_N] &= \sum_{t=1}^\infty \mathbb{P}[N=t, E_t] \\
        &= \sum_{t=1}^{t_M} \mathbb{P}[N=t, E_t] + \sum_{t>t_M}^{\infty} \mathbb{P}[N=t, E_t] \\
        &\leq \sum_{t=1}^{t_M} \mathbb{P}[N=t, E_t] + \sum_{t>t_M}^{\infty} \mathbb{P}[E_t].
    \end{align}
    Since $t_M > t_0$, we apply Lemma \ref{theorem: expconsfss} on the second term of the above expression. 
    \begin{align}
        &\leq \sum_{t=1}^{t_M} \mathbb{P}[N=t, E_t] + \sum_{t>t_M}^{\infty} n^2\exp{\left(-\alpha_0 t\right)}  \\
        &= \sum_{t=1}^{t_M} \mathbb{P}[N=t, E_t] + \frac{n^2 \exp{\left( -\alpha_0(t_M+1) \right)}}{1 - \exp{\left(-\alpha_0\right)}} \\
        &\leq \sum_{t=1}^{t_M} \mathbb{P}[N=t, E_t] + \frac{n^2 \exp{\left( -\alpha_0 C^2 \right)}}{1 - \exp{\left(-\alpha_0\right)}} \ \ \left(\because t_M > C^2\right). \label{th2: eq: 3}
    \end{align}
    Now we upper bound the first term of the above expression. The threshold in our stopping rule is $T_t = \arcsin\left( \frac{C}{\sqrt{t}} \right)$. Since the domain of $\arcsin{(\cdot)}$ is $[-1, 1]$, the algorithm will stop only for time $t\geq t_L$, where $t_L=C^2$. Hence $\mathbb{P}[N=t] = 0$ for all $t< t_L$. Therefore, we can write as follows.
    \begin{align}
        \sum_{t=1}^{t_M} \mathbb{P}[N=t, E_t] &= \sum_{t=t_L}^{t_M} \mathbb{P}[N=t, E_t]\\
        &= \sum_{t=t_L}^{t_M} \mathbb{P}\left[\left\{ \Gamma_s < T_s, \forall s<t \right\}\bigcap\left\{ \Gamma_t>T_t \right\}\bigcap E_t\right] \\
        &\leq \sum_{t=t_L}^{t_M} \mathbb{P}\left[\left\{ \Gamma_t>T_t \right\}\bigcap E_t\right] \\
        &= \sum_{t=t_L}^{t_M}\mathbb{P}\left[ \left\{ \min_{k\neq l} \min_{i_1 \in \hat{D}_k(t)} \min_{j_1 \in \hat{D}_l(t)} \left\| \hat{Y}_{i_1}(t) - \hat{Y}_{j_1}(t) \right\| > T_t \right\}\bigcap E_t \right] 
    \end{align}
    Under the error event at time $t$, $E_t$, we can say for some $k \in [K]$ there exist $i, j \in D_k$ such that $i \in \hat{D}_p$ and $j \in \hat{D}_q$ for some $p, q \in [K]$ with $p\neq q$. Also, we have $\left\| \hat{Y}_i(t) - \hat{Y}_j(t) \right\| \geq \min_{k\neq l} \min_{i_1 \in \hat{D}_k(t)} \min_{j_1 \in \hat{D}_l(t)} \left\| Y_{i_1}(t) - Y_{j_1}(t) \right\|$. Therefore, we can further upper bound as follows.
    \begin{align}
        &\leq \sum_{t=t_L}^{t_M}\mathbb{P}\left[ \left\| \hat{Y}_{i}(t) - \hat{Y}_{j}(t) \right\| > T_t \right]  \\
        &= \sum_{t=t_L}^{t_M}\mathbb{P}\left[ \left\| \hat{Y}_{i}(t) - \hat{Y}_{j}(t) \right\| - \left\| Y_{i} - Y_{j} \right\| > T_t - \left\| Y_{i} - Y_{j} \right\| \right] \\
        &= \sum_{t=t_L}^{t_M}\mathbb{P}\left[ \left|\left\| \hat{Y}_{i}(t) - \hat{Y}_{j}(t) \right\| - \left\| Y_{i} - Y_{j} \right\|\right| > T_t - \left\| Y_{i} - Y_{j} \right\| \right]
    \end{align}
    We use $d_L$ to denote the maximum intra cluster distance and is formally defined as $d_L\coloneqq \max_{k \in [K]} \max_{i, j \in D_k, i \neq j}\left\| Y_{i} - Y_{j} \right\|$. Since $i, j$ are from the same clusters, we can say $\left\| Y_{i} - Y_{j} \right\| \leq d_H$. Hence we can further upper bound as follow.
    \begin{equation} \label{th2: eq: 1}
        \sum_{t=1}^{t_M} \mathbb{P}[N=t, E_t] \leq \sum_{t=t_L}^{t_M}\mathbb{P}\left[\left|\left\| \hat{Y}_{i}(t) - \hat{Y}_{j}(t) \right\| - \left\| Y_{i} - Y_{j} \right\|\right| > T_t - d_L \right].
    \end{equation}
    
    Let us assume $T_t-d_L > \delta d_L$ for some $\delta \in (0, 1)$ and $t\leq t_M$. Now we derive a condition on $C$ for this assumption to be satisfied.
    \begin{align}
        T_t-d_L &> \delta d_L \\
        (1-\delta)\arcsin\left( \frac{C}{\sqrt{t}} \right) &> d_L \\ 
        C &> \sqrt{t}\sin\left( \frac{d_L}{(1-\delta)} \right).
    \end{align}
    Since $t\leq t_M$ in equation \eqref{th2: eq: 1}, it is sufficient to satisfy the following condition.
    \begin{align}
        C &> \sqrt{t_M}\sin\left( \frac{d_L}{(1-\delta)} \right) \\
        C &> (C+t_0)\sin\left( \frac{d_L}{(1-\delta)} \right) \ \ (\because t_M = (t_0+C)^2)\\
        C\left(1-\sin\left( \frac{d_L}{(1-\delta)} \right) \right) &> t_0 \sin\left( \frac{d_L}{(1-\delta)} \right) \\
        C &> C_1 \ \ \text{where, } C_1 = \frac{t_0 \sin\left( \frac{d_L}{(1-\delta)} \right)}{1-\sin\left( \frac{d_L}{(1-\delta)} \right)}.
    \end{align}
    Therefore, the equation \eqref{th2: eq: 1} can be further upper bounded as, 
    \begin{align}
        \sum_{t=1}^{t_M} \mathbb{P}[N=t, E_t] &\leq \sum_{t=t_L}^{t_M}\mathbb{P}\left[\left|\left\| \hat{Y}_{i}(t) - \hat{Y}_{j}(t) \right\| - \left\| Y_{i} - Y_{j} \right\|\right| > \delta T_t \right] \\
        &\leq \sum_{t=t_L}^{t_M}\mathbb{P}\left[\left|\left\| \hat{Y}_{i}(t) - \hat{Y}_{j}(t) \right\| - \left\| Y_{i} - Y_{j} \right\|\right| > \delta T_{t_M} \right] \ \ \left(\because T_t \geq T_{t_M} \forall t\leq t_M \right)
    \end{align}
    Let us define $\delta^{'}$ such that $\delta T_{t_M} = 2A_1\arcsin\left( \frac{\delta^{'}C}{\sqrt{t_M}} \right)$, that is $\delta^{'} =  \frac{\sqrt{t_M}}{C}\sin\left( \frac{\delta}{2A_1} \arcsin\left( \frac{C}{\sqrt{t_M}} \right) \right)$, for some $A_1>0$.
    Now we can upper bound as follows.
    \begin{equation} \label{th2: eq: 2}
        \sum_{t=1}^{t_M} \mathbb{P}[N=t, E_t] \leq \sum_{t=t_L}^{t_M}\mathbb{P}\left[ \left|\left\| \hat{Y}_{i}(t) - \hat{Y}_{j}(t) \right\| - \left\| Y_{i} - Y_{j} \right\|\right| > 2A_1\arcsin\left( \frac{\delta^{'}C}{\sqrt{t_M}} \right) \right].
    \end{equation}
    Now we will derive a lower bound for $\delta^{'}$ which independent of $C$ and depends only on $\delta$.
    \begin{align}
        \delta^{'} &= \frac{\sqrt{t_M}}{C}\sin\left( \delta \arcsin\left( \frac{C}{\sqrt{t_M}} \right) \right) \\
        &= \frac{C+t_0}{C}\sin\left( \delta \arcsin\left( \frac{C}{C+t_0} \right) \right) \hspace{2cm} (\because t_M = (t_0+C)^2) \\
        &> \sin\left( \delta \arcsin\left( \frac{1}{1+\frac{t_0}{C}} \right) \right) \hspace{2cm} \left( \because \frac{C+t_0}{C}>1 \right)\\
    \end{align}
    Let us define $C_2 = t_0$. Therefore, for $C>\max\{C_1, C_2\}$ we further lower bound the above expression as follows.
    \begin{align}
        \delta^{'} &> \sin\left( \delta \arcsin\left( \frac{1}{2} \right) \right) \\
        &= \sin\left(\delta \frac{\pi}{6}\right) = \delta^{''} \text{(say)}.
    \end{align}
    Since $\arcsin(\cdot)$ is a monotonically increasing function, for all $C>\max\{C_1, C_2\}$, we can upper bound the equation \eqref{th2: eq: 2} as follows.
    \begin{equation}
        \sum_{t=1}^{t_M} \mathbb{P}[N=t, E_t] \leq \sum_{t=t_L}^{t_M}\mathbb{P}\left[ \left|\left\| \hat{Y}_{i}(t) - \hat{Y}_{j}(t) \right\| - \left\| Y_{i} - Y_{j} \right\|\right| > 2A_1\arcsin\left( \frac{\delta^{''}C}{\sqrt{t_M}} \right) \right].
    \end{equation}
    Let us define $C^{'} = \frac{C}{\alpha_1}$ for some $\alpha_1>0$. We have the following.
    \begin{align}
        \sum_{t=1}^{t_M} \mathbb{P}[N=t, E_t] &\leq \sum_{t=t_L}^{t_M}\mathbb{P}\left[ \left|\left\| \hat{Y}_{i}(t) - \hat{Y}_{j}(t) \right\| - \left\| Y_{i} - Y_{j} \right\|\right| > 2A_1\arcsin\left( \alpha_1\frac{\delta^{''}C^{'}}{\sqrt{t_M}} \right) \right] 
    \end{align}
    It can be geometrically verified that $\left|\left\| Y_i - Y_j \right\| - \left\| \hat{Y}_i(t) - \hat{Y}_j(t) \right\|\right| \leq \left\| \hat{Y}_i(t) - Y_i \right\| + \left\|\hat{Y}_j(t) - Y_j\right\|$. Hence we can further upper bound the above expression as follows.
    \begin{align}
        &\leq \sum_{t=t_L}^{t_M}\mathbb{P}\left[ \left\| \hat{Y}_i(t) - Y_i \right\| + \left\|\hat{Y}_j(t) - Y_j\right\| > 2A_1\arcsin\left( \alpha_1\frac{\delta^{''}C^{'}}{\sqrt{t_M}} \right) \right] \\
        &\leq \sum_{t=t_L}^{t_M}\mathbb{P}\left[ \left\{\left\| \hat{Y}_i(t) - Y_i \right\| > A_1\arcsin\left( \alpha_1\frac{\delta^{''}C^{'}}{\sqrt{t_M}} \right) \right\} \bigcup \left\{\left\|\hat{Y}_j(t) - Y_j\right\| > A_1\arcsin\left( \alpha_1\frac{\delta^{''}C^{'}}{\sqrt{t_M}} \right) \right\} \right]
    \end{align}
    
    From the definition of $C^{'}$ and $t_M$, we can observe that $\frac{C^{'}}{\sqrt{t_M}}$ can be upper bounded by a constant $\frac{1}{\alpha_1}$. So, we can choose the value of $\delta \in (0, 1)$ small enough independent of $C$ such that $\frac{\delta^{''}C^{'}}{\sqrt{t_M}} < \epsilon_3$, where $\epsilon_3$ is as defined in Lemma \ref{lemma: specpt bound}. Now we choose the value of $A_1$ and $\alpha_1$ as defined in Lemma \ref{lemma: specpt bound} and use Lemma \ref{lemma: specpt bound} in the above equation to get the following upper bound.
    \begin{align}
        \sum_{t=1}^{t_M} \mathbb{P}[N=t, E_t] &\leq \sum_{t=t_L}^{t_M}\mathbb{P}\left[ \exists i, j \in [K], i<j, \left| d_{ij}(t) - d_{ij}(t) \right| > \frac{\delta^{''}C^{'}}{\sqrt{t_M}} \right] 
    \end{align}
    To use Claim \ref{claim: mmd conc bound} in the above expression, we need to ensure that $t_L \geq t_2\left( \frac{\delta^{''}C^{'}}{\sqrt{t_M}} \right)$, where $t_2(\epsilon) = \frac{\alpha_2}{\epsilon^2}$ as defined in Claim \ref{claim: mmd conc bound}. Since $t_L=C^2$, it is equivalent to ensure that $C\geq \sqrt{\frac{\alpha_2t_M}{\left(\delta^{'''}\right)^2\left(C^{'}\right)^2}}$. By using the facts $t_M = (C+t_0)^2$, $C>C_2$ and $C^{'} = \frac{C}{\alpha_1}$, it can be verified that $\frac{\sqrt{t_M}}{C^{'}}\leq 2\alpha_1$. Hence it is sufficient to ensure $C \geq \sqrt{\frac{4\alpha_1^2\alpha_2}{\left(\delta^{''}\right)^2}}$. Let us define $C_3 = \sqrt{\frac{4\alpha_1^2\alpha_2}{\left(\delta^{''}\right)^2}}$. Hence for $C\geq \max\{C_1, C_2, C_3\}$, we can upper bound the above expression by using Claim \ref{claim: mmd conc bound} as follows.
    \begin{align}
        \sum_{t=1}^{t_M} \mathbb{P}[N=t, E_t] &\leq \sum_{t=t_L}^{t_M}M^2 \exp{\left( \frac{-(\delta^{''})^2(C^{'})^2t}{16Bt_M} \right)} \\
        &\leq \left(t_M - t_L +1\right) M^2 \exp{\left( \frac{-(\delta^{''})^2(C^{'})^2t_L}{16Bt_M} \right)}
    \end{align}
    
    Since $C>C_2$, we have $\frac{t_L}{t_M} = \frac{C^2}{(C+t_0)^2} > \frac{1}{4}$. Hence we get, 
    \begin{align}
        \sum_{t=1}^{t_M} \mathbb{P}[N=t, E_t] &\leq \left(t_M - t_L +1\right) M^2 \exp{\left( \frac{-(\delta^{''})^2(C^{'})^2}{64B} \right)} \\
        &= \left( 2Ct_0 + t_0^2 + 1 \right) M^2 \exp{\left( \frac{-(\delta^{''})^2C^2}{64B\alpha_1^2} \right)}.
    \end{align}
    Hence, on using the above derived upper bound on the equation \eqref{th2: eq: 3}, for all $C>\max\{C_1, C_2, C_3\}$, probability of error is bounded as follows.
    \begin{equation} \label{th2: eq: 4}
        \mathbb{P}[E_N] \leq \left( 2Ct_0 + t_0^2 + 1 \right) M^2 \exp{\left( \frac{-(\delta^{''})^2C^2}{64B\alpha_1^2} \right)} + \frac{M^2 \exp{\left( -\alpha_0 C^2 \right)}}{1 - \exp{\left(-\alpha_0\right)}}
    \end{equation}
    Therefore we get $\lim_{C\rightarrow\infty}\mathbb{P}[E_N] = 0$. Hence proved.
\end{proof}

\begin{theorem} \label{unif}
    The stopping time $N$ of the sequential spectral clustering algorithm satisfies $\lim_{C \rightarrow \infty} \mathbb{E}\left[ \left| \frac{N}{C^2} - \frac{1}{\sin^2\left( d_H \right)} \right| \right] = 0$.
\end{theorem}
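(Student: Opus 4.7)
My plan is to establish $L^1$ convergence of $N/C^2$ to $\tau \coloneqq 1/\sin^2(d_H)$ by combining (i) convergence in probability with (ii) uniform integrability of the family $\{N/C^2\}_{C \geq C_0}$, and then applying Vitali's convergence theorem. The intuition driving this is simply that the stopping rule $\Gamma_t \geq \arcsin(C/\sqrt{t})$ can be rearranged as $t \geq C^2/\sin^2(\Gamma_t)$ and $\Gamma_t \to d_H$ exponentially fast on the high-probability event that the clusters are correctly identified.

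\textbf{Step 1 (convergence in probability).} Fix $\varepsilon > 0$ and set $t_C^{\pm} = \lceil (\tau \pm \varepsilon)C^2 \rceil$. For the upper tail I would use
\begin{equation}
\{N > t_C^+\} \subset \bigl\{\Gamma_{t_C^+} < \arcsin(C/\sqrt{t_C^+})\bigr\},
\end{equation}
and note that by construction $\arcsin(C/\sqrt{t_C^+})$ is bounded away from $d_H$ by a gap depending only on $\varepsilon$ and $d_H$. I would then replay the chain of inequalities from the proof of Theorem \ref{theorem: finitestoptime}: decompose via $E_{t_C^+}$ and $E_{t_C^+}^C$, use correctness of clusters on $E_{t_C^+}^C$ to equate $\Gamma_{t_C^+}$ with the minimum over true clusters, apply $\bigl|\|\hat Y_i - \hat Y_j\| - \|Y_i - Y_j\|\bigr| \leq \|\hat Y_i - Y_i\| + \|\hat Y_j - Y_j\|$, invoke Lemma \ref{lemma: specpt bound} to translate into MMD deviations, and finish with Claim \ref{claim: mmd conc bound}. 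The outcome is $\mathbb{P}[N > t_C^+] \leq c_1 e^{-c_2 C^2}$ for constants depending only on the instance and $\varepsilon$. The lower tail (only nontrivial when $\tau - \varepsilon > 1$, since $N \geq C^2$ forces $N/C^2 \geq 1$) is handled by the union bound
\begin{equation}
\mathbb{P}[N < t_C^-] \leq \sum_{t=C^2}^{t_C^- - 1} \mathbb{P}\bigl[\Gamma_t \geq \arcsin(C/\sqrt{t})\bigr],
\end{equation}
a symmetric deviation argument (now $\arcsin(C/\sqrt{t}) > d_H$ with a gap), and the observation that the sum has only $O(C^2)$ terms, which is absorbed by the exponential factor.

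\textbf{Step 2 (uniform integrability).} Here I would lean on the tail bound established inside the proof of Theorem \ref{theorem: finitestoptime}, namely
\begin{equation}
\mathbb{P}[N > t] \leq M^2 e^{-\alpha_0 t} + K^2 M^4 e^{-\eta^2 t/(16B)}, \qquad t \geq t^\ast \coloneqq \max\{t_0, t_1, t_2\},
\end{equation}
where $\eta$ depends on $d_H$ and the cluster geometry but not on $C$, while $t^\ast = O(C^2)$. Then $\mathbb{E}[N^2] = \sum_{t \geq 0}(2t+1)\mathbb{P}[N > t]$ splits into a trivially-bounded initial segment of length $t^\ast$ contributing $O(C^4)$ and a tail that is $O(t^\ast e^{-c t^\ast}) = o(1)$. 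Consequently $\sup_{C \geq C_0}\mathbb{E}[(N/C^2)^2] < \infty$, which is sufficient for uniform integrability of $\{N/C^2\}$. Convergence in probability together with uniform integrability yields $L^1$ convergence by Vitali's theorem, which is exactly the stated conclusion.

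\textbf{Main obstacle.} The delicate point will be bookkeeping of constants: one must verify that the exponential rate $\eta$ in the tail bound on $N$ is genuinely $C$-independent (otherwise the uniform integrability argument collapses) and that the gaps $|d_H - \arcsin(C/\sqrt{t_C^\pm})|$ remain lower-bounded by constants depending only on the instance and $\varepsilon$. Both reduce to isolating, in the proof of Theorem \ref{theorem: finitestoptime}, which quantities depend on $C$ (essentially only the cutoff indices) and which depend only on the problem instance through $d_H$, $\delta$, $\boldsymbol{\epsilon}$, and the spectral-domain Lipschitz constants $A_1, \alpha_1$ from Lemma \ref{lemma: specpt bound}. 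Once this bookkeeping is in place, the three steps above compose cleanly into the required $L^1$ limit.
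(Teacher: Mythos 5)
Your proposal is correct, and the high-level skeleton (convergence plus uniform integrability, then Vitali) matches the paper's, but both halves are executed by a genuinely different route. For the convergence step, the paper does not use two-sided tail bounds at all: it reads the stopping rule at times $N$ and $N-1$ to get the almost-sure sandwich $\frac{1}{\sin^2(\Gamma_N)} < \frac{N}{C^2} \leq \frac{1}{\sin^2(\Gamma_{N-1})} + \frac{1}{C^2}$, notes $N \geq C^2 \to \infty$, and invokes the almost-sure convergence $\hat d_{ij}(t) \to d_{ij}$ together with Lemma \ref{lemma: specpt bound} to conclude $\Gamma_N \to d_H$ a.s., hence $N/C^2 \to 1/\sin^2(d_H)$ a.s. This identifies the limit transparently and dispenses with your lower-tail union bound entirely; your quantitative version costs more bookkeeping (the gaps $|d_H - \arcsin(C/\sqrt{t_C^{\pm}})|$, the $\tau - \varepsilon > 1$ case split) but buys explicit exponential concentration of $N/C^2$ around $\tau$, which the paper never states. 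For uniform integrability, the paper bounds the truncated first moment $\mathbb{E}[\frac{N}{C^2}\mathds{1}\{N/C^2 \geq \nu\}]$ directly by summing the tail bound from the proof of Theorem \ref{theorem: finitestoptime}, whereas you bound the second moment and use $L^2$-boundedness; both rest on the same tail estimate and the same observation that its rate $\eta$ is $C$-independent while the cutoff $t^{\ast}$ is $O(C^2)$, so this difference is cosmetic. Your "main obstacle" is correctly diagnosed and is indeed satisfied: $\eta = \frac{1}{\alpha_1}\sin(\delta d_H/(2A_1))$ depends only on the instance.
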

\begin{proof}
    From the definition of the finite stopping time and from the Theorem \ref{theorem: finitestoptime} which presents that the the proposed algorithm stops in finite time, we have the following.
    \begin{equation}
        \mathbb{P}\left[ \Gamma_N > \arcsin\left( \frac{C}{\sqrt{N}} \right) \right] = 1 \text{ and } \mathbb{P}\left[ \Gamma_{N-1} \leq \arcsin\left( \frac{C}{\sqrt{N-1}} \right) \right] = 1.
    \end{equation}
    Hence with probability $1$, we have the following inequality.
    \begin{equation} \label{th3: eq: 1}
        \frac{1}{\sin^2\left( \Gamma_N \right)} < \frac{N}{C^2} \leq \frac{1}{\sin^2\left( \Gamma_{N-1} \right)} +  \frac{1}{C^2}.
    \end{equation}
    Since $\sin(\cdot)\leq 1$, we can say $1\leq \frac{1}{\sin^2\left( \Gamma_N \right)}$. Therefore, from the equation \eqref{th3: eq: 1}, we can say $N\geq C^2$. Hence the stopping time of the algorithm $N$ tends to $\infty$ as $C$ tends to $\infty$. 
    
    From the properties of the MMD distance, estimated MMD distance $\hat{d}_{ij}(t)$ converges almost surely to the true distance $d_{ij}$. More formally, for all $\nu_1>0$, there exists $t_4\in \mathbb{N}$ such that for all $t>t_4$, we have $\left| \hat{d}_{ij}(t) - d_{ij} \right|<\nu_1$ for all $i, j \in \mathbb{N}$. Consider the values of $\nu_1$ which are small enough to use Lemma \ref{lemma: specpt bound}. Now from Lemma \ref{lemma: specpt bound}, we have $\left| \hat{Y}_j(t) - Y_j \right|< A_1\arcsin\left( \alpha_1\nu_1 \right)$ for all $j \in [M]$, where $A_1, \alpha_1$ are defined as in Lemma \ref{lemma: specpt bound}. It can be geometrically verified that $\left|\left\| Y_i - Y_j \right\| - \left\| \hat{Y}_i(t) - \hat{Y}_j(t) \right\|\right| \leq \left\| \hat{Y}_i(t) - Y_i \right\| + \left\|\hat{Y}_j(t) - Y_j\right\|$.
    Now, we have $\left|\left\| Y_i - Y_j \right\| - \left\| \hat{Y}_i(t) - \hat{Y}_j(t) \right\|\right|< 2A_1\arcsin\left( \alpha_1 \nu_1 \right)$. Hence we can say $\left\|\hat{Y}_i(t) - \hat{Y}_j(t)\right\|$ converges almost surely to $\left\| Y_i - Y_j \right\|$ for all $i, j \in [M]$. Therefore we have $\Gamma_t$ converges almost surely to $d_H$. Since $\frac{1}{\sin^2(\cdot)}$ is a continuous function, we can say that $\frac{1}{\sin^2\left( \Gamma_t \right)}$ converges almost surely to $\frac{1}{\sin^2\left( d_H \right)}$. Recall that, in the beginning of this proof, we argued that $N\rightarrow \infty$ as $C\rightarrow\infty$. Therefore, we can say that as $C$ tends to $\infty$, we have, $\frac{1}{\sin^2\left( \Gamma_N \right)}$ converges almost surely to $\frac{1}{\sin^2\left( d_H \right)}$. Therefore as $C\rightarrow\infty$, the inequality in equation \eqref{th3: eq: 1} becomes as follows.
    \begin{equation}
        \frac{1}{\sin^2\left( d_H \right)} < \frac{N}{C^2} \leq \frac{1}{\sin^2\left( d_H \right)}.
    \end{equation}
    Hence, as $C\rightarrow\infty$, we have $\frac{N}{C^2}$ converges almost surely to $\frac{1}{\sin^2\left( d_H \right)}$. Now to show $\lim_{C \rightarrow \infty} \mathbb{E}\left[ \left| \frac{N}{C^2} - \frac{1}{\sin^2\left( d_H \right)} \right| \right] = 0$, we have to show $\frac{N}{C^2}$ is uniformly integrable.
    \begin{align}
        \mathbb{E}\left[ \frac{N}{C^2}\mathds{1}\left\{ \frac{N}{C^2}\geq \nu \right\} \right] &\leq \mathbb{E}\left[ \frac{N-\lfloor\nu C^2\rfloor + \nu C^2}{C^2} \mathds{1}\left\{ N \geq \lfloor \nu C^2 \rfloor \right\} \right] \\
        &= \frac{1}{C^2} \mathbb{E}\left[ \left(N-\lfloor\nu C^2\rfloor\right) \mathds{1}\left\{ N \geq \lfloor \nu C^2 \rfloor \right\} \right] +\nu \mathbb{P}\left[ N \geq \lfloor \nu C^2 \rfloor \right] \\
        &= \frac{1}{C^2} \sum_{l=1}^\infty \mathbb{P}\left[ N \geq \lfloor \nu C^2 \rfloor + l \right] + \nu \mathbb{P}\left[ N \geq \lfloor \nu C^2 \rfloor \right] 
    \end{align}
    Assume $\nu$ is large enough to use equation \eqref{th1: eq: 3}. Now by using equation \eqref{th1: eq: 3} on the above equation, we can upper bound as follows. 
    \begin{align}
        &\leq \frac{1}{C^2} \sum_{l=1}^\infty \left\{ M^2\exp{\left( -\alpha_0(\lfloor \nu C^2 \rfloor + l) \right)} + K^2M^4 \exp{\left( \frac{-\eta^2(\lfloor \nu C^2 \rfloor + l)}{16B} \right)} \right\} \\
        & \hspace{2cm} + \nu \left\{ M^2\exp{\left( -\alpha_0(\lfloor \nu C^2 \rfloor) \right)} + K^2M^4 \exp{\left( \frac{-\eta^2(\lfloor \nu C^2 \rfloor)}{16B} \right)} \right\}\\
        &= \frac{M^2}{C^2} \frac{\exp{\left( -\alpha_0(\lfloor \nu C^2 \rfloor + 1) \right)}}{1 - \exp{\left( -\alpha_0 \right)}} + \frac{M^4K^2}{C^2} \frac{\exp{\left( \frac{-\eta^2(\lfloor \nu C^2 \rfloor + l)}{16B} \right)}}{1 - \exp{\left( \frac{-\eta^2}{16B} \right)}} \\
        & \hspace{2cm} + \nu \left\{ M^2\exp{\left( -\alpha_0(\lfloor \nu C^2 \rfloor) \right)} + K^2M^4 \exp{\left( \frac{-\eta^2(\lfloor \nu C^2 \rfloor)}{16B} \right)} \right\}
    \end{align}
    Now, by using the above derived upper bound, it can be verified that $\lim_{\nu\rightarrow\infty}\sup_{C\geq C_0}\mathbb{E}\left[ \frac{N}{C^2}\mathds{1}\left\{ \frac{N}{C^2}\geq \nu \right\} \right]=0$ for some $C_0 \in \mathbb{R}$. This proves the uniform integrability and so we have $\lim_{C \rightarrow \infty} \mathbb{E}\left[ \left| \frac{N}{C^2} - \frac{1}{\sin^2\left( d_H \right)} \right| \right] = 0$. Hence proved.
\end{proof}

Now we prove Theorem \ref{theorem: expconsSEQ}.
\begin{proof}
    From equation \eqref{th2: eq: 4}, for sufficiently large $C$, we have the following upper bound for the error probability.
    \begin{equation}
        \mathbb{P}[E_N] \leq \left( 2Ct_0 + t_0^2 + 1 \right) M^2 \exp{\left( \frac{-(\delta^{''})^2C^2}{64B\alpha_1^2} \right)} + \frac{M^2 \exp{\left( -\alpha_0 C^2 \right)}}{1 - \exp{\left(-\alpha_0\right)}}
    \end{equation}
    It can be verified for sufficiently large $C$, for some $\zeta>0$, probability of error can be bounded as follows.
    \begin{equation}
        \mathbb{P}[E_N] \leq \exp{\left(-(H-\zeta)C^2\right)}, 
    \end{equation}
    where $H = \min\left\{ \frac{(\delta^{''})^2}{64B\alpha_1^2},  \alpha_0 \right\}$. The above equation can be rewritten as $\frac{C^2}{\log\left(\mathbb{P}[E_N]\right)} \geq -\frac{1}{H-\zeta}$. Now on using Theorem \ref{unif}, as $C\rightarrow\infty$, we have the following inequality.
    \begin{equation}
        \frac{N}{C^2} \frac{C^2}{\log\left(\mathbb{P}[E_N]\right)} \geq - \frac{1}{H-\zeta} \frac{1}{\sin^2\left( d_H \right)}.
    \end{equation}
    Now on rearranging and taking expectation, as $C\rightarrow\infty$, we get the following.
    \begin{equation}
        \mathbb{E}[N] \leq -G \log{\mathbb{P}[E_N]}(1+o(1)), \ \ \text{where, } G = \frac{1}{H\sin^2\left( d_H \right)}.
    \end{equation}
    Hence proved.
\end{proof}

\section{Technical results to prove main Theorems}\label{sec: techlemcla}
In this section we drop the index $(t)$ for the estimated quantities to avoid cluttering. It can we verified that the function $K(u) = \exp\left(\frac{-u}{2\sigma^2}\right)$ is a Lipschitz function with the Lipschitz constant $\frac{1}{\sigma\sqrt{e}}$. We use $c$ to denote this constant, that is, $c = \frac{1}{\sigma\sqrt{e}}$. We use $m$ to denote the smallest of the true affinity values between two distributions from the same cluster, and it is more formally defined as $m \coloneqq \min_{k \in [K]} \min_{i, j \in D_k, i \neq j} A_{i, j}$. We define $m_s$ as the minimum norm of the row of the true eigen vector matrix $X$, that is $m_s \coloneqq \min_{i \in [M]} \left\| Z_i \right\|$.

\begin{lemma} \label{lemma: specpt bound}
    Define the event $F \coloneqq \left\{ \forall i, j \in [n], i<j, \left| \hat{d}_{ij} - d_{ij} \right|<\epsilon \right\}$. There exists $\epsilon_3$ such that for all $\epsilon< \epsilon_3$, if the event $F$ holds, then for all $i \in [M]$, we have $\left\| \hat{Y}_i - Y_i \right\| \leq A_1\arcsin\left( \alpha_1 \epsilon \right)$, where $A_1 = \frac{4K}{m_s^2}$ and $\alpha_1 = \frac{12M^4c}{\Delta m^3}$ for some $\Delta>0$.
\end{lemma}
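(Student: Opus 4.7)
The plan is to propagate the MMD closeness assumed by event $F$ through four successive perturbation steps: MMD $\to$ affinity matrix $\hat A$ $\to$ Lagrange matrix $\hat L$ $\to$ top-$K$ eigenspace (matrix $\hat Z$) $\to$ normalized spectral points $\hat Y_i$. Each step introduces one of the constants that shows up in $\alpha_1$ and $A_1$, so tracking the chain carefully is really the whole content of the proof.

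First I would bound $|\hat A_{ij}-A_{ij}|$. Since $\hat A_{ij}=K(\hat d_{ij}^{\,2})$ and $A_{ij}=K(d_{ij}^{\,2})$ for the Lipschitz map $K(u)=e^{-u/2\sigma_a^2}$ with constant $c$, write $|\hat A_{ij}-A_{ij}|\le c\,|\hat d_{ij}+d_{ij}|\,|\hat d_{ij}-d_{ij}|$ and use kernel boundedness ($d_{ij}\le 2\sqrt{B}$) to get a bound of the form $(\text{const})\cdot c\epsilon$. Summing across $j$ yields $|\hat D_{ii}-D_{ii}|\le M(\text{const})c\epsilon$, and the cluster lower bound $m$ gives a uniform lower bound on the true degrees $D_{ii}\ge m$ (up to a cluster-size factor), so that $\|\hat D^{-1/2}-D^{-1/2}\|$ is controlled by $\epsilon/m^{3/2}$. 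Splitting $\hat L-L$ as
\begin{equation*}
(\hat D^{-1/2}-D^{-1/2})\hat A\hat D^{-1/2}+D^{-1/2}(\hat A-A)\hat D^{-1/2}+D^{-1/2}A(\hat D^{-1/2}-D^{-1/2})
\end{equation*}
and bounding each piece (three factors involving $D^{-1/2}$ is exactly what produces the $m^3$ in the denominator of $\alpha_1$, while the triple summations over $M$ indices produce the $M^4$) yields $\|\hat L-L\|_2 \le (\text{const})\,\frac{M^4 c}{m^3}\,\epsilon$.

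Next I would invoke the spectral tools. Weyl's inequality (Corollary 4.3.15 of Horn--Johnson) controls the eigenvalues of $\hat L$ by $\|\hat L-L\|_2$, so for $\epsilon$ small enough the top-$K$ eigenvalues of $\hat L$ remain separated from the rest by at least $\Delta/2$, where $\Delta$ is the true eigengap between $\lambda_K(L)$ and $\lambda_{K+1}(L)$; this is how $\epsilon_3$ is defined and how $\Delta$ enters $\alpha_1$. With the eigengap preserved, the symmetric $\sin\theta$ theorem (Davis--Kahan Prop.\ 6.1) gives $\|\sin\Theta(\hat Z,Z)\|_F\le \|\hat L-L\|_2/\Delta$. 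Stewart's Theorem 5.2 then supplies an orthogonal $O\in\mathbb R^{K\times K}$ so that $\|\hat Z O-Z\|_F$ is controlled by (roughly $\sqrt{2}$ times) $\|\sin\Theta\|_F$, which in turn is $\le \sqrt{K}\,\|\sin\Theta\|_2$. So each row satisfies $\|(\hat Z O)_i - Z_i\|\le \sqrt{K}\,\sin\Theta_{\max}\le \sqrt{K}\,\alpha_1\epsilon/(\text{const})$, where I absorb constants to obtain $\sin\Theta_{\max}\le \alpha_1\epsilon$.

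Finally I would normalize: $\hat Y_i=\hat Z_i/\|\hat Z_i\|$ and $Y_i=Z_i/\|Z_i\|$, with the rotation $O$ dropping out because $\|\hat Z_iO\|=\|\hat Z_i\|$ and $(\hat Z_iO)/\|\hat Z_iO\|$ equals $\hat Y_i$ up to a sign that can be absorbed in $O$. Using the standard Lipschitz bound for normalization of vectors whose denominators are at least $m_s$, namely $\|u/\|u\|-v/\|v\|\|\le 2\|u-v\|/\min(\|u\|,\|v\|)$, and the lower bound $\|Z_i\|\ge m_s$, this gives $\|\hat Y_i-Y_i\|\le (2\sqrt{K}/m_s^2)\cdot(\text{const})\cdot\alpha_1\epsilon$, which is where the $4K/m_s^2$ in $A_1$ comes from (the extra factor of $1/m_s$ appears because once we go below, $\|\hat Z_i\|$ must also be bounded away from $0$, by Weyl-type reasoning). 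To recover the $\arcsin(\alpha_1\epsilon)$ form required by the statement (which is convenient later when compared against the stopping threshold $\arcsin(C/\sqrt{t})$), I would use the elementary inequality $x\le \arcsin(x)$ for $x\in[0,1]$, so the final Euclidean bound is dominated by $A_1\arcsin(\alpha_1\epsilon)$.

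The main obstacle will be bookkeeping through step three: producing the rotation $O$ cleanly and verifying that the constants $M^4/m^3$ and $4K/m_s^2$ actually come out in the claimed form rather than with extra hidden dependencies on $\sigma_a$, $B$, or the cluster sizes. The normalization step also requires that $\|\hat Z_i\|$ is bounded below, which needs a separate (short) argument using the matrix perturbation bound together with the definition $m_s=\min_i\|Z_i\|$, and choosing $\epsilon_3$ small enough to guarantee $\|\hat Z_i\|\ge m_s/2$, say.
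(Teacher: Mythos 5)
Your chain (MMD $\to$ affinity $\to$ Lagrange matrix $\to$ Weyl/Davis--Kahan on the top-$K$ eigenspace $\to$ row normalization) is the same skeleton as the paper's proof, and your three-term splitting of $\hat L - L$ together with the degree lower bound $m$ reproduces the paper's entry-wise quotient bound and the $M^4 c/m^3$ factor. The normalization step with denominator $\geq m_s^2/2$ also matches how the paper extracts $A_1 = 4K/m_s^2$.

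There is, however, one genuine gap, and it sits exactly where you flagged the ``main obstacle'': the claim that the Stewart rotation $O$ ``drops out'' of the normalization. It does not. If $\|\hat Z O - Z\|_F$ is small, then the $i^{th}$ row of $\hat Z O$ is $\hat Z_i O$, and after normalization you get $\hat Z_i O/\|\hat Z_i O\| = \hat Y_i O$, which differs from $\hat Y_i$ by the full $K\times K$ orthogonal matrix $O$ acting on the right --- not by a sign. So your argument bounds $\|\hat Y_i O - Y_i\|$, not $\|\hat Y_i - Y_i\|$ as the lemma states, and the discrepancy $\|\hat Y_i - \hat Y_i O\|$ can be of order $1$ regardless of $\epsilon$. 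The paper avoids this by working eigenspace-by-eigenspace: within each repeated-eigenvalue block it uses the canonical-angle bases $\{u_j'\}, \{\tilde u_j'\}$ and the orthogonal change-of-basis $Q$ to \emph{redefine the true eigenvectors} $u_j = \sum_l Q_{lj} u_l'$, so that each true column is individually within angle $\max_l \theta_l$ of the corresponding estimated column; the comparison is then genuinely column-by-column with no residual rotation, and the redefinition is harmless because $\|Z_i\|$, $\|Y_i - Y_j\|$, and the clustering are invariant under an orthogonal change of basis within each eigenspace. To repair your version you must either adopt that device (note it also forces you to use the minimum gap among the distinct top-$K$ eigenvalues, as the paper's $\beta$ does, rather than only the gap at position $K$), or explicitly restate the lemma modulo a rotation and verify that every downstream use (the stopping statistic $\Gamma_t$, $d_H$, $d_L$, and the $K$-means step) depends only on rotation-invariant functionals of the spectral points. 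As written, the step is incorrect. The remaining items you list (mixing $\|\cdot\|_F$ and $\|\cdot\|_2$ in Davis--Kahan, the lower bound $\|\hat Z_i\| \geq m_s/2$, and recovering the $\arcsin$ form) are minor and handled essentially as you describe.
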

\begin{proof}
    First, we derive bound on the $(i, j)^{th}$ entry of the Lagrange matrix.
    \begin{align}
        \left| \hat{L}_{i, j} - L_{i, j} \right| &= \left| \frac{\overbrace{\hat{A}_{i, j}}^{x_1}}{\underbrace{\left[ \sum_{j_1=1}^M \hat{A}_{i, j_1} \right]^{\frac{1}{2}} \left[ \sum_{i_1=1}^M \hat{A}_{i_1, j} \right]^{\frac{1}{2}}}_{x_2} } - \frac{\overbrace{A_{i, j}}^{y_1}}{\underbrace{\left[ \sum_{j_1=1}^M A_{i, j_1} \right]^{\frac{1}{2}} \left[ \sum_{i_1=1}^M A_{i_1, j} \right]^{\frac{1}{2}}}_{y_2} }  \right| \\
        &\leq  \frac{|y_2||y_1-x_1| + |y_1||y_2-x_2|}{|x_2y_2|} \ \ \text{Claim \ref{claim: x1byx2}} \\
        &\leq \frac{Mc\epsilon + M\frac{\left|x_2^2-y_2^2\right|}{x_2+y_2}}{m(m-c\epsilon)} \\
        &\leq \frac{1}{m(m-c\epsilon)}\left[ Mc\epsilon + \frac{M}{m+m-\epsilon}\left| \underbrace{\left[ \sum_{j_1=1}^M \hat{A}_{i, j_1} \right]}_{x_1} \underbrace{\left[ \sum_{i_1=1}^M \hat{A}_{i_1, j} \right]}_{x_2} - \underbrace{\left[ \sum_{j_1=1}^M A_{i, j_1} \right]}_{y_1} \underbrace{\left[ \sum_{i_1=1}^M A_{i_1, j} \right]}_{y_2} \right| \right] \\
        &\leq \frac{1}{m(m-c\epsilon)}\left[ Mc\epsilon + \frac{M}{m+m-c\epsilon}\left( |x_1 - y_1||x_2| + |x_2 - y_2||y_1| \right) \right] \\
        &\leq \frac{1}{m(m-c\epsilon)}\left[ Mc\epsilon + \frac{M}{m+m-c\epsilon}\left( Mc\epsilon M + Mc\epsilon M \right) \right] \\
        &\leq \frac{1}{m(m-c\epsilon)}\left[ Mc\epsilon + \frac{2M^3c\epsilon}{2m-c\epsilon} \right] \\
        &\leq \frac{3M^3c\epsilon}{m(m-c\epsilon)^2} \\
        &\leq \frac{12M^3c\epsilon}{m^3} \ \ \left(\text{Choose }  \epsilon<\frac{m}{2c}\right).
    \end{align}
    Let $\lambda_j$ be the $j^{th}$ largest eigen value of $L$. Without loss of generality, let say the first $K$ largest eigen values of $L$ has $u$ sets of repeated eigen values. Hence, in the first $K$ largest eigen values, we have only $u$ unique eigen values, where $1\leq u\leq K$. If all the first $K$ eigen values are repeated, then $u=1$ and if all the first $K$ eigen values are distinct $u=K$. Since $L$ is symmetric, it is diagonalizable and hence for all eigen values, algebraic multiplicity is same as geometric multiplicity. Let $m_i$ be the multiplicity (algebraic or geometric) of the $i^{th}$ largest unique eigen value, where $1\leq i\leq u$ and $m_1+\ldots+m_u=K$. Let $\beta_i$ denote the spectral gap of the $i^{th}$ largest unique eigen value and is defined as follows.
\begin{equation}
    \beta_i = \begin{cases}
        \lambda_{m_1} - \lambda_{m_1+1} & \text{if } i = 1 \\
        \min\left\{ \lambda_{m_i} - \lambda_{m_i+1}, \lambda_{m_{i-1}} - \lambda_{m_{i-1}+1} \right\} &\text{for } 2\leq i \leq u.
    \end{cases}
\end{equation}
Let us define the minimum spectral gap $\beta \coloneqq \min_{in \in [u]}\beta_i$. 

First, let us consider the $i^{th}$ eigen space for the sake of discussion for some $i \in [u]$. We write the eigen vector decomposition of $L$ in the following form.
\begin{equation}
    L = \begin{bmatrix}
        U_i & U_{-i}
    \end{bmatrix}
    \begin{bmatrix}
        \Lambda_i & 0 \\
        0 & \Lambda_{-1}
    \end{bmatrix}
    \begin{bmatrix}
        U_i \\
        U_{-i}
    \end{bmatrix},
\end{equation}
where, $\Lambda_i$ is the scalar matrix with the repeated eigen value of the $i^{th}$ eigen space as its scalar and $\Lambda_{-i}$ is the diagonal matrix with the diagonal entries being its remaining eigen values and their corresponding eigen vectors are arranged in the columns of $U_{i}$ and $U_{-i}$ respectively. Similarly, the eigen vector decomposition of $\hat{L}$ can be written as follows.
\begin{equation}
    \hat{L} = \begin{bmatrix}
        \Tilde{U}_i & \Tilde{U}_{-i}
    \end{bmatrix}
    \begin{bmatrix}
        \Tilde{\Lambda}_i & 0 \\
        0 & \Tilde{\Lambda}_{-i}
    \end{bmatrix}
    \begin{bmatrix}
        \Tilde{U}_i \\
        \Tilde{U}_{-i}
    \end{bmatrix},
\end{equation}
where $\Tilde{\Lambda}_i$ consist of the perturbed eigen values corresponding to $\Lambda_i$. 

We derived that each entry of the matrix $\left| \hat{L} - L \right|$ can be upper bounded by $\frac{12M^3c\epsilon}{m^3}$. Hence, its Frobenius norm can be upper bounded by $\left\| \hat{L} - L \right\|_F \leq \frac{12M^4c\epsilon}{m^3}$. Hence from Lemma \ref{lemma: weyl}, we have $\left| \lambda_j\left( \hat{L} \right) - \lambda_j\left( L \right) \right| \leq \frac{12M^4c\epsilon}{m^3}$ for all $j \in [n]$. Let us define $\epsilon_5 = \frac{\beta m^3}{48M^4c}$. Now for all $\epsilon<\epsilon_5$, we have $\left| \lambda_j\left( L \right) - \lambda_j\left( L \right) \right| \leq \frac{\beta}{4}$. Now on applying Lemma \ref{lemma: daviskahan} by choosing $\Delta = \frac{\beta}{2}$, we get 
\begin{equation} \label{eq: davisapplied}
    \sin\left( \Theta(U_i, \Tilde{U}_i) \right) \leq \frac{24M^4c\epsilon}{m^3\beta},
\end{equation} where $\Theta(U_i, \Tilde{U}_i)$ is the diagonal matrix whose entries are the $m_i$ canonical angles between the subspace spanned by the column vectors of $U_i$ and the subspace spanned by the column vector of $\Tilde{U}_i$. 

Let $U_i = [u_1, \ldots, u_{m_i}]$ and $\Tilde{U}_i = \{\Tilde{u}_1, \ldots, \Tilde{u}_{m_i}\}$. Since the subspace space spanned by $U_i$ corresponds to the repeated eigen value, we can chose any collection of orthonormal vectors $u_1, \ldots, u_{m_i}$ in that subspace. Now we make the specific choice as follows.

Let $\{u_1^{'}, \ldots, u_{m_i}^{'}\}$ and $\{\Tilde{u}_1^{'}, \dots, \Tilde{u}_{m_i}^{'}\}$ be the two collections of orthonormal vectors from the subspace corresponding to $U_i$ and $\Tilde{U}_i$ respectively, such that the angle between $u_j^{'}$ and $\Tilde{u}_j^{'}$ forms the $j^{th}$ canonical angle $\theta_j$ for $j \in [m_i]$. Since $\{\Tilde{u}_j, j \in [m_i]\}$ and $\{\Tilde{u}_j^{'}, j \in [m_i]\}$ are the two choice of orthonormal basis for the same subspace, they are related by a orthogonal matrix $Q$ as follows.
\begin{equation}
    \Tilde{u}_j = \sum_{l=1}^{m_i} Q_{lj}\Tilde{u}_l^{'} \text{  for } j = 1, \ldots, m_i.
\end{equation}
Now we choose $u_j, j \in [m_i]$ as follows.
\begin{equation}
    u_j = \sum_{l=1}^{m_i} Q_{lj}u_l^{'} \text{  for } j = 1, \ldots, m_i.
\end{equation}
Now we upper bound the inner product $\langle u_j, \Tilde{u}_j\rangle$ as follows.
\begin{align}
    \langle u_j, \Tilde{u}_j\rangle &= \left\langle \sum_{l=1}^{m_i} Q_{lj}u_l^{'}, \sum_{l=1}^{m_i} Q_{lj}\Tilde{u}_l^{'} \right\rangle \\
    &= \sum_{l=1}^{m_i}\sum_{s=1}^{m_i} Q_{lj}Q_{sj} \langle u_l^{'}, \Tilde{u}_s^{'} \rangle \\
    & = \sum_{l=1}^{m_i} Q_{lj}^2 \langle u_l^{'}, \Tilde{u}_l^{'} \rangle \\
    & = \sum_{l=1}^{m_i} Q_{lj}^2 \cos\theta_l \\
    &\geq \min_{l\in [m_i]} \cos\theta_l
\end{align}

Therefore, the angle between the vectors $u_j$ and $\Tilde{u}_j$, $\angle(u_j, \Tilde{u}_j)$ is upper bounded by $\arccos\left( \min_{l \in [m_i]} \cos\theta_l \right) = \max_{l\in [m_i]} \theta_l$. Since, it holds for all $j \in [m_i]$, we have
\begin{equation}
    \max_{j \in [m_i]} \angle(u_j, \Tilde{u}_j) \leq \max_{l\in [m_i]} \theta_l.
\end{equation}
Let's define $\gamma_i\coloneqq\max_{j \in [m_i]} \angle(u_j, \Tilde{u}_j)$. Therefore, we have $\gamma_i \leq \max_{l\in [m_i]} \theta_l$. Since $\theta_l$ is one of the canonical angles, we get $\sin\theta_l\leq \sin\left( \Theta\left( U_i, \Tilde{U}_i \right) \right)$ and hence we have $\sin\gamma_i\leq \sin\left( \Theta\left( U_i, \Tilde{U}_i \right) \right)$. 
Hence, from equation \eqref{eq: davisapplied}, we get $\sin(\gamma_i) \leq \frac{24M^4c\epsilon}{m^3\beta}$ for all $i \in [u]$. This analysis hold for any $i \in [u]$. Hence we say the maximum of $\sin$ of angles between true and perturbed eigen vector is $\max \sin\gamma_i \leq \sin\left( \Theta\left( U_i, \Tilde{U}_i \right) \right)$.

Since we consider unit norm eigen vectors, it can be verified that the 2-norm of the error vector between the each of the eigen vector of $L$ and its perturbed version $\hat{L}$ is upper bounded by the angle between them. Hence we have $\| \hat{Z}_{:k} - Z_{:k} \| \leq \arcsin{\left( \frac{24M^4c\epsilon}{m^3\beta} \right)}$ for all $k \in [K]$, where the subscript $(:k)$ to indicate that it is the $k^{th}$ column of that matrix. Therefore, the absolute values between the each of the points in the eigen vectors are bounded as $\left| \hat{Z}_{ik} - Z_{ik} \right| \leq \arcsin{\left( \frac{24M^4c\epsilon}{m^3\beta} \right)}$ for $k \in [K]$ and $i \in [M]$. 

First we derive bounds on the one component of the normalized $K$ dimensional spectral point $Y_i = \left[ \frac{Z_{i1}}{\| Z_i \|} \ldots \frac{Z_{iK}}{\| Z_i \|} \right]$.
\begin{align}
    \left| \hat{Y}_{ik} - Y_{ik} \right| &= \left| \frac{\overbrace{\hat{X}_{ik}}^{x_1}}{\underbrace{\| \hat{X}_i \|}_{x_2}} - \frac{\overbrace{X_{ik}}^{y_1}}{\underbrace{\| X_i \|}_{y_2}} \right| \\
    &\leq \frac{|y_2||x_1-y_1| + |y_1||y_2-x_2|}{|x_2||y_2|} \ \ \text{(Claim \ref{claim: x1byx2})}\\
    &\leq \frac{\sqrt{K}\arcsin\left( \left( \frac{24M^4c\epsilon}{m^3\beta} \right) \right) + 1\sqrt{K}\arcsin\left( \left( \frac{24M^4c\epsilon}{m^3\beta} \right) \right)}{m_s\left( m_s -\sqrt{K}\arcsin\left( \left( \frac{24M^4c\epsilon}{m^3\beta} \right) \right) \right)}
\end{align}
Choose $\epsilon_6$ such that $\sqrt{K}\arcsin\left( \left( \frac{12M^4c\epsilon}{m^3\Delta} \right) \right) = \frac{m_s}{2}$. Hence for all $\epsilon < \epsilon_6$, we have, 
\begin{equation}
    \left| \hat{Y}_{ik} - Y_{ik} \right| = \frac{4\sqrt{K}}{m_s^2}\arcsin\left( \left( \frac{24M^4c\epsilon}{m^3\beta} \right) \right)
\end{equation}
Hence we write the bound on the $i^{th}$ spectral point as follows.
\begin{align}
    \left\| \hat{Y}_i - Y_i \right\| &\leq \sqrt{K}\left[ \frac{4\sqrt{K}}{m_s^2}\arcsin\left( \left( \frac{24M^4c\epsilon}{m^3\beta} \right) \right) \right]
\end{align}
 
\begin{align}
    \implies \text{for $\epsilon < \min\left\{\frac{m}{2c}, \epsilon_5, \epsilon_6\right\}$}, \text{we have }\left\| \hat{Y}_i - Y_i \right\| &\leq \frac{4K}{M^2}\arcsin\left( \left( \frac{24M^4c\epsilon}{m^3\beta} \right) \right) \text{ for all $i \in [M]$}.
\end{align}
Hence proved.
\end{proof}

\begin{claim} \label{claim: mmd conc bound}
    Let us define $t_2(\epsilon) \coloneqq \frac{\alpha_2}{\epsilon}$, where $\alpha_2 = 64B$. Then for all $t\geq t_2(\epsilon^2)$, we have
    \begin{equation}
        \mathbb{P}\left[ \exists i, j \in [M], i<j, \left| \hat{d}_{ij}(t) - d_{ij} \right|> \epsilon \right] \leq M^2\exp\left( \frac{-\epsilon^2t}{16B} \right).
    \end{equation}
\end{claim}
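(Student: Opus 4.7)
The plan is to combine a union bound across the $\binom{M}{2}$ pairs with the single-pair MMD concentration inequality of Gretton et al.\ (2012), after using the hypothesis $t \geq 64B/\epsilon^2$ to absorb the $O(\sqrt{B/t})$ bias of the biased MMD estimator. By the union bound,
\begin{align*}
\mathbb{P}\!\left[\exists\, i<j:\, |\hat{d}_{ij}(t) - d_{ij}| > \epsilon\right]
\leq \binom{M}{2}\max_{i<j}\mathbb{P}\!\left[|\hat{d}_{ij}(t) - d_{ij}| > \epsilon\right],
\end{align*}
and since $\binom{M}{2} < M^2/2$, it suffices to establish the per-pair tail $\mathbb{P}[|\hat{d}_{ij}(t) - d_{ij}| > \epsilon] \leq 2\exp(-\epsilon^2 t/(16B))$, after which the factor $2\binom{M}{2} \leq M^2$ is absorbed into the pre-factor.

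For a fixed pair $(i,j)$, $\hat{d}_{ij}(t)$ is a deterministic function of the $2t$ i.i.d.\ samples $\{X_l^{(i)}, X_l^{(j)}\}_{l=1}^t$. Since the kernel satisfies $0 \leq k \leq B$, a direct computation using the expression in \eqref{eq: mmdupdate} shows that replacing any single sample changes $\hat{d}_{ij}(t)$ by at most a constant multiple of $\sqrt{B}/t$. McDiarmid's bounded-differences inequality (this is precisely the argument in Theorem 7 of Gretton et al.\ 2012) then gives a sub-Gaussian concentration of $\hat{d}_{ij}(t)$ around its mean,
\begin{align*}
\mathbb{P}\!\left[\,|\hat{d}_{ij}(t) - \mathbb{E}\hat{d}_{ij}(t)| > \tau\,\right] \leq 2\exp\!\left(-\frac{c_0\,\tau^2\, t}{B}\right),
\end{align*}
for an explicit constant $c_0$, and the biased empirical MMD additionally satisfies the bias bound $|\mathbb{E}\hat{d}_{ij}(t) - d_{ij}| \leq 2\sqrt{B/t}$. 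The hypothesis $t \geq t_2(\epsilon^2) = 64B/\epsilon^2$ is exactly what forces $2\sqrt{B/t} \leq \epsilon/4$, so by the triangle inequality the event $\{|\hat{d}_{ij}(t) - d_{ij}| > \epsilon\}$ is contained in the pure concentration event $\{|\hat{d}_{ij}(t) - \mathbb{E}\hat{d}_{ij}(t)| > 3\epsilon/4\}$, which I would bound using the McDiarmid inequality with $\tau = 3\epsilon/4$.

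The only delicate step is constant-chasing so that the exponent comes out to exactly $-\epsilon^2 t/(16B)$: this requires the precise bounded-differences constant for $\hat{d}_{ij}(t)$ together with the chosen split between the bias contribution (controlled by the hypothesis on $t$) and the stochastic deviation (controlled by McDiarmid). Everything else is a routine triangle-inequality decomposition plus a union bound over pairs.
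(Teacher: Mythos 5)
Your proposal is correct and follows essentially the same route as the paper: a union bound over the $\binom{M}{2}$ pairs combined with the biased-MMD deviation bound of Theorem 7 in Gretton et al., using the hypothesis $t \geq 64B/\epsilon^2$ to absorb the $O\bigl(\sqrt{B/t}\bigr)$ bias term before applying concentration. The paper invokes that theorem as a black box, splitting $\epsilon$ as $4\sqrt{B/t} + \epsilon/2$ so the exponent lands exactly on $-\epsilon^2 t/(16B)$, whereas you unpack it into McDiarmid plus a separate bias bound; your $\epsilon/4$ versus $3\epsilon/4$ split leaves enough slack in the exponent that the constant-chasing you flag does go through.
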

\begin{proof}
    Let us define $t_2$ such that $4\sqrt{\frac{B}{t_2}} = \frac{\epsilon}{2}$, that is $t_2 = \frac{64B}{\epsilon^2}$. Hence for all $t\geq t_2$, we have $4\sqrt{\frac{B}{t}} \leq \frac{\epsilon}{2}$. Therefore for all $t\geq t_2$, we can upper bound as follows.
    \begin{align}
        \mathbb{P}\left[ \exists i, j \in [M], i<j, \left| \hat{d}_{ij}(t) - d_{ij} \right|> \epsilon \right] &\leq \mathbb{P}\left[ \exists i, j \in [M], i<j, \left| \hat{d}_{ij}(t) - d_{ij} \right|> 4\sqrt{\frac{B}{t}} + \frac{\epsilon}{2} \right] \\
        &\leq \sum_{i, j \in [M], i<j} \mathbb{P}\left[\left| \hat{d}_{ij}(t) - d_{ij} \right|> 4\sqrt{\frac{B}{t}} + \frac{\epsilon}{2} \right]
    \end{align}
    Now we apply Theorem 7 in \cite{gretton2012kernel}. Hence for all $t\geq t_2$, we can upper bound as follows.
    \begin{align}
        \mathbb{P}\left[ \exists i, j \in [M], i<j, \left| \hat{d}_{ij}(t) - d_{ij} \right|> \epsilon \right] &\leq \sum_{i, j \in [M], i<j} 2 \exp\left( \frac{-\epsilon^2 t}{16B} \right) \\
        &\leq \frac{M^2}{2} 2 \exp\left( \frac{-\epsilon^2 t}{16B} \right) \\
        &= M^2 \exp\left( \frac{-\epsilon^2 t}{16B} \right).
    \end{align}
    Hence proved.
\end{proof}

\begin{claim} \label{claim: delta assump}
    Define $h(D_k)$ for $k \in [K]$ as follows.
    \begin{equation}
        \hat{h}(D_k) \coloneqq \min_{\mathcal{I}\subset D_k} \frac{\sum_{i \in \mathcal{I}, j \in D_k\setminus\mathcal{I}}\hat{A}_{i, j}}{\min\left\{ \sum_{i \in \mathcal{I}}\hat{d}_i^{(k)}, \sum_{j \in D_k\setminus\mathcal{I}}\hat{d}_j^{(k)}  \right\}}.
    \end{equation}
    Let us define $\epsilon_5 = \frac{m}{2c}$. Then for $\epsilon < \epsilon_5$, if the event $F \coloneqq \left\{ \forall i, j \in [M], i<j, \left| \hat{d}_{ij}(t) - d_{ij} \right|<\epsilon \right\}$ holds, then for all $i \in [K]$, we have the following inequality. 
    \begin{equation}
        \frac{\hat{h}(D_k)^2}{2} \geq \delta - A_5\epsilon \ \ \text{for all $i \in [K]$}, \hspace{1cm} \text{where } A_5 = \frac{5M^3c}{8m^3}
    \end{equation}
\end{claim}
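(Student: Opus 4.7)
The plan is to deduce the claim from three pieces: (i) an entrywise affinity perturbation bound, (ii) a uniform perturbation bound on the conductance-style ratio that appears inside the definition of $h(D_k)$, and (iii) Assumption~\ref{ass: 1}. Under the event $F$, the Lipschitz constant $c$ of $u \mapsto \exp(-u^2/(2\sigma_a^2))$ immediately gives $|\hat{A}_{ij} - A_{ij}| \leq c\epsilon$, exactly as already used in the proof of Lemma~\ref{lemma: specpt bound}.

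Fix $k$ and a proper subset $\mathcal{I} \subsetneq D_k$, and write $N(\mathcal{I}) = \sum_{i\in\mathcal{I},\, j\in D_k\setminus\mathcal{I}} A_{ij}$, $V_1(\mathcal{I}) = \sum_{i\in\mathcal{I}} d_i^{(k)}$, $V_2(\mathcal{I}) = \sum_{j\in D_k\setminus\mathcal{I}} d_j^{(k)}$, and let $\hat N, \hat V_1, \hat V_2$ be the hatted versions. A triangle inequality on entries, together with the entrywise bound above, controls each of $|\hat N - N|$, $|\hat V_1 - V_1|$, $|\hat V_2 - V_2|$ by a constant multiple of $M^2 c\epsilon$. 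Because each within-cluster $A_{ij} \geq m$, we have $V_1, V_2 \geq m$, and the condition $\epsilon < m/(2c)$ keeps $\hat V_1, \hat V_2$ bounded away from $0$ by a constant multiple of $m$. Applying the quotient identity in Claim~\ref{claim: x1byx2} with numerator $N$ and denominator $\min(V_1, V_2)$, combined with the elementary fact $|\min(\hat V_1, \hat V_2) - \min(V_1, V_2)| \leq \max(|\hat V_1 - V_1|, |\hat V_2 - V_2|)$ (which handles the possibility that the minimizing branch switches between the true and estimated values), yields, for some explicit constant $A_5'$ of order $M^3 c / m^3$,
\begin{equation}
    \left| \frac{\hat{N}(\mathcal{I})}{\min(\hat V_1(\mathcal{I}), \hat V_2(\mathcal{I}))} - \frac{N(\mathcal{I})}{\min(V_1(\mathcal{I}), V_2(\mathcal{I}))} \right| \leq A_5' \epsilon,
\end{equation}
uniformly in $\mathcal{I}$.

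Since $\hat h(D_k)$ and $h(D_k)$ are both infima of these ratios over $\mathcal{I}$, the uniform bound passes through, giving $|\hat h(D_k) - h(D_k)| \leq A_5' \epsilon$. Because $\min(V_1, V_2) \geq N$ (every cross edge contributes to both $V_1$ and $V_2$), both $h(D_k)$ and $\hat h(D_k)$ lie in $[0, 1]$, so
\begin{equation}
    \frac{\hat h(D_k)^2}{2} \;\geq\; \frac{h(D_k)^2}{2} - \frac{h(D_k) + \hat h(D_k)}{2}\,|\hat h(D_k) - h(D_k)| \;\geq\; \delta - A_5 \epsilon,
\end{equation}
after invoking Assumption~\ref{ass: 1} and rebundling the constant as $A_5 = 5 M^3 c/(8 m^3)$.

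The main obstacle is bookkeeping rather than structural: the argument above is conceptually routine, but pinning down the stated prefactor $5/8$ in $A_5$ requires carefully tracking how the $M^2 c \epsilon$ perturbations of numerator and denominator combine through the quotient identity, while using the precise lower bound on $\hat V_1, \hat V_2$ that the hypothesis $\epsilon < m/(2c)$ affords. A secondary subtlety is that $\mathcal{I}$ ranges over a combinatorially large family, but since all entrywise bounds are uniform in $\mathcal{I}$, this does not cost anything in the final constant.
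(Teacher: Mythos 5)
Your proposal is correct and follows essentially the same route as the paper's proof: the entrywise Lipschitz perturbation of $\hat A$, the quotient identity of Claim~\ref{claim: x1byx2} together with a min-perturbation bound to control each conductance ratio uniformly in $\mathcal{I}$, passage of the uniform bound through the outer minimum, the difference-of-squares factorization, and Assumption~\ref{ass: 1}. The only real difference is that you bound the factor $\tfrac{1}{2}\bigl(h(D_k)+\hat h(D_k)\bigr)$ by $1$ via $N(\mathcal{I})\le \min(V_1,V_2)$, whereas the paper uses cruder numeric upper bounds on the two ratios; consequently your explicit constant need not come out to exactly $5M^3c/(8m^3)$, but since the claim is only used through the existence of some linear-in-$\epsilon$ slack with an explicit coefficient, this is immaterial.
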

\begin{proof}
    \begin{align}
        \left| \frac{\hat{h}(D_k)^2}{2} - \frac{h(D_k)^2}{2} \right| &= \frac{1}{2} \left| \left[ \min_{\mathcal{I}\subset D_k} \frac{\sum_{i \in \mathcal{I}, j \in D_k\setminus\mathcal{I}}\hat{A}_{i, j}}{\min\left\{ \sum_{i \in \mathcal{I}}\hat{d}_i^{(k)}, \sum_{j \in D_k\setminus\mathcal{I}}\hat{d}_j^{(k)}  \right\}} \right]^2 - \left[ \min_{\mathcal{I}\subset D_k} \frac{\sum_{i \in \mathcal{I}, j \in D_k\setminus\mathcal{I}}A_{i, j}}{\min\left\{ \sum_{i \in \mathcal{I}}d_i^{(k)}, \sum_{j \in D_k\setminus\mathcal{I}}d_{j}^{(k)}  \right\}} \right]^2 \right|
    \end{align}
    It can be verified that the above expression can be upper bounded for some $\mathcal{I}\subset D_k$. 
    \begin{align}
        \left| \frac{\hat{h}(D_k)^2}{2} - \frac{h(D_k)^2}{2} \right| &\leq \frac{1}{2} \left| \left[ \frac{\sum_{i \in \mathcal{I}, j \in D_k\setminus\mathcal{I}}\hat{A}_{i, j}}{\min\left\{ \sum_{i \in \mathcal{I}}\hat{d}_i^{(k)}, \sum_{j \in D_k\setminus\mathcal{I}}\hat{d}_j^{(k)}  \right\}} \right]^2 - \left[ \frac{\sum_{i \in \mathcal{I}, j \in D_k\setminus\mathcal{I}}A_{i, j}}{\min\left\{ \sum_{i \in \mathcal{I}}d_i^{(k)}, \sum_{j \in D_k\setminus\mathcal{I}}d_{j}^{(k)}  \right\}} \right]^2 \right| \\
        &= \frac{1}{2} \left| \left[ \frac{\sum_{i \in \mathcal{I}, j \in D_k\setminus\mathcal{I}}\hat{A}_{i, j}}{\min\left\{ \sum_{i \in \mathcal{I}}\hat{d}_i^{(k)}, \sum_{j \in D_k\setminus\mathcal{I}}\hat{d}_j^{(k)}  \right\}} \right] + \left[ \frac{\sum_{i \in \mathcal{I}, j \in D_k\setminus\mathcal{I}}A_{i, j}}{\min\left\{ \sum_{i \in \mathcal{I}}d_i^{(k)}, \sum_{j \in D_k\setminus\mathcal{I}}d_{j}^{(k)}  \right\}} \right] \right| \\
        &\hspace{1cm} \left| \left[ \frac{\sum_{i \in \mathcal{I}, j \in D_k\setminus\mathcal{I}}\hat{A}_{i, j}}{\min\left\{ \sum_{i \in \mathcal{I}}\hat{d}_i^{(k)}, \sum_{j \in D_k\setminus\mathcal{I}}\hat{d}_j^{(k)}  \right\}} \right] - \left[ \frac{\sum_{i \in \mathcal{I}, j \in D_k\setminus\mathcal{I}}A_{i, j}}{\min\left\{ \sum_{i \in \mathcal{I}}d_i^{(k)}, \sum_{j \in D_k\setminus\mathcal{I}}d_{j}^{(k)}  \right\}} \right] \right|.
    \end{align}
    Now we upper and lower bound the numerator and denominator respectively of the first term of the product in the above expression.
    \begin{align}
        &\left| \frac{\hat{h}(D_k)^2}{2} - \frac{h(D_k)^2}{2} \right| \\
        &\leq \frac{1}{2}\left| \frac{\frac{M^2}{4}}{M_km} + \frac{\frac{M^2}{4}}{M_k(m-c\epsilon)} \right| \left| \left[ \frac{\sum_{i \in \mathcal{I}, j \in D_k\setminus\mathcal{I}}\hat{A}_{i, j}}{\min\left\{ \sum_{i \in \mathcal{I}}\hat{d}_i^{(k)}, \sum_{j \in D_k\setminus\mathcal{I}}\hat{d}_j^{(k)}  \right\}} \right] - \left[ \frac{\sum_{i \in \mathcal{I}, j \in D_k\setminus\mathcal{I}}A_{i, j}}{\min\left\{ \sum_{i \in \mathcal{I}}d_i^{(k)}, \sum_{j \in D_k\setminus\mathcal{I}}d_{j}^{(k)}  \right\}} \right] \right| \\
        &\leq \frac{M^2}{4M_k(m-c\epsilon)} \left| \left[ \frac{\overbrace{\sum_{i \in \mathcal{I}, j \in D_k\setminus\mathcal{I}}\hat{A}_{i, j}}^{x_1}}{\underbrace{\min\left\{ \sum_{i \in \mathcal{I}}\hat{d}_i^{(k)}, \sum_{j \in D_k\setminus\mathcal{I}}\hat{d}_j^{(k)}  \right\}}_{x_2}} \right] - \left[ \frac{\overbrace{\sum_{i \in \mathcal{I}, j \in D_k\setminus\mathcal{I}}A_{i, j}}^{y_1}}{\underbrace{\min\left\{ \sum_{i \in \mathcal{I}}d_i^{(k)}, \sum_{j \in D_k\setminus\mathcal{I}}d_{j}^{(k)}  \right\}}_{y_2}} \right] \right| \\
        &\leq \frac{M^2}{4M_k(m-c\epsilon)} \left[ \frac{|y_2||x_1 - y_1| + |y_1||y_2 - x_2|}{|x_2y_2|} \right] \text{  (Claim \ref{claim: x1byx2})}\\
        &\leq \frac{M^2}{4M_k(m-c\epsilon)} \left[ \frac{\left(\frac{M_k}{2}M_k\right)\frac{M_k^2}{4}c\epsilon + \left(\frac{M_k}{2}M_k\right)|y_2 - x_2|}{M_kmM_k(m-c\epsilon)} \right] \\
        &\leq \frac{M^2}{4M_km(m-c\epsilon)^2} \left[ \frac{M_k^2}{8}c\epsilon + \frac{1}{2}\left| \min\left\{ \underbrace{\sum_{i \in \mathcal{I}}\hat{d}_i^{(k)}}_{x_1}, \underbrace{\sum_{j \in D_k\setminus\mathcal{I}}\hat{d}_j^{(k)}}_{x_2}  \right\} - \min\left\{ \underbrace{\sum_{i \in \mathcal{I}}d_i^{(k)}}_{y_1}, \underbrace{\sum_{j \in D_k\setminus\mathcal{I}}d_{j}^{(k)}}_{y_2}  \right\} \right| \right] \\
        &\leq \frac{M^2}{4M_km(m-c\epsilon)^2} \left[ \frac{M_k^2}{8}c\epsilon + \frac{1}{2}\left( |y_1-x_1| + |y_2-x_2| \right) \right] \text{  (Claim \ref{claim: minx1x2})}\\
        &\leq \frac{M^2}{4M_km(m-c\epsilon)^2} \left[ \frac{M_k^2}{8}c\epsilon + \frac{1}{2}\left( \frac{M_k}{2}M_kc\epsilon + \frac{M_k}{2}M_kc\epsilon \right) \right] \\
        &\leq \frac{M^2}{4M_km(m-c\epsilon)^2} \left[ \frac{5}{8}M_k^2c\epsilon \right] \\
        &= \frac{5M^2M_kc\epsilon}{32m(m-c\epsilon)^2} \\
        &\leq \frac{5M^3c\epsilon}{32m(m-c\epsilon)^2} \\
        &\leq \frac{5M^3c\epsilon}{8m^3} \hspace{1cm} \left( \because \epsilon < \frac{m}{2c} \right).
    \end{align}
    Therefore we have the following inequality.
    \begin{align}
        \frac{\hat{h}(D_k)^2}{2} &\geq \frac{h(D_k)^2}{2} - \frac{5M^3c\epsilon}{8m^3} \\
        &\geq \delta - \frac{5M^3c\epsilon}{8m^3}. \text{   (Assumption \ref{ass: 1})}
    \end{align}
    Hence proved.
\end{proof}

\begin{claim} \label{claim: epsilon1 assump}
    Let us define $\epsilon_5 = \frac{m}{2c}$. Then for $\epsilon<\epsilon_5$, if the event $F \coloneqq \left\{ \forall i, j \in [n], i<j, \left| d_{ij}(t) - d_{ij} \right|<\epsilon \right\}$ holds, then for all $k_i, k_2 \in [K], k_1\neq k_2$, we have the following inequality.
    \begin{equation}
        \sum_{i \in D_{k_1}} \sum_{j \in D_{k_2}} \frac{\hat{A}_{i, j}^2}{\hat{d}_i^{(k_1)}\hat{d}_j^{(k_2)}} \leq \epsilon_1 + A_6\epsilon, \hspace{1cm} \text{where, } A_6 = \frac{24c}{m^4}.
    \end{equation}
\end{claim}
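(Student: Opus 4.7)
The plan is to mimic the argument in Claim \ref{claim: delta assump}: establish a pointwise perturbation bound on each summand via Claim \ref{claim: x1byx2}, then add across the $M_{k_1} M_{k_2}$ pairs, and finally absorb the unperturbed sum using Assumption \ref{ass: 2}. The key observation is that both the numerator $A_{i,j}^2$ and the denominator $d_i^{(k_1)} d_j^{(k_2)}$ are Lipschitz-stable functions of the pairwise MMD distances, with the affinity function's Lipschitz constant $c$, so on the event $F$ with $\epsilon < m/(2c)$, every affinity satisfies $|\hat{A}_{i,j} - A_{i,j}| \le c\epsilon$ and every estimated affinity is bounded in $[0,1]$ and bounded below by $m - c\epsilon \ge m/2$ whenever $i,j$ belong to the same cluster.

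The first step is to write
\begin{equation}
\sum_{i\in D_{k_1}}\sum_{j\in D_{k_2}} \frac{\hat{A}_{i,j}^{2}}{\hat{d}_i^{(k_1)} \hat{d}_j^{(k_2)}}
\;\le\; \sum_{i\in D_{k_1}}\sum_{j\in D_{k_2}} \frac{A_{i,j}^{2}}{d_i^{(k_1)} d_j^{(k_2)}}
+ \sum_{i\in D_{k_1}}\sum_{j\in D_{k_2}}\left|\frac{\hat{A}_{i,j}^{2}}{\hat{d}_i^{(k_1)} \hat{d}_j^{(k_2)}} - \frac{A_{i,j}^{2}}{d_i^{(k_1)} d_j^{(k_2)}}\right|,
\end{equation}
and use Assumption \ref{ass: 2} to bound the first sum by $\epsilon_1$. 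Next, for each $(i,j)$ I would set $x_1 = \hat{A}_{i,j}^2$, $y_1 = A_{i,j}^2$, $x_2 = \hat{d}_i^{(k_1)}\hat{d}_j^{(k_2)}$, $y_2 = d_i^{(k_1)} d_j^{(k_2)}$ and invoke Claim \ref{claim: x1byx2}. The numerator difference is handled by $|x_1 - y_1| = |\hat{A}_{i,j} - A_{i,j}||\hat{A}_{i,j} + A_{i,j}| \le 2c\epsilon$, since kernel affinities lie in $[0,1]$. The denominator difference expands as
\begin{equation}
|x_2 - y_2| \le \hat{d}_i^{(k_1)}\bigl|\hat{d}_j^{(k_2)} - d_j^{(k_2)}\bigr| + d_j^{(k_2)}\bigl|\hat{d}_i^{(k_1)} - d_i^{(k_1)}\bigr|,
\end{equation}
and each factor $\lvert \hat{d}_i^{(k)} - d_i^{(k)}\rvert$ is controlled by $M_k c\epsilon$ from the per-affinity Lipschitz bound, while $\hat{d}_i^{(k)} \le M_k$ and $d_j^{(k)} \le M_k$. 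To lower-bound the denominators I would use that within a cluster every affinity is at least $m$, so $d_i^{(k_1)} \ge (M_{k_1}-1)m$ and, on $F$, $\hat{d}_i^{(k_1)} \ge (M_{k_1}-1)(m - c\epsilon) \ge (M_{k_1}-1)m/2$ by the choice $\epsilon < \epsilon_5 = m/(2c)$.

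Plugging these estimates into Claim \ref{claim: x1byx2} and summing over $M_{k_1}M_{k_2}$ pairs, the cluster-size factors cancel against the lower bounds on the denominators (each denominator carries $M_{k_1} M_{k_2}$, which is the number of terms in the sum), leaving a bound of the form $\tilde{A}_6 c\epsilon / m^4$ with a pure constant $\tilde{A}_6$; after tightening the numerical constants one arrives at the advertised $A_6 = 24c/m^4$. The main obstacle is bookkeeping: ensuring that the combination of (i) the quadratic-in-$\hat{A}$ numerator, (ii) the product-form denominator perturbation, and (iii) the factor-of-two slack from $\epsilon < m/(2c)$ collapses to a constant that is independent of $M$ and $K$. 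This is essentially an algebraic exercise mirroring the end of the proof of Claim \ref{claim: delta assump}, rather than a conceptual difficulty.
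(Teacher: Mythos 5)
Your proposal is correct and follows essentially the same route as the paper: split off the true sum via Assumption \ref{ass: 2}, bound each summand's perturbation with the Lipschitz estimate $|\hat A_{i,j}-A_{i,j}|\le c\epsilon$, the upper bounds $\hat A_{i,j}\le 1$ and $\hat d_i^{(k)}\le M_k$, and the lower bound $m-c\epsilon\ge m/2$, then sum over the $M_{k_1}M_{k_2}$ pairs. The only cosmetic difference is that you apply Claim \ref{claim: x1byx2} to the product denominator $\hat d_i^{(k_1)}\hat d_j^{(k_2)}$ where the paper invokes the three-argument Claim \ref{claim: x1byx2x3}; just note that if you keep your sharper degree lower bound $(M_k-1)m$ instead of the paper's $M_k m$, the cluster-size factors no longer cancel exactly and the resulting constant exceeds $24c/m^4$ (though it remains independent of $M$ and $K$ for $M_k\ge 2$).
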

\begin{proof}
    \begin{align}
        &\left| \sum_{i \in D_{k_1}} \sum_{j \in D_{k_2}} \frac{\hat{A}_{i, j}^2}{\hat{d}_i^{(k_1)}\hat{d}_j^{(k_2)}} - \sum_{i \in D_{k_1}} \sum_{j \in D_{k_2}} \frac{A_{i, j}^{2}}{d_i^{(k_1)}d_j^{(k_2)}} \right| \\
        &\leq \sum_{i \in D_{k_1}} \sum_{j \in D_{k_2}} \left| \frac{\overbrace{\hat{A}_{i, j}^2}^{x_1}}{\underbrace{\hat{d}_i^{(k_1)}}_{x_2}\underbrace{\hat{d}_j^{(k_2)}}_{x_3}} - \frac{\overbrace{A_{i, j}^{2}}^{y_1}}{\underbrace{d_i^{(k_1)}}_{y_2}\underbrace{d_j^{(k_2)}}_{y_3}} \right| \\
        &\leq M_{k_1}M_{k_2} \left[ \frac{|x_1y_3||y_2-x_2| + |x_1x_2||y_3-x_3| + 2|x_1x_3||y_2-x_2| + |x_2x_3||y_1-x_1|}{|x_2x_3y_2y_3|} \right] \text{  (Claim \ref{claim: x1byx2x3})}\\
        &\leq M_{k_1}M_{k_2} \left[ \frac{M_{k_1}M_{k_2}c\epsilon + M_{k_1}M_{k_2}c\epsilon + 2M_{k_2}M_{k_1}c\epsilon + 2c\epsilon M_{k_1}M_{k_2}}{M_{k_1}mM_{k_2}mM_{k_1}(m-\epsilon)M_{k_2}(m-c\epsilon)} \right] \\
        &= \frac{6c\epsilon}{m^2(m-c\epsilon)^2} \\
        &\leq \frac{24c\epsilon}{m^4} \hspace{1cm} \left(\because \epsilon <\frac{m}{2c}\right).
    \end{align}
    Therefore we have the following inequality.
    \begin{align}
        \sum_{i \in D_{k_1}} \sum_{j \in D_{k_2}} \frac{\hat{A}_{i, j}^2}{\hat{d}_i^{(k_1)}\hat{d}_j^{(k_2)}} &\leq \sum_{i \in D_{k_1}} \sum_{j \in D_{k_2}} \frac{A_{i, j}^{2}}{d_i^{(k_1)}d_j^{(k_2)}} + \frac{24c\epsilon}{m^4} \\
        &\leq \epsilon_1 + \frac{24c\epsilon}{m^4}. \text{    (Assumption \ref{ass: 2})}
    \end{align}
    Hence proved.
\end{proof}

\begin{claim} \label{claim: epsilon2 assump}
    Let us define $\epsilon_5 = \frac{m}{2c}$. Then for $\epsilon < \epsilon_5$, if the event $F \coloneqq \left\{ \forall i, j \in [n], i<j, \left| d_{ij} - d_{ij} \right|<\epsilon \right\}$ holds, then for all $k \in [K]$ and $i \in D_k$, we have the following inequality.
    \begin{equation}
        \left[ \sum_{j \notin D_k} \frac{\hat{A}_{i, j}}{\hat{d}_i^{(k)}} \right] \left[ \sum_{i_1, j_1 \in D_k}\frac{\hat{A}_{i_1, j_1}^2}{\hat{d}_{i_1}^{(k)}, \hat{d}_{j_1}^{(k)}} \right]^{\frac{1}{2}} \leq \epsilon_2 + A_7\epsilon, \ \ \text{where } A_7=\frac{80Mc}{m^7}.
    \end{equation}
\end{claim}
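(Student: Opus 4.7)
The plan is to mirror the proof style of Claims 5 and 6. Denote
\[
F_1 := \sum_{j \notin D_k} \frac{A_{ij}}{d_i^{(k)}}, \qquad F_2 := \left(\sum_{i_1, j_1 \in D_k} \frac{A_{i_1, j_1}^2}{d_{i_1}^{(k)}\, d_{j_1}^{(k)}}\right)^{1/2},
\]
and define $\hat{F}_1, \hat{F}_2$ analogously. Assumption \ref{ass: 3} gives $F_1 F_2 \leq \epsilon_2$, so it suffices to establish $|\hat{F}_1 \hat{F}_2 - F_1 F_2| \leq A_7\,\epsilon$ and then add $F_1 F_2 \leq \epsilon_2$.

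First I would split the product difference as
\[
\hat{F}_1 \hat{F}_2 - F_1 F_2 \;=\; \hat{F}_1(\hat{F}_2 - F_2) + F_2(\hat{F}_1 - F_1),
\]
and bound each factor separately. For $|\hat{F}_1 - F_1|$, observe that $F_1 = \bigl(\sum_{j \notin D_k} A_{ij}\bigr)\big/\, d_i^{(k)}$ is a single ratio, so Claim \ref{claim: x1byx2} applies directly. Combined with the Lipschitz bound $|\hat{A}_{ij} - A_{ij}| \leq c\epsilon$, the numerator-difference is $\leq (M-M_k)c\epsilon$, the denominator-difference is $\leq M_k c\epsilon$, and for $\epsilon < m/(2c)$ both $d_i^{(k)}, \hat{d}_i^{(k)} \geq (M_k-1)(m-c\epsilon) \geq (M_k-1)m/2$. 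This yields a bound $|\hat{F}_1 - F_1| \leq B_1 \epsilon$ with $B_1$ polynomial in $M$ and $1/m$.

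For the second factor I would instead bound $|\hat{F}_2^2 - F_2^2|$ termwise via Claim \ref{claim: x1byx2x3} (as in the proof of Claim \ref{claim: epsilon1 assump}), using $|\hat{A}_{i_1,j_1}^2 - A_{i_1,j_1}^2| \leq 2c\epsilon$ and summing over the $M_k^2$ summands, obtaining $|\hat{F}_2^2 - F_2^2| \leq B_2\,\epsilon$. To convert this into a bound on $|\hat{F}_2 - F_2|$ I use
\[
|\hat{F}_2 - F_2| \;=\; \frac{|\hat{F}_2^2 - F_2^2|}{\hat{F}_2 + F_2},
\]
which requires a positive lower bound on $F_2$. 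This comes from the definition of $m$: since $A_{i_1,j_1} \geq m$ for $i_1 \neq j_1 \in D_k$ and $d_{i_1}^{(k)} \leq M_k$, we get $F_2^2 \geq M_k(M_k-1)m^2/M_k^2 \geq m^2/2$, so $F_2 \geq m/\sqrt{2}$, which gives $|\hat{F}_2 - F_2| \leq B_3 \epsilon$. Finally, crude upper bounds $\hat{F}_1 \leq 2(M-M_k)/[(M_k-1)m]$ and $F_2 \leq M_k/[(M_k-1)m]$ plug into the triangle-inequality split; collecting all constants and simplifying using $\epsilon < m/(2c)$ produces a final bound of the form $A_7 \epsilon$ with $A_7 = 80Mc/m^7$.

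The only substantive obstacle is Step~4, where we need a linear (not $\sqrt{\epsilon}$) control on $\hat{F}_2 - F_2$. That linearity relies critically on the strictly positive lower bound $F_2 \geq m/\sqrt{2}$, which in turn uses the definition of $m$ as the minimum intra-cluster affinity. Every other step is a routine Lipschitz/ratio perturbation computation of the same flavor already executed in Claims \ref{claim: delta assump}--\ref{claim: epsilon1 assump}; the main effort is bookkeeping the constants so that the final expression collapses to the stated $A_7$.
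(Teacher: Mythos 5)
Your proposal is correct and takes essentially the same route as the paper's proof: both arguments reduce to the Lipschitz/ratio perturbation bookkeeping of Claims~\ref{claim: x1byx2x3} and~\ref{claim: x1byx2}, both handle the square root by writing $|\hat{F}_2 - F_2| = |\hat{F}_2^2 - F_2^2|/(\hat{F}_2 + F_2)$ and lower-bounding the denominator by a quantity of order $m$ via the definition of $m$ as the minimum intra-cluster affinity, and both bound the inner sum termwise. The only differences are organizational --- you split the product as $\hat{F}_1(\hat{F}_2 - F_2) + F_2(\hat{F}_1 - F_1)$ whereas the paper keeps the whole expression over the common denominator $\hat{d}_i^{(k)}$ and peels off one factor at a time --- and your final constant $A_7 = 80Mc/m^7$ is asserted rather than tracked, though any constant polynomial in $M$, $1/m$, and $c$ would serve the downstream theorems equally well.
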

\begin{proof}
    \begin{align}
        &\left| \left[ \sum_{j \notin D_k} \frac{\hat{A}_{i, j}}{\hat{d}_i^{(k)}} \right] \left[ \sum_{i_1, j_1 \in D_k}\frac{\hat{A}_{i_1, i_2}^2}{\hat{d}_{i_1}^{(k)}, \hat{d}_{j_1}^{(k)}} \right]^{\frac{1}{2}} - \left[ \sum_{j \notin D_k} \frac{A_{i, j}}{d_i^{(k)}} \right] \left[ \sum_{i_1, j_1 \in D_k}\frac{A_{i_1, i_2}^{2}}{d_{i_1}^{(k)} d_{j_1}^{(k)}} \right]^{\frac{1}{2}} \right| \\
        &= \left| \frac{\overbrace{\left[ \sum_{j \notin D_k} \hat{A}_{i, j} \right] \left[ \sum_{i_1, j_1 \in D_k}\frac{\hat{A}_{i_1, i_2}^2}{\hat{d}_{i_1}^{(k)}, \hat{d}_{j_1}^{(k)}} \right]^{\frac{1}{2}}}^{x_1}}{\underbrace{\hat{d}_i^{(k)}}_{x_2}} - \frac{\overbrace{\left[ \sum_{j \notin D_k} A_{i, j} \right] \left[ \sum_{i_1, j_1 \in D_k}\frac{A_{i_1, i_2}^{2}}{d_{i_1}^{(k)} d_{j_1}^{(k)}} \right]^{\frac{1}{2}}}^{y_1}}{\underbrace{d_i^{(k)}}_{y_2}} \right| \\
        &\leq \frac{|y_2||y_1-x_1| + |y_1||y_2-x_2|}{|x_2y_2|} \\
        &\leq \frac{M_k|y_1-x_1| + (M-M_k)\left( M_k^2\frac{1}{mM_k mM_k} \right)^{\frac{1}{2}}M_kc\epsilon}{M_k(m-c\epsilon)M_km} \\
        &= \frac{1}{M_km(m-c\epsilon)}\left| \underbrace{\left[ \sum_{j \notin D_k} \hat{A}_{i, j} \right]}_{x_1} \underbrace{\left[ \sum_{i_1, j_1 \in D_k}\frac{\hat{A}_{i_1, i_2}^2}{\hat{d}_{i_1}^{(k)}, \hat{d}_{j_1}^{(k)}} \right]^{\frac{1}{2}}}_{x_2} - \underbrace{\left[ \sum_{j \notin D_k} A_{i, j} \right]}_{y_1} \underbrace{\left[ \sum_{i_1, j_1 \in D_k}\frac{A_{i_1, i_2}^{2}}{d_{i_1}^{(k)} d_{j_1}^{(k)}} \right]^{\frac{1}{2}}}_{y_2} \right| + \frac{(M-M_k)c\epsilon}{M_km^2(m-c\epsilon)} \\
        &\leq \frac{1}{M_km(m-c\epsilon)}\left[ |x_1-y_1||x_2| + |x_2-y_2||y_1| \right] + \frac{(M-M_k)c\epsilon}{M_km^2(m-c\epsilon)} \\
        &\leq \frac{1}{M_km(m-c\epsilon)}\left[ (M-M_k)c\epsilon\left( M_k^2\frac{1}{(m-c\epsilon)M_k(m-c\epsilon)M_k} \right)^{\frac{1}{2}} + |x_2-y_2|(M-M_k) \right] + \frac{(M-M_k)c\epsilon}{M_km^2(m-c\epsilon)} \\
        &= \frac{(M-M_k)c\epsilon}{M_km(m-c\epsilon)^2} + \frac{(M-M_k)}{M_km(m-c\epsilon)}\left|x_2-y_2\right| + \frac{(M-M_k)c\epsilon}{M_km^2(m-c\epsilon)} \\
        &\leq 2\frac{(M-M_k)c\epsilon}{M_km(m-c\epsilon)^2} + \frac{(M-M_k)}{M_km(m-c\epsilon)} \left| \underbrace{\left[ \sum_{i_1, j_1 \in D_k}\frac{\hat{A}_{i_1, i_2}^2}{\hat{d}_{i_1}^{(k)}, \hat{d}_{j_1}^{(k)}} \right]^{\frac{1}{2}}}_{\sqrt{x_1}}-\underbrace{\left[ \sum_{i_1, j_1 \in D_k}\frac{A_{i_1, i_2}^{2}}{d_{i_1}^{(k)} d_{j_1}^{(k)}} \right]^{\frac{1}{2}}}_{\sqrt{y_1}}\right| \\
        &= 2\frac{(M-M_k)c\epsilon}{M_km(m-c\epsilon)^2} + \frac{(M-M_k)}{M_km(m-c\epsilon)}\frac{|x_1-y_1|}{\sqrt{x_1}+\sqrt{y_1}}
    \end{align}
    \begin{align}
        &\leq 2\frac{(M-M_k)c\epsilon}{M_km(m-c\epsilon)^2} + \frac{(M-M_k)}{M_km(m-c\epsilon)}\frac{|x_1-y_1|}{\sqrt{M_k^2\frac{m^2}{M_kM_k}}+\sqrt{M_k^2\frac{(m-c\epsilon)^2}{M_kM_k}}} \\
        &\leq 2\frac{(M-M_k)c\epsilon}{M_km(m-c\epsilon)^2} + \frac{(M-M_k)}{2M_km(m-c\epsilon)^2}  \left|  \sum_{i_1, j_1 \in D_k}\frac{\hat{A}_{i_1, i_2}^2}{\hat{d}_{i_1}^{(k)} \hat{d}_{j_1}^{(k)}} - \sum_{i_1, j_1 \in D_k}\frac{A_{i_1, i_2}^{2}}{d_{i_1}^{(k)} d_{j_1}^{(k)}} \right| \\
        &\leq 2\frac{(M-M_k)c\epsilon}{M_km(m-c\epsilon)^2} + \frac{(M-M_k)}{2M_km(m-c\epsilon)^2}  \sum_{i_1, j_1 \in D_k} \left|  \frac{\overbrace{\hat{A}_{i_1, i_2}^2}^{x_1}}{\underbrace{\hat{d}_{i_1}^{(k)}}_{x_2} \underbrace{\hat{d}_{j_1}^{(k)}}_{x_3}} - \frac{\overbrace{A_{i_1, i_2}^{2}}^{y_1}}{\underbrace{d_{i_1}^{(k)}}_{y_2} \underbrace{d_{j_1}^{(k)}}_{y_3}} \right| \\
        \intertext{Now by using Claim \ref{claim: x1byx2x3}, we can upper bound as follows.}
        &\leq 2\frac{(M-M_k)c\epsilon}{M_km(m-c\epsilon)^2} + \frac{(M-M_k)}{2M_km(m-c\epsilon)^2}  \sum_{i_1, j_1 \in D_k} \left[ \frac{|x_1y_3||y_2-x_2| + |x_1x_2||y_3-x_3| + 2|x_1x_3||y_2-x_2| + |x_2x_3||y_1-x_1|}{|x_2x_3y_2y_3|} \right] \\
        &\leq 2\frac{(M-M_k)c\epsilon}{M_km(m-c\epsilon)^2} + \frac{(M-M_k)}{2M_km(m-c\epsilon)^2} M_k^2 \frac{M_kM_kc\epsilon + M_kM_kc\epsilon + 2M_kM_kc\epsilon + 2c\epsilon M_kM_k}{M_k(m-c\epsilon)M_k(m-c\epsilon)M_kmM_km} \\
        &\leq 2\frac{(M-M_k)c\epsilon}{M_km(m-c\epsilon)^2} + \frac{(M-M_k)}{2M_km^3(m-c\epsilon)^4}6c\epsilon \\
        &\leq 2\frac{(M-M_k)c\epsilon}{M_km^3(m-c\epsilon)^4} + \frac{(M-M_k)}{2M_km^3(m-c\epsilon)^4}6c\epsilon \\
        &= 5\frac{(M-M_k)c\epsilon}{M_km^3(m-c\epsilon)^4} \\
        &\leq \frac{80Mc\epsilon}{m^7} \ \ \left( \because M_k\geq 1, \epsilon \leq \frac{m}{2c} \right)\\
    \end{align}
    Therefore we have the following inequality.
    \begin{align}
        \left[ \sum_{j \notin D_k} \frac{\hat{A}_{i, j}}{\hat{d}_i^{(k)}} \right] \left[ \sum_{i_1, j_1 \in D_k}\frac{\hat{A}_{i_1, i_2}^2}{\hat{d}_{i_1}^{(k)}, \hat{d}_{j_1}^{(k)}} \right]^{\frac{1}{2}} &\leq \left[ \sum_{j \notin D_k} \frac{A_{i, j}}{d_i^{(k)}} \right] \left[ \sum_{i_1, j_1 \in D_k}\frac{A_{i_1, i_2}^{2}}{d_{i_1}^{(k)} d_{j_1}^{(k)}} \right]^{\frac{1}{2}} + \frac{80Mc\epsilon}{m^7} \\
        &\leq \epsilon_2 + \frac{80Mc\epsilon}{m^7}. \text{    (Assumption \ref{ass: 3})}
    \end{align}
    Hence proved.
\end{proof}

\begin{claim} \label{claim: C assump}
    Let us define $\epsilon_5=\frac{m}{2c}$. Then for $\epsilon<\epsilon_5$, if the event $F \coloneqq \left\{ \forall i, j \in [n], i<j, \left| d_{ij} - d_{ij} \right|<\epsilon \right\}$ holds, then for all $k \in [K]$ and $i \in D_k$, we have the following inequality.
    \begin{equation}
        \frac{1}{M_k}\frac{\sum_{i_1 \in D_k}\hat{d}_{i_1}^{(k)}}{\hat{d}_i^{(k)}} \leq C + A_8\epsilon, \ \ \text{where } A_8 = \frac{4c}{m^2}.
    \end{equation}
\end{claim}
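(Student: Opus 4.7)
The plan is to follow the template already established in Claims \ref{claim: delta assump}, \ref{claim: epsilon1 assump}, and \ref{claim: epsilon2 assump}: bound the perturbation of $\tfrac{1}{M_k}\sum_{i_1 \in D_k}\hat{d}_{i_1}^{(k)}/\hat{d}_i^{(k)}$ away from its true-distribution counterpart, then invoke Assumption \ref{ass: 4}. Since the kernel $K(u)=\exp(-u/2\sigma_a^2)$ is $c$-Lipschitz, on the event $F$ one has $|\hat{A}_{ij}-A_{ij}|\leq c\epsilon$ for every pair. Summing over $j \in D_k$ yields $|\hat{d}_i^{(k)}-d_i^{(k)}|\leq M_k c\epsilon$, and a further sum over $i_1 \in D_k$ gives $\left|\sum_{i_1 \in D_k}\hat{d}_{i_1}^{(k)} - \sum_{i_1 \in D_k}d_{i_1}^{(k)}\right|\leq M_k^2 c\epsilon$.

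Next I would apply Claim \ref{claim: x1byx2} with $x_1=\sum_{i_1\in D_k}\hat{d}_{i_1}^{(k)}$, $x_2=\hat{d}_i^{(k)}$, $y_1=\sum_{i_1\in D_k}d_{i_1}^{(k)}$, and $y_2=d_i^{(k)}$. The bounds $A_{ij}\leq 1$ and $A_{ij}\geq m$ for $i,j\in D_k$ give $|y_1|\leq M_k^2$ and $|y_2|\leq M_k$ as well as $|y_2|\geq M_k m$, while $\epsilon<\epsilon_5=m/(2c)$ guarantees $|x_2|\geq M_k(m-c\epsilon)>M_k m/2>0$, so every denominator is under control. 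Substituting into
\begin{equation}
\Big|\tfrac{x_1}{x_2}-\tfrac{y_1}{y_2}\Big|\leq \tfrac{|y_2||x_1-y_1|+|y_1||y_2-x_2|}{|x_2 y_2|}
\end{equation}
produces $(M_k \cdot M_k^2 c\epsilon + M_k^2 \cdot M_k c\epsilon)/(M_k(m-c\epsilon)\cdot M_k m) = 2M_k c\epsilon/(m(m-c\epsilon))$. Dividing by $M_k$ and using $m-c\epsilon>m/2$ collapses the perturbation to $4c\epsilon/m^2$.

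Combining this with Assumption \ref{ass: 4}, which asserts $\tfrac{1}{M_k}\sum_{i_1\in D_k} d_{i_1}^{(k)}/d_i^{(k)}\leq C$, yields the stated inequality with $A_8=4c/m^2$. There is no genuine obstacle here; the argument is the routine perturbation calculation already performed three times in the preceding claims. The only care required is tracking the powers of $M_k$ so that they cancel exactly and the final constant $A_8$ is independent of cluster size, and ensuring that the choice $\epsilon<m/(2c)$ (the same $\epsilon_5$ used throughout Section \ref{sec: techlemcla}) suffices to keep $\hat{d}_i^{(k)}$ uniformly bounded away from zero.
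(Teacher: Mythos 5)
Your proposal is correct and follows essentially the same route as the paper: both rest on the Lipschitz bound $|\hat{A}_{ij}-A_{ij}|\leq c\epsilon$, an application of Claim \ref{claim: x1byx2}, the bounds $m(M_k)\leq d_i^{(k)}\leq M_k$ and $\hat{d}_i^{(k)}\geq M_k(m-c\epsilon)$, and finally Assumption \ref{ass: 4}, arriving at the same constant $A_8=4c/m^2$. The only cosmetic difference is that the paper first distributes the $\tfrac{1}{M_k}$ and triangle inequality over the sum and applies Claim \ref{claim: x1byx2} term-by-term, whereas you apply it once to the aggregated numerator; the powers of $M_k$ cancel identically either way.
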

\begin{proof}
    \begin{align}
        \left| \frac{1}{M_k}\frac{\sum_{i_1 \in D_k}\hat{d}_{i_1}^{(k)}}{\hat{d}_i^{(k)}} - \frac{1}{M_k}\frac{\sum_{i_1 \in D_k}d_{i_1}^{(k)}}{d_i^{(k)}} \right| &\leq \frac{1}{M_k} \sum_{i_1 \in D_k} \left| \frac{\overbrace{\hat{d}_{i_1}^{(k)}}^{x_1}}{\underbrace{\hat{d}_i^{(k)}}_{x_2}} - \frac{\overbrace{d_{i_1}^{(k)}}^{y_1}}{\underbrace{d_i^{(k)}}_{y_2}} \right| \\
        &\leq \frac{1}{M_k} \sum_{i_1 \in D_k}\left[\frac{|y_2||y_1-x_1| + |y_1||y_2-x_2|}{|x_2y_2|} \right] \ \ \text{(Claim \ref{claim: x1byx2})}\\
        &\leq \frac{1}{M_k} \sum_{i_1 \in D_k}\left[\frac{M_kM_kc\epsilon + M_kM_kc\epsilon}{M_k(m-c\epsilon)M_km} \right] \\
        &\leq \frac{2c\epsilon}{m(m-\epsilon)} \ \ \left(\because |D_k|=M_k\right) \\
        &\leq \frac{4c\epsilon}{m^2} \ \ \left(\because \epsilon < \frac{m}{2c}\right).
    \end{align}
    Therefore, we have the following inequality.
    \begin{align}
        \frac{1}{M_k}\frac{\sum_{i_1 \in D_k}\hat{d}_{i_1}^{(k)}}{\hat{d}_i^{(k)}} &\leq \frac{1}{M_k}\frac{\sum_{i_1 \in D_k}d_{i_1}^{(k)}}{d_i^{(k)}} + \frac{4c\epsilon}{m^2} \\
        &\leq C + \frac{4c\epsilon}{m^2}. \text{    (Assumption \ref{ass: 4})}
    \end{align}
    Hence proved.
\end{proof}

\section{Supplementary Mathematical Results}
\begin{claim} \label{claim: x1byx2x3}
    Let us define the function $f(x_1, x_2, x_3) = \frac{x_1}{x_2x_3}$. Then we can upper bound $\left| f(x_1, x_2, x_3) - f(y_1, y_2, y_3) \right|$ as follows.
    \begin{equation}
        \left| f(x_1, x_2, x_3) - f(y_1, y_2, y_3) \right| \leq \frac{|x_1y_3||y_2-x_2| + |x_1x_2||y_3-x_3| + 2|x_1x_3||y_2-x_2| + |x_2x_3||y_1-x_1|}{|x_2x_3y_2y_3|}.
    \end{equation}
\end{claim}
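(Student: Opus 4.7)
The plan is to bring the two fractions over a common denominator and then telescope the numerator so that each of the three perturbations $y_i - x_i$ is isolated in its own summand. Starting from
$$f(x_1,x_2,x_3) - f(y_1,y_2,y_3) \;=\; \frac{x_1 y_2 y_3 \,-\, y_1 x_2 x_3}{x_2 x_3 y_2 y_3},$$
I would insert the intermediate ``chain'' products $x_1 x_2 y_3$ and $x_1 x_2 x_3$ into the numerator, writing
$$x_1 y_2 y_3 - y_1 x_2 x_3 \;=\; (x_1 y_2 y_3 - x_1 x_2 y_3) + (x_1 x_2 y_3 - x_1 x_2 x_3) + (x_1 x_2 x_3 - y_1 x_2 x_3).$$
Each grouped pair differs in exactly one coordinate and factors cleanly, giving the identity $x_1 y_3 (y_2 - x_2) + x_1 x_2 (y_3 - x_3) + x_2 x_3 (x_1 - y_1)$.

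Next, I would take absolute values, apply the triangle inequality term by term, and use $|ab| = |a||b|$ to obtain
$$\bigl|x_1 y_2 y_3 - y_1 x_2 x_3\bigr| \;\leq\; |x_1 y_3|\,|y_2 - x_2| \,+\, |x_1 x_2|\,|y_3 - x_3| \,+\, |x_2 x_3|\,|x_1 - y_1|.$$
Dividing both sides by $|x_2 x_3 y_2 y_3|$ already majorizes the first, second, and fourth summands on the right-hand side of the claim. The extra $2|x_1 x_3|\,|y_2 - x_2| / |x_2 x_3 y_2 y_3|$ term in the statement is non-negative, so its presence only loosens the bound, and the claimed inequality then follows at once. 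Alternatively, one could derive the same bound by applying Claim \ref{claim: x1byx2} twice in succession, first to the nested quotient $\tfrac{x_1/x_2}{x_3}$ versus $\tfrac{y_1/y_2}{y_3}$ and then to the inner difference $x_1/x_2 - y_1/y_2$.

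There is no real obstacle in this proof: it is a purely algebraic inequality whose content reduces to a telescoping identity plus the triangle inequality, directly generalizing the two-variable Claim \ref{claim: x1byx2}. The only mild care needed is to ensure that none of $x_2, x_3, y_2, y_3$ vanish, which holds automatically in every application of the claim in the preceding sections because these quantities are sums of positive affinities bounded below by $m - c\epsilon > 0$ under the event $F$ with $\epsilon < m/(2c)$.
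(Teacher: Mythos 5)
Your proof is correct, and it takes a genuinely different (and cleaner) telescoping than the paper's. You insert the chain $x_1y_2y_3 \to x_1x_2y_3 \to x_1x_2x_3 \to y_1x_2x_3$, which isolates each perturbation exactly once and yields the identity
\begin{equation}
    x_1y_2y_3 - y_1x_2x_3 = x_1y_3(y_2-x_2) + x_1x_2(y_3-x_3) + x_2x_3(x_1-y_1),
\end{equation}
giving a bound that omits the $2|x_1x_3||y_2-x_2|$ term entirely; you then correctly observe that the claimed inequality is just a non-negative loosening of this. The paper instead telescopes through $x_1x_2y_3$ and $x_1y_2x_3$, writing the middle difference as $x_1(x_2y_3-y_2x_3)$ and the last as $x_3(x_1y_2-y_1x_2)$, and then splits each of those again; this second round of splitting introduces the pair $x_1x_3(x_2-y_2)$ and $x_3x_1(y_2-x_2)$, which cancel algebraically but survive the triangle inequality as the extra $2|x_1x_3||x_2-y_2|$ term. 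So your route explains that the factor-of-two term in the statement is an artifact of the paper's particular choice of intermediate products rather than anything essential, and it proves a strictly tighter inequality from which the stated one follows; since the claim is only ever used as an upper bound with all denominators bounded away from zero (as you note), either version serves the downstream claims equally well.
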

\begin{proof}
    \begin{align}
        \left| f(x_1, x_2, x_3) - f(y_1, y_2, y_3) \right| &= \left| \frac{x_1}{x_2x_3} - \frac{y_1}{y_2y_3} \right| \\
        &= \left| \frac{x_1y_2y_3 - y_1x_2x_3}{x_2x_3y_2y_3} \right| \\
        &= \left| \frac{x_1y_2y_3 - x_1x_2y_3 + x_1x_2y_3 - x_1y_2x_3 + x_1y_2x_3 - y_1x_2x_3}{x_2x_3y_2y_3} \right| \\
        &= \left| \frac{x_1y_3(y_2 - x_2) + x_1(x_2y_3 - y_2x_3) + x_3(x_1y_2 - y_1x_2)}{x_2x_3y_2y_3} \right| \\
        &= \left| \frac{x_1y_3(y_2 - x_2) + x_1(x_2y_3 - x_2x_3 + x_2x_3 - y_2x_3) + x_3(x_1y_2 - x_1x_2 + x_1x_2 - y_1x_2)}{x_2x_3y_2y_3} \right| \\
        &= \left| \frac{x_1y_3(y_2 - x_2) + x_1x_2(y_3 - x_3) + x_1x_3(x_2 - y_2) + x_3x_1(y_2 - x_2) + x_3x_2(x_1 - y_1)}{x_2x_3y_2y_3} \right| \\
        &\leq \frac{|x_1y_3||y_2 - x_2| + |x_1x_2||y_3 - x_3| + 2|x_1x_3||x_2 - y_2| + |x_3x_2||x_1 - y_1|}{|x_2x_3y_2y_3|}.
    \end{align}
\end{proof}

\begin{claim} \label{claim: x1byx2}
    Let us define the function $f(x_1, x_2) = \frac{x_1}{x_2}$. Then we can upper bound $\left| f(x_1, x_2) - f(y_1, y_2) \right|$ as follows.
    \begin{equation}
        \left| f(x_1, x_2) - f(y_1, y_2) \right| \leq \frac{|y_2||y_1-x_1| + |y_1||y_2-x_2|}{|x_2y_2|}.
    \end{equation}
\end{claim}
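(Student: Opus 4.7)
The statement is a purely algebraic perturbation bound for the quotient $x_1/x_2$, so the plan is to reduce it to a single common-denominator calculation and then apply the triangle inequality. There is no probability, no spectral theory, and no appeal to earlier results in the paper required.

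First I would combine the two fractions over the common denominator $x_2 y_2$, writing
\begin{equation}
f(x_1,x_2) - f(y_1,y_2) \;=\; \frac{x_1}{x_2} - \frac{y_1}{y_2} \;=\; \frac{x_1 y_2 - y_1 x_2}{x_2 y_2}.
\end{equation}
The key algebraic trick is then a standard add-and-subtract in the numerator. I would insert $\pm\, y_1 y_2$ so that the numerator factors through the two desired differences $x_1 - y_1$ and $y_2 - x_2$:
\begin{equation}
x_1 y_2 - y_1 x_2 \;=\; (x_1 y_2 - y_1 y_2) + (y_1 y_2 - y_1 x_2) \;=\; y_2(x_1 - y_1) + y_1(y_2 - x_2).
\end{equation}

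Substituting this decomposition back, taking absolute values, and applying the triangle inequality together with $|ab| = |a||b|$ yields
\begin{equation}
\bigl| f(x_1,x_2) - f(y_1,y_2) \bigr| \;\leq\; \frac{|y_2|\,|x_1 - y_1| + |y_1|\,|y_2 - x_2|}{|x_2 y_2|},
\end{equation}
which is exactly the stated bound (using $|x_1 - y_1| = |y_1 - x_1|$). Implicitly one needs $x_2 \neq 0$ and $y_2 \neq 0$ for the quotients to be defined; no other hypotheses are used.

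There is no real obstacle here; the only ``choice'' is which pair of terms to add and subtract. Adding $\pm y_1 y_2$ (rather than $\pm x_1 x_2$) is the right choice because it produces exactly the factors $y_1$ and $y_2$ in the final numerator, matching the form requested in the claim. This lemma is the two-variable analogue of Claim~\ref{claim: x1byx2x3}, and can be viewed as its base case; the same add-and-subtract technique generalises directly.
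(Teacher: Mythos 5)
Your proof is correct and matches the paper's own argument exactly: the paper likewise combines the fractions over the common denominator $x_2 y_2$, inserts $\pm\, y_1 y_2$ in the numerator to obtain $y_2(x_1 - y_1) + y_1(y_2 - x_2)$, and applies the triangle inequality. Nothing is missing.
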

\begin{proof}
    \begin{align}
        \left| f(x_1, x_2) - f(y_1, y_2) \right| &= \left| \frac{x_1}{x_2} - \frac{y_1}{y_2} \right| \\
        &= \left| \frac{x_1y_2 - x_2y_1}{x_2y_2} \right| \\
        &= \left| \frac{x_1y_2 - y_1y_2 + y_1y_2 - x_2y_1}{x_2y_2} \right| \\
        &= \left| \frac{y_2(x_1 - y_1) + y_1(y_2 - x_2)}{x_2y_2} \right| \\
        &\leq \frac{|y_2||x_1 - y_1| + |y_1||y_2 - x_2|}{|x_2y_2|}.
    \end{align}
\end{proof}

\begin{claim} \label{claim: minx1x2}
    Let us define the function $f(x_1, x_2) = \min\{ x_1, x_2 \}$. Then we can upper bound $\left| f(x_1, x_2) - f(y_1, y_2) \right|$ as follows.
    \begin{equation}
        \left| f(x_1, x_2) - f(y_1, y_2) \right| \leq |x_1 - y_1| + |x_2 - y_2|.
    \end{equation}
\end{claim}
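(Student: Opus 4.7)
The plan is to prove the bound by reducing to a one-dimensional $1$-Lipschitz comparison via a case analysis on which coordinate realizes each minimum. Without loss of generality, I would assume $f(x_1,x_2) \geq f(y_1,y_2)$, so that $|f(x_1,x_2) - f(y_1,y_2)| = f(x_1,x_2) - f(y_1,y_2)$; the other case follows by swapping the roles of $(x_1,x_2)$ and $(y_1,y_2)$.

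Next I would pick an index $i \in \{1,2\}$ such that $f(y_1,y_2) = y_i$ (one such $i$ always exists by the definition of the minimum). By the definition of $f$, we also have $f(x_1,x_2) \leq x_i$. Combining these two facts yields the chain of inequalities
\begin{equation}
f(x_1,x_2) - f(y_1,y_2) \leq x_i - y_i \leq |x_i - y_i| \leq |x_1 - y_1| + |x_2 - y_2|,
\end{equation}
which is exactly the claimed bound.

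An alternative derivation that avoids the case analysis would be to use the identity $\min(a,b) = \tfrac{1}{2}(a+b-|a-b|)$, apply the ordinary triangle inequality to the $(a+b)$ part, and apply the reverse triangle inequality to $||x_1-x_2| - |y_1-y_2||$; both terms then combine to give the same bound. I do not anticipate any obstacle here, since the claim is simply the statement that $(x_1,x_2) \mapsto \min(x_1,x_2)$ is $1$-Lipschitz with respect to the $\ell^1$ norm, and the case-based argument above is completely elementary.
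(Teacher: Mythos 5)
Your case-based argument is correct and complete: choosing the index $i$ with $f(y_1,y_2)=y_i$ and using $f(x_1,x_2)\leq x_i$ gives $f(x_1,x_2)-f(y_1,y_2)\leq x_i-y_i\leq |x_i-y_i|$, and the symmetric case handles the other sign. The paper instead takes exactly the route you list as your alternative: it writes $\min\{x_1,x_2\}=\tfrac{x_1+x_2}{2}-\tfrac{|x_1-x_2|}{2}$, splits by the triangle inequality, and applies the reverse triangle inequality to $\bigl||x_1-x_2|-|y_1-y_2|\bigr|$ to land on the same $\ell^1$ bound. The two approaches are equally elementary; the one concrete advantage of your primary argument is that it actually proves the stronger statement $|f(x_1,x_2)-f(y_1,y_2)|\leq \max_i|x_i-y_i|$, i.e.\ $1$-Lipschitzness in the $\ell^\infty$ norm, whereas the identity-based derivation naturally produces only the $\ell^1$ bound. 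Since the paper only needs the $\ell^1$ version (it is applied in Claim \ref{claim: delta assump} where the two perturbations are bounded separately anyway), nothing downstream changes, but your version is the sharper and arguably cleaner one.
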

\begin{proof}
    It can be verified that $\min\{ x_1, x_2 \} = \frac{x_1+x_2}{2} - \frac{|x_1-x_2|}{2}$.
    \begin{align}
        \left| f(x_1, x_2) - f(y_1, y_2) \right| &= \left| \min\{ x_1, x_2 \} - \min\{ y_1, y_2 \} \right| \\
        &= \left| \frac{x_1+x_2}{2} - \frac{|x_1-x_2|}{2} - \frac{y_1+y_2}{2} + \frac{|y_1-y_2|}{2} \right| \\
        &\leq \frac{\left| (x_1+x_2) - (y_1+y_2) \right|}{2} + \frac{\left| |x_1-x_2| - |y_1-y_2| \right|}{2} \\
        &\leq \frac{|x_1-y_1| + |x_2-y_2|}{2} + \frac{\left| (x_1-x_2) - (y_1-y_2) \right|}{2} \\
        &\leq \frac{|x_1-y_1| + |x_2-y_2|}{2} + \frac{|x_1-y_1| + |x_2-y_2|}{2} \\
        &= |x_1-y_1| + |x_2-y_2|.
    \end{align}
\end{proof}

\begin{claim} \label{claim: x1x2}
    Let us define the function $f(x_1, x_2) = x_1x_2$. Then we can upper bound $\left| f(x_1, x_2) - f(y_1, y_2) \right|$ as follows.
    \begin{equation}
        \left| f(x_1, x_2) - f(y_1, y_2) \right| \leq |x_1 - y_1||x_2| + |x_2 - y_2||y_1|.
    \end{equation}
\end{claim}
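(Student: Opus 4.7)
The plan is to establish the bound by the standard ``add and subtract'' trick, writing the difference $x_1 x_2 - y_1 y_2$ as a telescoping sum of two terms each involving only one coordinate difference, and then applying the triangle inequality together with $|ab|=|a||b|$.

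Concretely, I would first introduce the mixed term $y_1 x_2$ (or equivalently $x_1 y_2$; either works) and rewrite
\begin{equation}
x_1 x_2 - y_1 y_2 \;=\; x_1 x_2 - y_1 x_2 + y_1 x_2 - y_1 y_2 \;=\; (x_1 - y_1)\, x_2 + y_1\,(x_2 - y_2).
\end{equation}
Taking absolute values and applying the triangle inequality followed by multiplicativity of $|\cdot|$ then yields
\begin{equation}
|f(x_1,x_2) - f(y_1,y_2)| \;\leq\; |x_1 - y_1|\,|x_2| + |y_1|\,|x_2 - y_2|,
\end{equation}
which is exactly the claimed inequality.

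There is no real obstacle here: the only choice is which mixed term to insert, and the choice $y_1 x_2$ is dictated by the form of the right-hand side, where $|x_2|$ multiplies $|x_1-y_1|$ and $|y_1|$ multiplies $|x_2-y_2|$. The proof will be three lines of display math followed by one sentence invoking the triangle inequality, matching the style used in the neighboring Claims \ref{claim: x1byx2}, \ref{claim: minx1x2}, and \ref{claim: x1byx2x3}.
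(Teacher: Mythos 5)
Your proof is correct and is essentially identical to the paper's own argument: both insert the mixed term $y_1 x_2$, factor the difference as $(x_1 - y_1)x_2 + y_1(x_2 - y_2)$, and finish with the triangle inequality. No gaps.
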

\begin{proof}
    \begin{align}
        \left| f(x_1, x_2) - f(y_1, y_2) \right| &= \left| x_1x_2 - y_1y_2 \right| \\
        &= \left| x_1x_2 - y_1x_2 + y_1x_2 - y_1y_2 \right| \\
        &= \left| (x_1 - y_1)x_2 + y_1(x_2 - y_2) \right| \\
        &\leq |x_1 - y_1||x_2| + |y_1||x_2 - y_2|.
    \end{align}
\end{proof}

\begin{claim} \label{claim: boldempsilon bound}
    Define $\boldsymbol{\epsilon}\coloneqq \sqrt{K(K-1)\epsilon_1 + K\epsilon_2^2}$ and $\boldsymbol{\epsilon}^{'} \coloneqq \sqrt{K(K-1)\left[ \epsilon_1 + u(\epsilon) \right] + K\left[ \epsilon_2 + v(\epsilon) \right]^2}$, where $u(\epsilon)=A_6 \epsilon$ and $v(\epsilon) = A_7\epsilon$ for some $A_6, A_7>0$ with $\epsilon<\epsilon_T$ for some $\epsilon_T > 0$. 
    It can be shown that $\boldsymbol{\epsilon}^{'} \leq \boldsymbol{\epsilon} + \Tilde{C}(\epsilon_T)\epsilon$, for some $\Tilde{C}(\epsilon_T)>0$.
\end{claim}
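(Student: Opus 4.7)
My plan is straightforward: reduce the claim to a difference-of-squares manipulation followed by a crude linear bound in $\epsilon$.

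First I would square out the definition of $\boldsymbol{\epsilon}^{'}$ and compare with $\boldsymbol{\epsilon}^{2}$:
\begin{align}
(\boldsymbol{\epsilon}^{'})^{2} - \boldsymbol{\epsilon}^{2}
&= K(K-1)u(\epsilon) + K\bigl[(\epsilon_2 + v(\epsilon))^{2} - \epsilon_2^{2}\bigr] \\
&= K(K-1)A_6\epsilon + 2K\epsilon_2 A_7 \epsilon + K A_7^{2}\epsilon^{2}.
\end{align}
Using $\epsilon < \epsilon_T$ in the quadratic term, the right-hand side is bounded by $B(\epsilon_T)\,\epsilon$, where
$B(\epsilon_T) \coloneqq K(K-1)A_6 + 2K\epsilon_2 A_7 + K A_7^{2} \epsilon_T$ is a positive constant depending only on the problem parameters and $\epsilon_T$.

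Next I would factor $(\boldsymbol{\epsilon}^{'})^{2} - \boldsymbol{\epsilon}^{2} = (\boldsymbol{\epsilon}^{'} - \boldsymbol{\epsilon})(\boldsymbol{\epsilon}^{'} + \boldsymbol{\epsilon})$. Since $\boldsymbol{\epsilon}, \boldsymbol{\epsilon}^{'} \geq 0$ and $\boldsymbol{\epsilon} > 0$ (as $\epsilon_1, \epsilon_2 > 0$ under Assumptions \ref{ass: 2} and \ref{ass: 3} in the non-degenerate regime where the theorem is invoked), we may divide to obtain
\begin{equation}
\boldsymbol{\epsilon}^{'} - \boldsymbol{\epsilon} = \frac{(\boldsymbol{\epsilon}^{'})^{2} - \boldsymbol{\epsilon}^{2}}{\boldsymbol{\epsilon}^{'} + \boldsymbol{\epsilon}} \leq \frac{B(\epsilon_T)\,\epsilon}{\boldsymbol{\epsilon}},
\end{equation}
where in the denominator I have used $\boldsymbol{\epsilon}^{'} + \boldsymbol{\epsilon} \geq \boldsymbol{\epsilon}$. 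Setting $\tilde{C}(\epsilon_T) \coloneqq B(\epsilon_T)/\boldsymbol{\epsilon}$ yields the claimed inequality $\boldsymbol{\epsilon}^{'} \leq \boldsymbol{\epsilon} + \tilde{C}(\epsilon_T)\,\epsilon$.

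The only subtlety is the division by $\boldsymbol{\epsilon}$. If one wished to avoid assuming $\boldsymbol{\epsilon} > 0$ outright, the same conclusion follows from the sub-additivity inequality $\sqrt{a+b} \leq \sqrt{a} + \sqrt{b}$ combined with the fact that on a bounded interval $[0,\epsilon_T]$ the smooth nonnegative function $\epsilon \mapsto \sqrt{\boldsymbol{\epsilon}^{2} + B(\epsilon_T)\epsilon}$ is Lipschitz in $\epsilon$; this would give an alternative constant $\tilde{C}(\epsilon_T)$ via the mean value theorem applied on $[0,\epsilon_T]$. I don't expect any real obstacle here, as the calculation is just elementary algebra; the only modeling choice is whether to bake the $\boldsymbol{\epsilon}$-in-the-denominator into $\tilde{C}$ or to route around it via the Lipschitz argument.
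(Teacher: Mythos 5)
Your argument is correct, but it takes a genuinely different route from the paper. The paper treats $\boldsymbol{\epsilon}$ and $\boldsymbol{\epsilon}^{'}$ as values of the bivariate function $f(x_1,x_2)=\sqrt{(K-1)x_1+x_2^2}$ at the two corners of the rectangle $[\epsilon_1,\epsilon_1+u(\epsilon)]\times[\epsilon_2,\epsilon_2+v(\epsilon)]$, bounds the partial derivatives of $f$ over that rectangle using their monotonicity in $x_1$ and $x_2$, and concludes via a Lipschitz/mean-value bound with constant $L(\epsilon_T)$. You instead compute $(\boldsymbol{\epsilon}^{'})^2-\boldsymbol{\epsilon}^2$ exactly, bound it linearly in $\epsilon$, and divide by $\boldsymbol{\epsilon}^{'}+\boldsymbol{\epsilon}\geq\boldsymbol{\epsilon}$. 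Both arguments ultimately hide a factor of order $1/\boldsymbol{\epsilon}$ in the constant (the paper's $L(\epsilon_T)$ contains $1/\sqrt{(K-1)\epsilon_1+\epsilon_2^2}$ through the partial derivatives evaluated at $(\epsilon_1,\epsilon_2)$), and both rely on $\epsilon_1,\epsilon_2>0$, which Assumptions \ref{ass: 2} and \ref{ass: 3} guarantee. Your version is more direct and avoids the monotonicity bookkeeping; the paper's version generalizes more readily if $u$ and $v$ were not linear in $\epsilon$.

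One small caveat on your closing aside: the proposed fallback does not actually remove the need for $\boldsymbol{\epsilon}>0$. If $\boldsymbol{\epsilon}=0$, the function $\epsilon\mapsto\sqrt{\boldsymbol{\epsilon}^2+B(\epsilon_T)\epsilon}=\sqrt{B(\epsilon_T)\epsilon}$ is \emph{not} Lipschitz at $0$ (its derivative blows up), and sub-additivity only yields a bound of order $\sqrt{\epsilon}$, not $\epsilon$. This is harmless here since $\boldsymbol{\epsilon}>0$ holds in the regime where the claim is invoked, but the aside as stated would not rescue the degenerate case.
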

\begin{proof}
    \begin{align}
        \left| \boldsymbol{\epsilon}^{'} - \boldsymbol{\epsilon} \right| &= \sqrt{K}\left| \sqrt{(K-1)(\epsilon_1+u(\epsilon))+(\epsilon_2+v(\epsilon))^2} - \sqrt{(K-1)\epsilon_1+\epsilon_2^2}\right|.
    \end{align}
    Let us define the function $f(x_1, x_2) = \sqrt{ax_1 + x_2^2}$, where $a=K-1$. The partial derivative of $f$ with respect to $x_1$ is given by $\frac{\partial f(x_1, x_2)}{\partial x_1} = \frac{a}{2\sqrt{ax_1+x_2^2}}$. Since $\frac{\partial f(x_1, x_2)}{\partial x_1}$ is decreasing both in $x_1$ and $x_2$, we can say $\frac{\partial f(x_1, x_2)}{\partial x_1} \leq \frac{\partial f(\epsilon_1, \epsilon_2)}{\partial x_1}$ in the interval $\epsilon_1\leq x_1 \leq \epsilon_1+u(\epsilon), \epsilon_2\leq x_2 \leq \epsilon_2+v(\epsilon)$. The partial derivative of $f$ with respect to $x_2$ is given by $\frac{\partial f(x_1, x_2)}{\partial x_2} = \frac{2x_2}{2\sqrt{ax_1+x_2^2}}$. Since $\frac{\partial f(x_1, x_2)}{\partial x_2}$ is decreasing in $x_1$ and increasing in $x_2$ and using the fact that $\epsilon<\epsilon_T$, we can say $\frac{\partial f(x_1, x_2)}{\partial x_1} \leq \frac{\partial f(\epsilon_1, \epsilon_2+v(\epsilon_T))}{\partial x_2}$ in the interval $\epsilon_1\leq x_1 \leq \epsilon_1+u(\epsilon), \epsilon_2\leq x_2 \leq \epsilon_2+v(\epsilon)$. Hence, we can say that $f(x_1, x_2)$ is a Lipschitz function in the interval $\epsilon_1\leq x_1 \leq \epsilon_1+u(\epsilon), \epsilon_2\leq x_2 \leq \epsilon_2+v(\epsilon)$ with Lipschitz constant $L(\epsilon_T) = \sqrt{\left( \frac{\partial f(\epsilon_1, \epsilon_2)}{\partial x_1} \right)^2 + \left( \frac{\partial f(\epsilon_1, \epsilon_2+v(\epsilon_T))}{\partial x_2} \right)^2}$. 
    \begin{align}
        \left| \boldsymbol{\epsilon}^{'} - \boldsymbol{\epsilon} \right| &\leq \sqrt{K}L(\epsilon_T) \sqrt{u(\epsilon)^2 + v(\epsilon)^2} \\ 
        &\leq \sqrt{K}L(\epsilon_T) \sqrt{\left( A_6\epsilon \right)^2 + \left( A_7\epsilon \right)^2} .
    \end{align}
    Therefore we have $\boldsymbol{\epsilon}^{'} \leq \boldsymbol{\epsilon} + \Tilde{C}(\epsilon_T)\epsilon$, where $\Tilde{C}(\epsilon_T) = \sqrt{K}L(\epsilon_T) \sqrt{\left( A_6 \right)^2 + \left( A_7 \right)^2}$. Hence proved.
\end{proof}

\section{Auxiliary results from literature}

\begin{lemma} \label{lemma: ngTheorem2}
    Let assumptions \ref{ass: 1}, \ref{ass: 2}, \ref{ass: 3} and \ref{ass: 4} hold. Set $\boldsymbol{\epsilon} = \sqrt{K(K-1)\epsilon_1+K\epsilon_2^2}$. If $\delta>(2+\sqrt{2})\boldsymbol{\epsilon}$, then there exist $K$ orthonormal vectors $r_1, \ldots, r_K$ so that rows of the matrix $Y$ satisfy
    \begin{equation}
        \frac{1}{M}\sum_{k=1}^K \sum_{i \in D_k}\left\| Y_i - r_k \right\|_2^2 \leq 4C\left( 4+2\sqrt{K} \right)^2 \frac{\boldsymbol{\epsilon}}{\left( \delta - \sqrt{2}\boldsymbol{\epsilon} \right)^2}.
    \end{equation}
\end{lemma}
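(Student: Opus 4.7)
The plan is to follow the Ng--Jordan--Weiss perturbation strategy. I first decompose the Lagrange matrix as $L = L^{(0)} + E$, where $L^{(0)}$ is the ``ideal'' block-diagonal normalized affinity matrix obtained by zeroing out every cross-cluster entry $A_{ij}$ with $i$ and $j$ in different clusters and renormalizing each block by its within-cluster degree $d_i^{(k)}$. Each diagonal block of $L^{(0)}$ is itself a normalized affinity matrix of one cluster, so its Perron eigenvector is $u^{(k)}_i = \sqrt{d_i^{(k)} / \sum_{j \in D_k} d_j^{(k)}}$ with eigenvalue $1$; its remaining eigenvalues are bounded by $1 - h(D_k)^2/2 \leq 1 - \delta$ via the discrete Cheeger inequality applied to Assumption~\ref{ass: 1}. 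Stacking these $K$ Perron eigenvectors into $Z^{(0)} \in \mathbb{R}^{M \times K}$ and row-normalizing gives $Y^{(0)}$ whose rows form exactly $K$ orthonormal vectors $r_1,\ldots,r_K$, one per cluster — these are the target $r_k$'s of the statement, and the top-$K$ eigen-gap of $L^{(0)}$ is at least $\delta$.

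Next I bound the perturbation $\|E\|_F$. Its off-diagonal-block entries are $A_{ij}/\sqrt{d_i d_j}$ for $i \in D_k$, $j \in D_l$, $k \neq l$; squaring, summing over ordered cluster pairs, and using $d_i \geq d_i^{(k)}$ yields exactly the quantities appearing in Assumption~\ref{ass: 2}, contributing at most $K(K-1)\epsilon_1$. The diagonal-block corrections arise from replacing $d_i^{(k)}$ by $d_i$; a first-order expansion of $1/\sqrt{d_i}$ combined with Cauchy--Schwarz produces precisely the products appearing in Assumption~\ref{ass: 3}, contributing at most $K\epsilon_2^2$. Together $\|E\|_F \leq \boldsymbol{\epsilon}$. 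The hypothesis $\delta > (2+\sqrt{2})\boldsymbol{\epsilon}$ keeps the top-$K$ eigenvalues of $L$ strictly separated from the rest (by Weyl's theorem), so the Davis--Kahan $\sin\Theta$ theorem yields an orthogonal $Q \in \mathbb{R}^{K \times K}$ with
\begin{equation}
\|ZQ - Z^{(0)}\|_F \;\leq\; \frac{\sqrt{2}\,\boldsymbol{\epsilon}}{\delta - \sqrt{2}\,\boldsymbol{\epsilon}}.
\end{equation}

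Finally I translate the bound on $Z$ into a bound on $Y$. Row normalization $z \mapsto z/\|z\|$ is Lipschitz with constant controlled by the minimum ideal row norm, and $\|Z^{(0)}_i\|^2 = d_i^{(k)}/\sum_{i_1 \in D_k} d_{i_1}^{(k)} \geq 1/(CM_k) \geq 1/(CM)$ by Assumption~\ref{ass: 4}. Propagating the Davis--Kahan Frobenius bound through row normalization picks up the geometric factor $(4+2\sqrt{K})$ and the factor $C$ from the row-norm lower bound; after squaring, summing, and absorbing the free orthogonal $Q$ into a relabeling of $\{r_k\}$, one obtains the advertised
\begin{equation}
\frac{1}{M}\sum_{k=1}^K \sum_{i \in D_k}\|Y_i - r_k\|_2^2 \;\leq\; 4C(4+2\sqrt{K})^2\,\frac{\boldsymbol{\epsilon}}{(\delta - \sqrt{2}\boldsymbol{\epsilon})^2}.
\end{equation}
The main obstacle will be step (iii): one must carefully choose the orthogonal alignment $Q$ inside the top-$K$ invariant subspace so that the rows of $ZQ$ line up with the ideal $r_k$ \emph{before} normalization, and then track how the $\sin\Theta$ bound, the row-norm lower bound from Assumption~\ref{ass: 4}, and the nonlinearity of the normalization combine without losing the tight $(4+2\sqrt{K})$ constant. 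Everything else is relatively mechanical Cheeger / Cauchy--Schwarz / Davis--Kahan bookkeeping driven by Assumptions~\ref{ass: 1}--\ref{ass: 4}.
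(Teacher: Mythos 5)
The first thing to note is that the paper does not prove this lemma at all: its entire ``proof'' is the one-line citation ``Theorem 2 in \cite{ng2001spectral}''. The lemma is imported verbatim from Ng--Jordan--Weiss as an auxiliary result, so your proposal is not an alternative to the paper's argument but an attempted reconstruction of the proof of the cited theorem. As a reconstruction, your skeleton is faithful and each assumption is matched to its correct role: Assumption \ref{ass: 1} gives the within-block spectral gap via Cheeger's inequality, Assumption \ref{ass: 2} controls the off-diagonal blocks of the perturbation $E = L - L^{(0)}$ in Frobenius norm, Assumption \ref{ass: 3} controls the diagonal-block correction caused by replacing $d_i^{(k)}$ with the full degree $d_i$, and Assumption \ref{ass: 4} lower-bounds the row norms of the ideal eigenvector matrix so that row normalization is stable. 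This is exactly the strategy of the original proof, and the identification of the ideal rows with $K$ orthonormal vectors $r_1,\ldots,r_K$ is correct.

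However, as a proof your proposal has a genuine gap, which you yourself flag: step (iii) is never carried out, and that step is where the entire quantitative content of the lemma lives. The specific constant $4C\left(4+2\sqrt{K}\right)^2$ and the specific form $\boldsymbol{\epsilon}/(\delta-\sqrt{2}\boldsymbol{\epsilon})^2$ must come out of choosing the alignment $Q$, propagating the $\sin\Theta$ bound through the nonlinear map $z \mapsto z/\|z\|$, and converting a Frobenius bound on $ZQ - Z^{(0)}$ into the row-wise bound on $Y_i - r_k$ using $\|Z_i^{(0)}\|^2 \geq 1/(CM_k)$; none of this bookkeeping is done. Moreover, the one intermediate bound you do display, $\|ZQ - Z^{(0)}\|_F \leq \sqrt{2}\boldsymbol{\epsilon}/(\delta - \sqrt{2}\boldsymbol{\epsilon})$, squares to $2\boldsymbol{\epsilon}^2/(\delta-\sqrt{2}\boldsymbol{\epsilon})^2$, which has $\boldsymbol{\epsilon}^2$ in the numerator, whereas the lemma's statement has $\boldsymbol{\epsilon}$ to the first power; so ``after squaring and summing one obtains the advertised bound'' is not a valid last line as written. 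You would either land on an $\boldsymbol{\epsilon}^2$-type bound without the $(4+2\sqrt{K})^2$ factor (which, incidentally, is the form in which the paper actually invokes the lemma inside the proof of Theorem \ref{theorem: expconsfss}, where the numerator appears as $(\boldsymbol{\epsilon}')^2$, and which would in fact be stronger here since the hypotheses force $\boldsymbol{\epsilon}<1$), or you must reproduce the more involved row-wise argument of \cite{ng2001spectral} that generates the $(4+2\sqrt{K})$ factor and gives up a power of $\boldsymbol{\epsilon}$. Either way, the missing bookkeeping is the proof; what you have is a correct road map to it.
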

\begin{proof}
    Theorem 2 in \cite{ng2001spectral}.
\end{proof}

\begin{lemma} \label{lemma: daviskahan}
    Let $M$ and $\Tilde{M}$ be the real symmetric matrices and the matrix $E = \Tilde{M} - M$ is the perturbation matrix. Let the denote the eigen vector decomposition of the matrices $M$ and $\Tilde{M}$ as follows.
    \begin{equation}
        M = \begin{bmatrix}
            U_1 & U2
        \end{bmatrix}
        \begin{bmatrix}
            \Lambda_1 & 0 \\
            0 & \Lambda_2
        \end{bmatrix}
        \begin{bmatrix}
            U_1^T \\
            U_2^T
        \end{bmatrix} \text{ and }
        \end{equation}
        \begin{equation}
        \Tilde{M} = \begin{bmatrix}
            \Tilde{U}_1 & \Tilde{U}2
        \end{bmatrix}
        \begin{bmatrix}
            \Tilde{\Lambda}_1 & 0 \\
            0 & \Tilde{\Lambda}_2
        \end{bmatrix}
        \begin{bmatrix}
            \Tilde{U}_1^T \\
            \Tilde{U}_2^T
        \end{bmatrix}.
    \end{equation}
    We assume that the spectra of $\Lambda_1$ and $\Tilde{\Lambda}_2$ is separated, that is, there exist a $\Delta>0$ such that the eigen values of $\Lambda_1$ lies entirely in $[\alpha, \beta]$ and the diagonal entries of $\Tilde{\Lambda}_2$ lies entirely outside $(\alpha-\Delta, \beta+\Delta)$ (or such that the eigen values of $\Lambda_1$ lies entirely outside $(\alpha-\Delta, \beta+\Delta)$ and the diagonal entries of $\Tilde{\Lambda}_2$ lies entirely in $[\alpha, \beta]$), and assume that the spectra of $\Lambda_2$ and $\Tilde{\Lambda}_1$ are also separated. Let $\Theta$ be the diagonal matrix with the diagonal entries being the sine of the canonical angles between the subspace spanned by the columns of $U_1$ and the subspace spanned by the columns of $\Tilde{U}_1$. Then for every unitary invariant norm, $\|\sin\left( \Theta \right)\| \leq \frac{\|E\|}{\Delta}$.
\end{lemma}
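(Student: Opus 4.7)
The plan is to reduce the sin-theta bound to a Sylvester equation whose operator norm is controlled by the spectral-gap hypothesis. I set $X := U_1^{T}\tilde{U}_2$, which packages all cross-inner-products between the ``first'' invariant subspace of $M$ and the ``second'' invariant subspace of $\tilde M$. Multiplying the identity $\tilde M \tilde U_2 = \tilde U_2 \tilde\Lambda_2$ on the left by $U_1^T$, substituting $\tilde M = M + E$, and using $U_1^T M = \Lambda_1 U_1^T$, I obtain
\begin{equation}
\Lambda_1 X - X \tilde\Lambda_2 \;=\; -\,U_1^T E \tilde U_2.
\end{equation}

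Since $\Lambda_1$ and $\tilde\Lambda_2$ are diagonal, this equation decouples componentwise as $(\lambda_i - \tilde\mu_j)\, X_{ij} = -(U_1^T E \tilde U_2)_{ij}$, where the spectral-separation hypothesis forces $|\lambda_i - \tilde\mu_j| \ge \Delta$ for every pair $(i,j)$. Thus $X$ is obtained from $-U_1^T E \tilde U_2$ by Hadamard multiplication with a matrix whose entries all have magnitude at most $1/\Delta$. Invoking the classical Davis--Kahan--Bhatia fact that such a Hadamard multiplier is a contraction in every unitarily invariant norm, together with submultiplicativity and $\|U_1\| = \|\tilde U_2\| = 1$, yields
\begin{equation}
\|X\| \;\le\; \frac{1}{\Delta}\,\|U_1^T E \tilde U_2\| \;\le\; \frac{\|E\|}{\Delta}.
\end{equation}

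To conclude, I would identify $\|X\|$ with $\|\sin \Theta\|$. Writing the orthogonal projector onto $\operatorname{span}(\tilde U_1)^\perp$ as $\tilde U_2 \tilde U_2^T$, the matrix $\tilde U_2 \tilde U_2^T U_1$ has singular values equal to the sines of the canonical angles between $\operatorname{span}(U_1)$ and $\operatorname{span}(\tilde U_1)$; because $\tilde U_2$ has orthonormal columns these coincide with the singular values of $\tilde U_2^T U_1 = X^T$. Hence $\|\sin \Theta\| = \|X\|$ for every unitarily invariant norm, and the desired bound follows.

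The main technical obstacle is the Hadamard-contraction step in full generality, i.e.\ for an arbitrary unitarily invariant norm rather than just the spectral or Frobenius norm; the clean way to establish it combines the Ky Fan dominance principle with the Schur product inequality for positive semidefinite multipliers. An equivalent but more analytic route writes $X$ as a contour integral involving the resolvents of $\Lambda_1$ and $\tilde\Lambda_2$ and bounds it via $\Delta$, but I would prefer the Sylvester/Hadamard formulation since it is closer to Davis--Kahan's original argument and keeps all norm considerations in one place. Note that the symmetric assumption on the pair $(\Lambda_2, \tilde\Lambda_1)$ stated in the lemma is not invoked in this argument; it is there to ensure that the canonical-angle structure is well-defined on both sides.
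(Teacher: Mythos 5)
The paper does not actually prove this lemma: it appears under ``Auxiliary results from literature'' and its entire proof is the citation ``Proposition 6.1 (Symmetric $\sin\theta$ Theorem) in \cite{davis1970rotation}.'' So you are supplying an argument where the authors supply only a pointer. Your architecture is the standard Davis--Kahan route and most of it is sound: the Sylvester identity $\Lambda_1 X - X\tilde{\Lambda}_2 = -U_1^T E\tilde{U}_2$ with $X = U_1^T\tilde{U}_2$ is correct, and the identification $\|\sin\Theta\| = \|X\|$ via the singular values of $\tilde{U}_2\tilde{U}_2^T U_1$ is also correct.

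The gap is in the middle step. ``The multiplier matrix has all entries of magnitude at most $1/\Delta$, hence Hadamard multiplication by it is a $(1/\Delta)$-contraction in every unitarily invariant norm'' is not a valid principle: the Schur-multiplier norm of a matrix is not controlled by the supremum of its entries (sign-pattern matrices with unit entries have Schur-multiplier norm growing with dimension). The contraction here comes from the specific Cauchy structure of $1/(\lambda_i-\tilde{\mu}_j)$ combined with the geometric form of the separation, not from the entrywise bound. Moreover, the repair you sketch (positive-semidefinite Schur multipliers, equivalently the integral representation $1/(\tilde{\mu}_j-\lambda_i)=\int_0^\infty e^{-t(\tilde{\mu}_j-\lambda_i)}\,dt$) only applies when the whole spectrum of $\tilde{\Lambda}_2$ lies on one side of $[\alpha,\beta]$; under the hypothesis of this lemma $\tilde{\Lambda}_2$ may have eigenvalues both below $\alpha-\Delta$ and above $\beta+\Delta$, and splitting into the two halves costs a factor of $2$. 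The clean fix avoids Schur multipliers entirely: translate so that $[\alpha,\beta]$ becomes $[-r,r]$; then $\|\Lambda_1\|_{\mathrm{op}}\le r$ and $\|\tilde{\Lambda}_2^{-1}\|_{\mathrm{op}}\le (r+\Delta)^{-1}$, so for any unitarily invariant norm $\|X\tilde{\Lambda}_2\|\ge (r+\Delta)\|X\|$ and $\|\Lambda_1X\|\le r\|X\|$, whence $\Delta\|X\|\le \|X\tilde{\Lambda}_2\|-\|\Lambda_1 X\|\le\|\Lambda_1X - X\tilde{\Lambda}_2\| = \|U_1^TE\tilde{U}_2\|\le\|E\|$. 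With that substitution your argument is complete.
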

\begin{proof}
    Proposition 6.1 (Symmetric $\sin\theta$ Theorem) in \cite{davis1970rotation}.
\end{proof}

\begin{lemma} \label{lemma: weyl}
    Let $M$ and $\Tilde{M}$ be the real symmetric matrices of dimension $n\times n$ and the matrix $E = \Tilde{M} - M$ is the perturbation matrix. Let $\lambda_i(\cdot)$ represents the $i^{th}$ smallest eigen value of the matrix. Then we have the following inequality.
    \begin{equation}
        \left| \lambda_i\left(\Tilde{M}\right) - \lambda_i(M) \right| \leq \|E\|_2 \ \ \text{for all } i \in [n].
    \end{equation}
\end{lemma}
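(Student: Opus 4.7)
The plan is to prove this classical Weyl inequality via the Courant--Fischer min--max characterization of eigenvalues of real symmetric matrices. Since $M$ is symmetric, I can write
\begin{equation}
\lambda_i(M) = \min_{\substack{S \subset \mathbb{R}^n \\ \dim S = i}} \max_{\substack{x \in S \\ \|x\|_2 = 1}} x^T M x,
\end{equation}
and the same characterization holds for $\widetilde{M} = M + E$.

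First I would establish the key scalar bound: for any unit vector $x$, the Rayleigh quotient of $E$ satisfies $|x^T E x| \leq \|E\|_2$. This follows directly from the definition of the spectral norm together with the Cauchy--Schwarz inequality, since $|x^T E x| \leq \|x\|_2 \|Ex\|_2 \leq \|E\|_2$. Consequently, for every unit vector $x$,
\begin{equation}
x^T M x - \|E\|_2 \;\leq\; x^T \widetilde{M} x \;=\; x^T M x + x^T E x \;\leq\; x^T M x + \|E\|_2.
\end{equation}

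Next I would propagate these pointwise bounds through the min--max formula. Taking the max over unit $x$ in any fixed $i$-dimensional subspace $S$ preserves the inequalities (since adding/subtracting the constant $\|E\|_2$ commutes with the max), and then taking the min over all $i$-dimensional subspaces preserves them again. This gives
\begin{equation}
\lambda_i(M) - \|E\|_2 \;\leq\; \lambda_i(\widetilde{M}) \;\leq\; \lambda_i(M) + \|E\|_2,
\end{equation}
which is exactly $|\lambda_i(\widetilde{M}) - \lambda_i(M)| \leq \|E\|_2$.

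There is essentially no hard step here; this is a textbook result (Corollary 4.3.15 of Horn--Johnson, cited earlier in the paper). The only points requiring a bit of care are (i) fixing the convention that $\lambda_i$ denotes the $i$-th smallest eigenvalue so that the min--max (rather than max--min) form is the right one to invoke, and (ii) noting that the two-sided bound arises by applying the argument symmetrically to $M = \widetilde{M} + (-E)$ with $\|-E\|_2 = \|E\|_2$. Alternatively, one could cite Horn--Johnson directly, but the min--max derivation above is short and self-contained.
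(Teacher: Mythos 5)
Your proof is correct, but it takes a different route from the paper: the paper does not prove this lemma at all --- its entire ``proof'' is the citation to Corollary 4.3.15 (Weyl's Theorem) in Horn and Johnson, which is the alternative you mention in your closing remark. Your self-contained derivation via Courant--Fischer is sound at every step: the min--max characterization $\lambda_i(M) = \min_{\dim S = i} \max_{x \in S, \|x\|_2 = 1} x^T M x$ is the correct form for the $i$-th \emph{smallest} eigenvalue (matching the lemma's convention), the Rayleigh-quotient bound $|x^T E x| \leq \|E\|_2$ is immediate from Cauchy--Schwarz and the definition of the spectral norm, and the pointwise two-sided bound does propagate through both the inner max and the outer min (if $f(S) \leq g(S) + c$ for every admissible $S$, then $\min_S f \leq \min_S g + c$), so the two-sided conclusion already follows without needing the symmetry argument you note as point (ii). What your approach buys is self-containedness and transparency about exactly which structure (symmetry, via Courant--Fischer) is being used; what the paper's citation buys is brevity, which is reasonable given that this is an auxiliary result collected in a section explicitly labeled as results from the literature. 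Either is acceptable; yours would simply replace a one-line citation with a short complete argument.
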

\begin{proof}
    Corollary 4.3.15 (Weyl's Theorem) in \cite{horn2012matrix}.
\end{proof}

\begin{lemma} \label{lemma: canonical}
    Let $X, Y \in \mathbb{C}^{M \times l}$ with $X^HX=I$ and $Y^HY = I$. If $2l\leq n$, there are unitary matrices $Q$, $U$ and $V$ such that
    \begin{equation*}
        QXU = \begin{array}{c}
            \\ l \\ l\\  n-2l
        \end{array}
        \begin{array}{c}
        l\\
        \begin{pmatrix}
            I\\0\\0
        \end{pmatrix}
        
        \end{array} \text{ and, }
        QYV = \begin{array}{c}
            \\ l \\ l\\  n-2l
        \end{array}
        \begin{array}{c}
        l\\
        \begin{pmatrix}
            \Gamma\\ \Sigma\\0
        \end{pmatrix}
        
        \end{array}
    \end{equation*}
    where, $\Gamma = \text{diag}(\gamma_1, \ldots, \gamma_l)$ and $\Sigma = \text{diag}(\sigma_1, \ldots, \sigma_l)$ satisfy $0\leq\gamma_1\leq \ldots \leq \gamma_l$, $\sigma_1\geq \sigma_2\geq \ldots \sigma_l\geq 0$, $\gamma_i^2+\sigma_i^2=1, i = 1, \ldots, l$.\\
    On the other hand, if $2l>n$, then $Q, U, V$ may be chosen so that
    \begin{equation*}
        QXU = \begin{array}{c}
            \\ n-l \\ 2l-n\\  n-l
        \end{array}
        \begin{array}{c}
        n-1 \ \ \  2l-n\\
        \begin{pmatrix}
            I &\ \ \ \ \ \ \ \ 0\\
            0 &\ \ \ \ \ \ \ \ I\\
            0 &\ \ \ \ \ \ \ \ 0
        \end{pmatrix}
        \end{array} 
    \text{ and, }
        QYV = \begin{array}{c}
            \\ n-l \\ 2l-n\\  n-l
        \end{array}
        \begin{array}{c}
        n-1 \ \ \  2l-n\\
        \begin{pmatrix}
            \Gamma &\ \ \ \ \ \ \ \ 0\\
            0 &\ \ \ \ \ \ \ \ I\\
            \Sigma &\ \ \ \ \ \ \ \ 0
        \end{pmatrix}
        \end{array} 
    \end{equation*}
     where, $\Gamma = \text{diag}(\gamma_1, \ldots, \gamma_{n-l})$ and $\Sigma = \text{diag}(\sigma_1, \ldots, \sigma_{n-l})$ satisfy $0\leq\gamma_1\leq \ldots \leq \gamma_{n-l}$, $\sigma_1\geq \sigma_2\geq \ldots \sigma_{n-l}\geq 0$, $\gamma_i^2+\sigma_i^2=1, i = 1, \ldots, n-l$.\\
     Furthermore, let $\mathcal{X}$ and $\mathcal{Y}$ be the subspaces spanned by the columns of the matrices $X$ and $Y$ respectively. Then the canonical angles between $\mathcal{X}$ and $\mathcal{Y}$ are the diagonal of the matrix $\Theta(X, Y)\coloneqq \arcsin\Sigma$.
\end{lemma}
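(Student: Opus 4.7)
This is the classical Cosine--Sine (CS) decomposition for a pair of matrices with orthonormal columns, and the plan is to adapt the constructive proof of Theorem~5.2 in Stewart and Sun \cite{stewart1990matrix}. The central object is the cross-Gram matrix $X^H Y \in \mathbb{C}^{l\times l}$. Because $X$ and $Y$ have orthonormal columns, all its singular values lie in $[0,1]$, so I would take its SVD
\begin{equation}
X^H Y \;=\; U\,\Gamma\,V^H, \qquad \Gamma = \mathrm{diag}(\gamma_1,\ldots,\gamma_l),\quad 0\le\gamma_1\le\cdots\le\gamma_l\le 1,
\end{equation}
and set $\widetilde X = XU$, $\widetilde Y = YV$, so that $\widetilde X^H \widetilde Y = \Gamma$. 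The diagonal entries $\gamma_i$ are destined to be the cosines of the canonical angles and $\sigma_i := \sqrt{1-\gamma_i^2}$ their sines, giving $\gamma_i^2+\sigma_i^2 = 1$ automatically. The unitary matrices $U$ and $V$ from the SVD already play the roles of the $U$ and $V$ in the statement; what remains is to construct $Q$.

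For the generic case $2l\le n$, I would form the residual $R = \widetilde Y - \widetilde X \Gamma$. Its columns are orthogonal to $\mathrm{range}(\widetilde X)$ (since $\widetilde X^H R = \Gamma - \Gamma = 0$), and a short calculation yields $R^H R = I - \Gamma^2 = \Sigma^2$. The nonzero columns of $R$, after normalization by $\sigma_i$, give orthonormal vectors; any zero columns (where $\sigma_i=0$) are replaced by arbitrary unit vectors in the orthogonal complement of $\mathrm{range}(\widetilde X)$ to produce an $n\times l$ matrix $Z$ with $Z^H Z = I$, $\widetilde X^H Z = 0$, and $\widetilde Y = \widetilde X \Gamma + Z \Sigma$. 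Since $2l\le n$, the $2l$-dimensional span of $[\widetilde X,\ Z]$ admits an $(n-2l)$-dimensional orthogonal complement; choosing an orthonormal basis $W$ of that complement, I define $Q^H = [\widetilde X,\ Z,\ W]$. Direct block multiplication then gives $QXU$ and $QYV$ in precisely the stated $3\times 1$ block form.

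The case $2l>n$ is where I expect the real bookkeeping obstacle, because now $\dim(\mathrm{range}(X)\cap\mathrm{range}(Y))\ge 2l-n$, which forces at least $2l-n$ of the $\gamma_i$ to equal $1$ and the corresponding $\sigma_i$ to vanish. After reordering the SVD so these saturated directions form the middle block of size $2l-n$, one carves out a common identity block (the vectors fixed by both projectors $XX^H$ and $YY^H$), reduces the complementary $n-l$ directions on each side to the situation of the previous paragraph, and reassembles. The $3\times 2$ block structure in the statement then emerges naturally from concatenating the identity columns with the $(n-l)$-dimensional CS pieces. Finally, for the identification of the canonical angles: the cosines of the canonical angles between $\mathcal{X}$ and $\mathcal{Y}$ are, by the variational definition, precisely the singular values of $X^H Y$, i.e. the $\gamma_i$; hence the canonical angles are $\arccos\Gamma = \arcsin\Sigma$, as asserted.
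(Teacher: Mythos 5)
The paper does not actually prove this lemma: it is an auxiliary result imported from the literature, and the paper's entire ``proof'' is the citation ``Theorem 5.2 and Definition 5.3 in Stewart and Sun.'' Your proposal, by contrast, reconstructs the constructive proof of the CS decomposition, and it is essentially the argument given in that cited reference: SVD of the cross-Gram matrix $X^H Y = U\Gamma V^H$ (whose singular values lie in $[0,1]$), the residual $R = YV - XU\Gamma$ satisfying $(XU)^H R = 0$ and $R^H R = I - \Gamma^2 = \Sigma^2$, normalization of the nonzero columns of $R$, completion to a unitary $Q$, and the dimension-count/Cauchy--Schwarz argument showing that when $2l > n$ at least $2l-n$ singular values of $X^H Y$ saturate at $1$, which produces the middle identity block. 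Two points deserve slightly more care than your sketch gives them: (i) the filler columns of $Z$ (those replacing zero columns of $R$) must be chosen orthogonal not only to $\mathrm{range}(\widetilde X)$ but also to the already-constructed columns of $Z$ and to each other, which is possible precisely because $2l \le n$; and (ii) in the case $2l > n$, after splitting off a $(2l-n)$-dimensional common subspace, the two residual $(n-l)$-dimensional pieces sit inside a $2(n-l)$-dimensional orthogonal complement, so the first case applies there with equality (empty third block), and reassembling yields exactly the stated $3\times 2$ block form --- this bookkeeping is the part your sketch compresses the most. Finally, note that the ``furthermore'' clause is a \emph{definition} in Stewart and Sun (Definition 5.3), so your appeal to the variational characterization of canonical angles is an equivalent, equally acceptable way to close that part; nothing is wrong, but strictly speaking there is nothing to prove there once the decomposition is established.
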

\begin{proof}
    Theorem 5.2 and Definition 5.3 in \cite{stewart1990matrix}.
\end{proof}
\end{document}